\definecolor{linkcol}{rgb}{0,0,0.4} 
\definecolor{citecol}{rgb}{0.5,0,0} 
\let\headruleORIG\headrule
\renewcommand{\headrule}{\color{black} \headruleORIG}
\def\cleardoublepage{\clearpage\if@twoside \ifodd\c@page\else%
  \hbox{}%
  \thispagestyle{empty}
  \newpage%
  \if@twocolumn\hbox{}\newpage\fi\fi\fi}
\let\minitocORIG\minitoc
\renewcommand{\minitoc}{\minitocORIG \vspace{1.5em}}
\renewcommand{\epsilon}{\varepsilon}
\newtheorem{thm}{Theorem}[section]
\newcommand{\dif}[2]{\frac{\text{d}#1}{\text{d}#2}}
\newcommand{\diff}[2]{\frac{\partial #1}{\partial #2}}
\newcommand{\difff}[2]{\frac{\partial^2 #1}{\partial #2^2}}
\newcommand{\difffTwo}[3]{\frac{\partial^2 #1}{\partial #2 \partial #3}}
\newcommand{\avar}[1]{\left\langle #1 \right\rangle}
\newcommand{\klamm}[1]{\left( #1 \right)}
\newcommand{\norm}[1]{\left\| #1 \right\|}
\newcommand{\istobe}{{\overset{!}=}}
\newcommand{\ofR}{\klamm{\bf r}}
\newcommand{\ofRD}{\klamm{\bf r'}}
\newcommand{\NewStuff}[1]{#1} 
\newcommand{\defi}{{\mathrel{\mathop:}=}}
\begin{document}

\pagenumbering{alph}
\begin{titlepage}
\begin{center}
\vspace*{0.3cm}
\vspace*{0.5cm}
\noindent \Huge \textbf{Statistical Mechanics of a Thin Film on a Solid Substrate} \\
\vspace*{0.3cm}
\noindent {\Large \textbf{Diploma Thesis, revised version}} \\
\vspace*{0.8cm}
\noindent \LARGE Andreas \textsc{Nold} \\
\vspace*{0.8cm}
\noindent \large submitted at the Technische Universit\"at Darmstadt \\
\noindent \large written at Imperial College London\\
\vspace*{0.8cm}
\noindent \large revised version: \today\\
\noindent \large defended on May 31, 2010 \\
\vspace*{1.2cm}
\end{center}

\begin{center}
\begin{tabular}{llcl}

      \textit{External Supervisor :}	& Prof. Serafim \textsc{Kalliadsis}		& - & Imperial College London\\
	\textit{Internal Supervisor :}	& Prof. Martin \textsc{Oberlack}		& - & Technische Universit\"at Darmstadt\\
\end{tabular}
\end{center}
\vspace{2cm}
{\bf Abstract:}
We study the behavior of very thin liquid films wetting homogeneous planar and spherical substrates. In order to describe 
a simple fluid at very small scales, we employ a classical density functional theory (DFT). Here, we model a fluid with a local density approximation
(LDA) for its hard-sphere contribution and assume that the intermolecular attractive forces are long-range.
In particular, we first introduce the basic concept of DFT, and then present computations for fluid films on planar and spherically symmetric walls.
We present equilibrium density profiles and adsorption isotherms. We also compare our results to predictions from a sharp-interface approximation (SIA) and 
suggest a piecewise function approximation (PFA), which assumes that the density profile at the wall-liquid and the liquid-vapor interfaces varies smoothly. 

\end{titlepage}
\sloppy

\titlepage

\minitoc
\dominitoc

\setcounter{tocdepth}{1}
\tableofcontents
\mainmatter

\chapter{Nomenclature}

\begin{longtable}{p{3cm}m{12cm}}
$\alpha$ & interaction energy per unit density in a uniform fluid (Eq. (\ref{eq:Uniform_def_alpha}))\\
$\beta$ & $1/(k_B T)$ where $k_B$ is the Boltzmann constant and $T$ is the temperature\\
$\delta$ & width of the wall-liquid interface\\
$\bar \delta$ & Tolman length\\
$\delta_{ij}$ & Kroenecker-Delta\\
$\delta\klamm{x}$ & Dirac delta function\\
$\varepsilon$ & depth of the Lennard Jones potential of the fluid-fluid interaction\\
$\varepsilon_w$ & depth of the Lennard Jones potential of the wall-fluid interaction \\
$\gamma$ & surface tension (Eq. (\ref{eq:StatMech_DefSurfaceTension}))\\
$\gamma_{lg,\infty}$ & surface tension of a planar liquid-gas interface \\ 
$\gamma_{lg,R}$ & surface tension of a liquid-gas interface of a droplet with radius $R$ \\
$\kappa$ & width of the liquid-gas interface\\
$\ell$ & thickness of a film on a solid substrate\\
$\ell^\ast$ & maximal thickness of a film on a spherical substrate\\
$\mu$ & chemical potential of a system (Eq. (\ref{eq:StatMech_DefChemPot}))\\
$\mu_{sat}(T)$ & chemical potential at which bulk liquid and bulk gas phase are equally stable\\
$\mu_{HS}\klamm{n}$ & hard-sphere chemical potential, defined in dimensionless  form in Eq. (\ref{eq:Mu_HS_Def})\\
$\Delta \mu$ & deviation of the chemical potential from its saturation value $\mu-\mu_{sat}$\\
$\phi(r)$ & Lennard-Jones-Potential for the fluid-fluid interaction (Eq.(\ref{eq:LennardJonesPotential}))\\
$\phi_{attr}(r)$ & attractive interaction potential between two particles at distance $r$ (Eq. (\ref{eq:PerturbationPotential_dimless}))\\
$\phi_{attr,W}(r)$ & attractive interaction potential by Weeks (Eq. (\ref{eq:WeeksPerturbation}))\\
$\phi_w(r)$ & Lennard-Jones potential for the wall-fluid interaction (Eq.(\ref{eq:Wall_LennardJonesPotential}))\\
$\xi_{I,V}$ & typical deviations from the Gibbs dividing surface (Eq. (\ref{eq:WettingSphere_IntermediateResult})\\ 
$\varrho$ & probability density distribution\\
$\varrho_0$ & equilibrium  probability density distribution\\
$\sigma$ & soft-core parameter of the LJ-potential of the fluid-fluid interaction (Eq.(\ref{eq:LennardJonesPotential}))\\
$\sigma_w$ & soft-core parameter of the LJ-potential of the wall-fluid interaction\\
\caption{Lower-Case Greek}
\end{longtable}

\begin{longtable}{p{3cm}m{12cm}}
$\Omega_0$ & grand canonical potential (Eq.(\ref{eq:DefGrandCanoncialPotential}))\\
$\Omega_B$ & binding potential \\
$\Omega_{ex}$ & excess grand potential (Eq.(\ref{eq:Def_ExcessGrandPotential}))\\
$\Omega^{SIA}$ & sharp-interface approximation of the grand potential \\
$\Omega^{PFA}$ & piecewise function approximation of the grand potential \\
$\Phi_{\text{Pla}}(z)$ & attractive interaction potential between a point in the fluid and a plane at distance $z$\\
$\Phi_{sph}(r,r')$ & attractive interaction potential between a point in the fluid at distance $r$ from the origin and the surface of a sphere with radius $r'$\\
\caption{Upper-Case Greek}
\end{longtable}

\begin{longtable}{p{3cm}m{12cm}}
$d$ & hard-sphere diameter\\
$h$ & Planck's constant \\
$h(r)$ & pair correlation function (Eq. (\ref{eq:PairCorrelationFunction}))\\
$k_B$ & Boltzmann constant $1.3806504 × 10^{-23} J/K$\\
$f_{HS}(n)$ & local hard sphere free energy\\
${\bf g}$ & vector function for the discretized minimization problem (Eq. (\ref{eq:g_Discretized}))\\
$g\klamm{r}$ & pair distribution function\\
$g_{HS}\klamm{r}$ & pair distribution function of a hard-sphere fluid (Eq. (\ref{eq:StatMech_PairDistributionFunction}))\\
$n\ofR$ & particle density (Eq.(\ref{eq:DefEqdensity})) \\
$n_l$ & density of the liquid bulk phase\\
$n_g$ & density of the gas bulk phase\\
$\Delta n$ & difference of the liquid bulk and gas bulk densities $n_l -n_g$\\
$n^{(2)}\klamm{{\bf r}_2,{\bf r}_2}$ & two-particle distribution \\
$n_{HS}^{(2)}\klamm{{\bf r}_1,{\bf r}_2}$ & two-particle distribution of a hard-sphere fluid (Eq.(\ref{eq:HStwoparticleDistr}))\\
$p$ & pressure of a system \\
${\bf p}$ & momentum vector\\
${\bf r}$ & position vector\\
$d{\bf r}$ & infinitesimal volume element\\
$y$ & packing fraction $=\frac{\pi}{6}nd^3$\\
\caption{Lower-Case Roman}
\end{longtable}

\begin{longtable}{p{3cm}m{12cm}}
$A$ & area of an interface\\
$C\klamm{{\bf r}_1,{\bf r}_2}$ & direct correlation function (Eq. (\ref{eq:StatMech_OrnsteinZernicleEquation}))\\
$E_k\klamm{{\bf r}^N}$ & kinetic energy of a system\\
$F$ & Helmholtz Free Energy (Eq. (\ref{eq:DefHelmholtzFreeEnergy}))\\
$F_{id}$ & ideal gas contribution to the Helmholtz free energy\\
$H_N\klamm{{\bf r}^N,{\bf p}^N}$ & Hamiltonian of a system (Eq.(\ref{eq:StatisticalMech_DefHamiltonian}))\\
${\bf J}$ & Jacobi matrix of ${\bf g}$ (Eq.(\ref{eq:DenProf_Jacobi_Def}))\\
$N$ & number of particles of a system\\
$R$ & radius of the spherical wall\\
$S$ & entropy of a system\\
$T$ & temperature of a system\\
$T_c$ & critical temperature  \\
$T_{cw}$ & complete wetting temperature, separates complete wetting ($T > T_{cw}$) from prewetting ($T < T_{cw}$)\\
$T_w$ & wetting temperature, separates partial wetting ($T< T_w$) from prewetting ($T > T_w$)\\
$U\klamm{{\bf r}^N}$ & particle interaction energy of a system (Eq. (\ref{eq:StatMech_DefInteractionEnergy}))\\
$U_{attr}\klamm{{\bf r}^N}$ & attractive particle interaction energy of a fluid (Eq. (\ref{eq:StatMech_DefUAttr}))\\
$U_{HS}\klamm{{\bf r}^N}$ & particle interaction energy of a hard-sphere fluid\\
$V$ & volume of a system\\
$V_f$ & volume of the film close to a solid substrate\\
$V_B$ & bulk volume\\
$V_{ext}\klamm{{\bf r}^N}$ & external potential energy of a system (Eq.(\ref{eq:StatMech_ExtPot}))\\
$V\ofR$ & external potential induced by a wall\\
$V_{\text{Pla}}(z)$ & external potential induced by a planar wall $W = \mathbb{R}^2\times \mathbb{R}^-$\\
$V_{sph,R}(r)$ & external potential induced by a spherical wall $W = \{ {\bf r} \in \mathbb{R}^3:  |{\bf r}| < R \}$\\
$V_{cav,R}(r)$ & external potential induced by a cavity $W = \{ {\bf r} \in \mathbb{R}^3:  |{\bf r}| > R \}$\\
$W$ & volume occupied by the solid substrate: $W \subset \mathbb{R}^3$ \\
$Z_C$ & canonical partition function\\
$Z_G$ & grand canonical partition function\\
\caption{Upper-Case Roman}
\end{longtable}

\chapter{Introduction}

The behavior of fluids (liquids or gases) at small scales, in
particular in the vicinity of solid substrates, is of paramount
significance in numerous technological applications and natural
phenomena. It is also of relevance to several fields, from
engineering to chemistry and biology. As a consequence, it has
received considerable attention, both experimentally and
theoretically, for several decades.

One of the most widely used methods for the study of the microscopic
structure of fluids is density-functional theory (DFT). It offers an
increasingly popular compromise between computationally costly
molecular dynamics simulations and phenomenological
approaches~\cite{Wu-DFT}. The basic idea of classical DFT is to
describe the microscopic properties of a fluid in terms of its
density distribution. The method can be derived consistently from
equilibrium statistical mechanics \NewStuff{of} fluids and is thus based on first
principles. It has been used successfully to study interfacial
phenomena, including wetting transitions on substrates.

In the present study we examine the equilibrium of a liquid film on
an attractive solid substrate, where we focus our attention on simple monatomic liquids. There also have been recent developments in the DFT-modeling of systems including chemical and hydrogen bonds and polymer systems~\cite{Wu-DFT}. However, here we focus our attention on the basic properties of the wetting behavior of a liquid film. For this, a nonlocal mean-field DFT
approach is adopted in which the grand potential as a thermodynamical potential is split into a repulsive
hard-sphere part and an attractive component.

We consider both planar and spherical substrates, thus
restricting our attention to 1D configurations. 2D problems of nanodrops and three-phase contact lines are adressed in \cite{RuckensteinNanorough,RuckensteinInclined,Antonio2010}. However, we consider a typical system
made of a planar/spherical wall in contact with a Lennard-Jones (LJ)
gas below the critical temperature. The wall exerts an attractive
force on the fluid molecules so that a thin liquid film can usually
form between the wall and the gas. The density of the fluid in the
presence of the wall is obtained by solving numerically an integral
equation resulting from the minimization of the grand potential.

\NewStuff{A comprehensive review of wetting phenomena on substrates is given} in Ref.~\cite{Dietrich}. In an earlier study, Napi\'orowski and Dietrich showed that the
so-called sharp-interface approximation (SIA), in which the liquid-gas interface is approximated by a step-function, simplifies the investigation of wetting phenomena on solid
substrates considerably, as with this approximation computations of the full
density profile are
avoided~\cite{DietrichNapiorkowski_BulkCorrelation}. This
approximation was then often used to investigate wetting transitions
on both planar~\cite{Dietrich} and curved
substrates~\cite{BykovZheng,DietrichWettingOnCurvedSubstrates} as
well as for the computation of contact angles~\cite{Dietrich2D}. 

However, the sharp interface approximation fails to give the correct liquid-gas surface tension. This leads to systematic errors in the prediction of the wetting behavior on curved substrates. Hence, we introduce a piecewise function approximation (PFA), for which the density is assumed to be everywhere constant except in the wall-liquid and the liquid-gas interface where it varies. In this work, the PFA as well as the SIA are introduced for general geometries, not being restricted to the planar or the spherical case. 

Beyond analytical approaches, rapid progress in computational power over the last few
years has allowed us to solve the DFT equations for the
full density profile in different wetting problems. Here, we give particular emphasis on the bifurcation
diagrams for \NewStuff{the excess density as a function of
chemical potential at a given temperature (adsorption isotherms)}. Such diagrams
are obtained from a pseudo arc-length continuation scheme. They are
typically multi-valued S-type curves often with a value of the
chemical potential above which no solutions exist and with three
branches of solutions from which the middle one is always unstable.

We first focus on a thin film in contact with a planar wall, which
is essential to understanding the substantially more involved
spherical case. The isotherms approach infinity as the deviation of the chemical
potential $\Delta \mu$ from saturation tends to zero from negative
values. We also examine in
detail the prewetting transition, a first-order phase transition occurring at a
specific value of the chemical potential where two equally stable
films, a thin one and a thick one, coexist.

We subsequently examine the case of a liquid film on a sphere. Analytically, applying the PFA allows for a simple way of examining the effects of the liquid-gas
surface tension in the wetting behavior of curved substrates. In a
somewhat related approach followed by Dietrich and Napi\'orowski for
the planar case \cite{DietrichNapiorkowski_BulkCorrelation}, the effects of the smooth liquid-gas interface are directly accounted for by the coefficients of an asymptotic expansion of the grand potential in inverse powers of the film thickness. Here, we introduce instead a number of auxiliary parameters, such as typical deviations  $\xi_{I,V}$ from the Gibbs dividing surface, which lead to an exact expression for the grand potential as a function of the film thickness and the radius of the substrate.
This allows separate asymptotic expansions in both the radius of the wall and the film thickness. 

These analytical results are compared with the numerical results obtained from the continuation procedure. For small film thickness, the bifurcation diagrams are similar to the planar case, while shifted towards values of higher chemical potential. We give analytical and numerical evidence that in the limit of zero curvature, this shift is directly related with the Laplace pressure. As a consequence, the bifurcation diagrams cross the $\Delta \mu = 0$ line such that, additionally to the prewetting transition at $\Delta \mu < 0$, a second first-order phase transition occurs. Hence, the film thickness does not go to infinity as saturation is approached but instead is limited to a maximal film thickness which exhibits a leading order $R^{1/3}$-dependence as a function of
the radius of the substrate~\cite{DietrichWettingOnCurvedSubstrates}. For $\Delta \mu > 0$, the  isotherms exhibit a second unstable branch
compared to the planar case one. This branch approaches the saturation line as
$(\Delta \mu)^{-1}$. These numerical results are found to be in excellent agreement with
the analytical predictions obtained by the PFA.

The thesis is organized as follows: In chapter~\ref{sec:StatMechDFT} we give a brief introduction to statistical thermodynamics, before presenting the basic mathematical theorems of classical DFT. In the following, we introduce several methods to model the free energy of the system in Sec.~\ref{sec:StatMech_ModelFreeEnergy}. One of the models, the perturbation approach, makes use of a hard-sphere fluid as a reference system, which we discuss in Sec.~\ref{sec:ModelsHardSphere}. In Sec.~\ref{sec:StatMech_OurModel}, we introduce the non-dimensionalization of the model used in this work. Phase diagrams for the homogeneous case are presented in Sec.~\ref{sec:UniformLiquid}, while non-homogeneous effects of the grand potential are linked with the surface energy and surface tension in Sec.~\ref{sec:SurfaceTensionExcessGP}.

In chapter \ref{sec:DensityProfiles} we treat the computation of single density profiles. In this context, we give details about numerical methods to solve the minimization problem (Sec.~\ref{sec:DenProf_Numerics}). In Sec.~\ref{sec:DenProf_PlanarWall}, we present analytical expressions for the case of a planar wall. We also give results for a pure liquid-gas interface and compare them with the SIA. Furthermore, the behavior of the density profile close to the wall is studied. In Sec.~\ref{sec:DenProf_Sphere}, we give analytical expressions for a spherical wall and compare density profiles with the planar case. 

In chapter \ref{sec:WettingBehavior} wetting on solid substrates is studied. For this, we introduce in Sec.~\ref{sec:Wetting_Analytical} formalisms for the SIA and the PFA which are not restricted to special geometries. In Sec.~\ref{sec:ContinuationMethod}, the pseudo-arc length continuation method is introduced. It is applied in Sec.~\ref{sec:Wetting_PlanarWall} to the case of a planar wall, where we compare the numerical results with analytical results from the SIA. In Sec.~\ref{sec:Wetting_Sphere}, wetting on a sphere is studied, where it is compared with the planar case and with the analytical prediction from the PFA.

\chapter{Statistical Thermodynamics and Density Functional Theory \label{sec:StatMechDFT}}
\section{Fundamentals of Statistical Thermodynamics \label{sec:StatThermodynamics}}

Statistical Mechanics deals with the description of systems with a large number of particles. Here, we want to describe a fluid with $N$ molecules, where $N$ is a very large number. However, we are not interested in the precise motion of each molecule, but instead want obtain relations between thermodynamic quantities such as pressure, temperature or density. Statistical Thermodynamics closes this gap between the microscopic behavior of a fluid and the macroscopic quantities. Gibbs described the link between both levels of description as follows\cite{Gibbs}:

\begin{quote}
\it
The laws of thermodynamics, as empirically determined,
express the approximate and probable behavior of systems of
a great number of particles, or, more precisely, they express
the laws of mechanics for such systems as they appear to
beings who have not the fineness of perception to enable
them to appreciate quantities of the order of magnitude of
those which relate to single particles, and who cannot repeat
their experiments often enough to obtain any but the most
probable results.
\end{quote}

It is our aim to describe the probability distribution of each microstate - characterized by one specific microscopic configuration of particles - as a function of macroscopic quantities. In other words: We want to know the probability of finding a macroscopic system with average energy $\avar{E}$ in a certain microstate at a certain point in time. 

In the sequel, we consider a canoncial ensemble, i.e. a collection of systems which is characterized by its number of particles $N$, its volume $V$ and its temperature $T$ \cite[p.20]{Hansen}. It is in contact with a heat reservoir of temperature $T$, with which it can exchange energy. However, it is closed, i.e. there is no exchange of particles between the system and the reservoir. 

J.W. Gibbs first introduced the idea of dealing with the specific microstates as identical copies of the same macroscopic state of a system~\cite{Schroedinger}. Each copy has the given temperature, volume and number of particles and is in thermal equilibrium with a heat bath. This means that each copy is exchanging energy with its environment. Now, assume that the number of copies $M$ is large, i.e. $M \to \infty$. Furthermore, we assume that the set of available microstates is discrete and that each microstate is equally probable. This is the fundamental postulate of statistical mechanics~\cite{Stowe}: 

\begin{quote}
An isolated system in equilibrium is equally likely to be in any of its accessible states, each of which is defined by a particular configuration of the system's elements.
\end{quote}

We say that the system is in equilibrium, if it attains its most probable distribution of microstates over the available energy levels. For a comprehensive proof of the method of the most probable distribution in the case of discrete energy levels $E_i$, see also Schr\"{o}dinger~\cite{Schroedinger}. The probability $p_i$ of being in a microstate at energy level $E_i$ is then equal to 
\begin{align}
p_i =& \frac{1}{Z} e^{- \beta E_i},\label{eq:StatisticalMech_PDiscrete}\\
\text{where}\qquad Z \defi& \sum_i e^{- \beta E_i}. \notag
\end{align}
$Z$ is the {\it partition function} and $\beta = 1/(k_B T)$ with Boltzmann constant $k_B$. The partition function will be used as a generator for all kinds of macroscopic properties. As an example, the average energy is given by
\begin{align}
\avar{E} &= \sum_i  E_i p_i
= \frac{1}{Z} \sum_i E_i e^{-\beta E_i}
= - \frac{1}{Z} \sum_i \dif{}{\beta} e^{-\beta E_i}
= - \frac{1}{Z} \dif{}{\beta} Z = \notag \\ 
&= - \dif{}{\beta} \ln Z. \label{eq:StatMech_EnergyCan}
\end{align}
The partition function is also directly connected to the entropy $S$ of the system. The statistical mechanical definition of entropy was formulated by Boltzmann for a microcanonical ensemble. A microcanonical ensemble is a closed isolated system, i.e. there is no exchange of energy or particles with its environment. In Boltzmann's formulation, the entropy is proportional to the logarithm of the number of possible microstates which a system can occupy. Hence, the entropy is a measure for the uncertainty inherent to a system: The less entropy a system has, the less microstates are available and consequently the more probable it is to find the system in one of the given microstates. Furthermore, we expect the entropy to be an extrinsic property, i.e. the entropy of two identical systems is the sum of the entropy of the two separate systems. However, the number of possible microstates of the two systems is squared compared to the entropy of the single systems. This property is accounted for by employing the logarithm. In the case of a canonical ensemble, this relation can be written as 
\begin{align*}
\avar{S} &= - k_B \sum_i p_i \ln p_i
\end{align*}
Inserting (\ref{eq:StatisticalMech_PDiscrete}) in the equation above leads to an expression of the entropy in terms of the partition function $Z$ and the average energy $\avar{E}$:
\begin{align}
\avar{S} &= - \frac{k_B}{Z} \sum e^{-\beta E_i} \klamm{ - \beta E_i - \ln Z }\notag\\
&= k_B\klamm{ \ln Z + \beta \avar{E} } \label{eq:StatMech_EntropieCan}
\end{align}
This relation leads to the introduction of the statistical mechanical definition of the Helmholtz free energy $F$:
\begin{align}
F &\defi - \beta^{-1}\ln Z,\label{eq:DefHelmholtzFreeEnergy}
\end{align}
which is in the literature often also denoted by $A$. It corresponds to the thermodynamic definition in an average sense:
\begin{align*}
F &= \avar{E} - T \avar{S}. 
\end{align*}
The equivalence of the thermodynamic and statistical mechanical definition can be shown easily by inserting (\ref{eq:StatMech_EnergyCan}) and (\ref{eq:StatMech_EntropieCan}) in the equation above. 

\paragraph{The Classical Limit}
Here, we assume that the difference between two energy levels is infinitely small ($\Delta E \to 0$). Hence, the set of microstates is continuous and each microstate is uniquely defined by the positions $\{{\bf r}_i:i=1\ldots N\}$ of its particles and their momentum $\{{\bf p}_i:i=1\ldots N\}$. The equilibrium probability function can be written as
\begin{align*}
\varrho_C\klamm{{\bf r}^N,{\bf p}^N} = \frac{1}{h^{3N}N!} \frac{1}{Z_C} e^{-\beta {H}_N\klamm{{\bf r}^N,{\bf p}^N}},
\end{align*}
where $Z_C$ is the canonical partition function in the classical limit:
\begin{align}
Z_C = \frac{1}{h^{3N}N!} \iint  e^{-\beta H\klamm{{\bf r}^N,{\bf p}^N}} d{\bf r}^N d{\bf p}^N. \label{eq:StatMech_Can_PartitionFunction}
\end{align}
${\bf r}^N$ and ${\bf p}^N$ are the arrays of position and momentum vectors for all particles, $\{{\bf r}_1,\ldots,{\bf r}_N\}$ and $\{{\bf p}_1,\ldots,{\bf p}_N\}$, respectively. It is assumed that the system consists of $N$ interchangeable particles. The division by $N!$ takes the number of permutations of the identical particles into account and assures a correct counting of the states. Plack's constant $h$ ensures that both the probability density function as well as the canonical partition function $Z_C$ are dimensionless. $H$ is the Hamiltonian function. It is the energy of the system as a function of the position and momentum of the particles. It is defined as 
\begin{align}
H_N\klamm{{\bf r}^N,{\bf p}^N} = E_k\klamm{{\bf p}^N} + U\klamm{{\bf r}^N} + V_{ext}\klamm{{\bf r}^N}, \label{eq:StatisticalMech_DefHamiltonian}
\end{align}
where $E_k$ is the kinetic energy of the system, $U$ is the interaction energy and $V_{ext}$ is the potential energy. Here, we say that the kinetic energy of the system is the sum of the kinetic energy of each particle, whereas the external energy does only depend on the position of the particles:
\begin{align}
E_k\klamm{{\bf p}^N} &\defi \sum_{i=1}^N \frac{|{\bf p_i}|^2}{2m}\notag\\
V_{ext}\klamm{{\bf r}^N} &\defi \sum_{i=1}^N V\klamm{{\bf r_i}}. \label{eq:StatMech_ExtPot}
\end{align}
$m$ is the mass of each particle, and $V_{ext}\klamm{\bf r}$ is an arbitrary external potential. For a more detailed introduction to the topic, see also Hansen and McDonald~\cite{Hansen}, or Stowe~\cite{Stowe}.

\paragraph{The Grand Canonical Ensemble - Legendre Transform}
\begin{figure}[ht]
\centering
\includegraphics{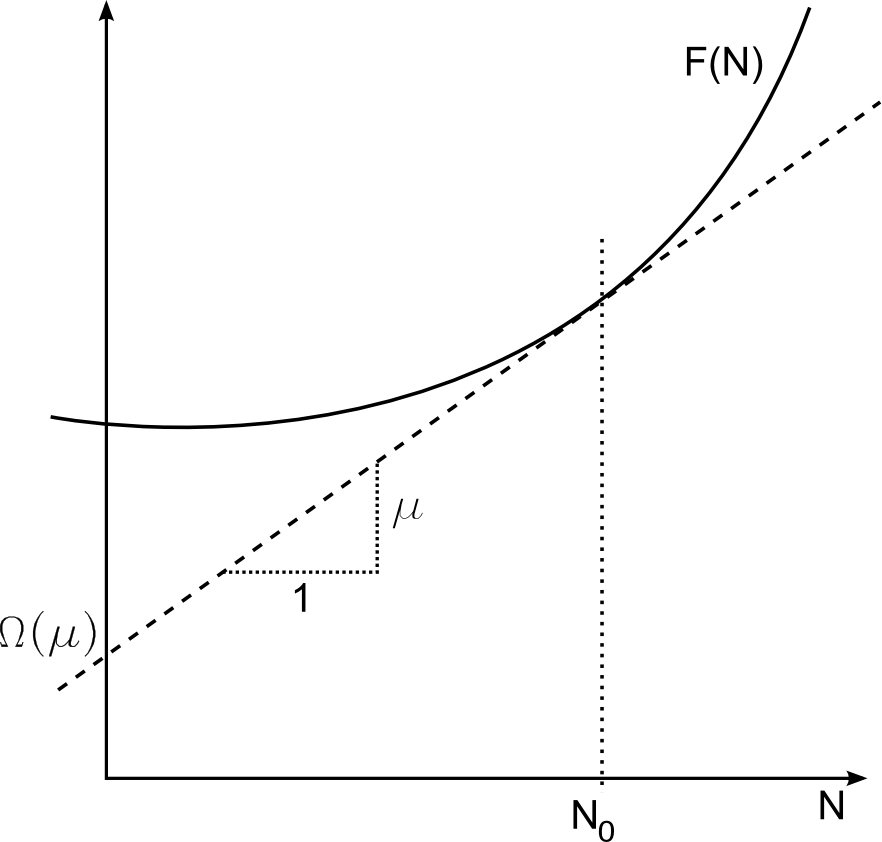}
\caption{Geometric interpretation of a Legendre transformation from the Helmholtz free energy $F$ as a function of the particle number $N$ to the grand potential function $\Omega$ as a function of the chemical potential $\mu$. At the intersection $N = N_0$, one gets $\mu = \dif{F}{N}$. $\Omega(\mu)$ is defined by the intersection of the tangent with the vertical axis.}
\label{fig:LegendreTransform}
\end{figure}
In the present work, we consider an open system which is in thermal equilibrium with a heat reservoir and for which the number of particles is not known. Instead, the chemical potential $\mu$ is known, defined as the derivative of the free energy of the system with respect to its number of particles:
\begin{align}
\mu \defi \klamm{\diff{F}{N}}_{T,V}. \label{eq:StatMech_DefChemPot}
\end{align}
This change of variables can be interpreted as follows: In the canonical ensemble, the potential of the system is given by the Helmholtz free energy $F\klamm{N,T,V}$. We now release $N$ and say that the derivative $\diff{F}{N}$ has to be equal to $\mu$. A variable transformation from a control variable $x$ to the derivative of the function with respect to this variable $f'$, is usually performed by means of the Legendre transformation $\{ f,x \} \to \{ f^\ast, p \}$ defined by:
\begin{align*}
f^\ast(p) \defi \max_x \klamm{ px - f(x) }
\end{align*}
For a pedagogical introduction to the applications of the Legendre transform in physics, see also Zia, Redish and McKay~\cite{Zia_LegendreTransform}. A geometric interpretation of the Legendre transform is given in Fig.\ref{fig:LegendreTransform}. Here, we introduce the grand potential $\Omega$ by the negative Legendre transform of the Helmholtz free energy:
\begin{align}
\Omega_0\klamm{\mu,T,V} &\defi \min_{N} \klamm{ F\klamm{N,T,V} - \mu N} \label{eq:DefGrandCanoncialPotential}
\end{align}
In the sequel, the values of $F\klamm{N,T,V} - \mu N$ and $N$ at the minimum will be called {\it equilibrium} values. They will be denoted by a subscript ''0''. Whenever $\Omega$ without subscript ''0'' is used, it will denote the value of $F\klamm{N,T,V} - \mu N$ at an arbitrary number of particles $N$. 

The thermodynamic interpretation of the grand potential in the homogeneous case can be derived from the Gibbs-Duhem equation\footnote{For one-component systems, the Gibbs-Duhem equation is $E = TS - pV + \mu N$}~\cite{Stowe} such that 
\begin{align}
\Omega_0 = - pV, \label{eq:StatMech_GrandCanonicalPotenial}
\end{align}
where $p$ is the pressure of the system.

\paragraph{Statistical Mechanical Definitions for an Open System}
For a grand canonical system where the number of particles is not known, the equilibrium probability function can be written as
\begin{align}
\varrho_0\klamm{{\bf r}^N,{\bf p}^N,N} = \frac{1}{Z_G} e^{-\beta \klamm{{H}\klamm{{\bf r}^N,{\bf p}^N} - \mu N} },\label{eq:Def_PDF_R0}
\end{align}
where $H$ is the Hamiltonian as defined in (\ref{eq:StatisticalMech_DefHamiltonian}) and $Z_G$ is the grand canonical partition function:
\begin{align*}
Z_G = \sum_{N=0}^\infty \frac{1}{h^{3N}N!} \int \int  e^{-\beta \klamm{H\klamm{{\bf r}^N,{\bf p}^N}-\mu N}} d{\bf r}^N d{\bf p}^N.
\end{align*}
In the literature, one often finds the symbol $\Xi$ for the grand partition function. By means of simplicity, we define the average over the probability distribution $\varrho_0$ by
\begin{align}
\avar{\cdot} \defi \text{Tr}\klamm{\varrho_0 \cdot}, \label{eq:Def_Average}
\end{align}
where $\text{Tr}$ is the trace. It is defined by
\begin{align*}
\text{Tr}\klamm{\cdot } \defi \sum_{N=0}^\infty \frac{1}{h^{3N}N!} \iint \cdot  d{\bf r}_i d{\bf p}_i.
\end{align*}
Again, the partition function can be used as a generator for all kind of macroscopic quantities. For an open system , the average entropy is defined as
\begin{align*}
\avar{S} &= - k_B \avar{\varrho_0}\\
&= k_B \klamm{ \ln Z_G + \beta \klamm{ \avar{H} - \mu N } }.
\end{align*}
The equation above leads directly to an expression of the grand potential $\Omega$ as a function of the partition function. This relation corresponds to the thermodynamic definition in an average sense:
\begin{align*}
\Omega_0 = - \beta^{-1} \ln Z_G = \avar{H_N} - \mu \avar{N} - T\avar{S}.
\end{align*}

\section{Basic Theorems of Density Functional Theory \label{sec:StatMech_DFT}}
The statistical mechanical formalism establishes a way of computing a probability density function over the microstates of a system. One microstate is defined by the number of particles $N$, their position ${\bf r}^N$ and momentum ${\bf p}^N$. However, computing the full probability density function leads to unnecessary high computational costs. In fact, we are only interested in obtaining the particle density as a function of space. In other words, we want to know the probability of finding a particle at a given position ${\bf r}$ of the volume. Mathematically, this can be written as
\begin{align}
n_0({\bf r}) &\defi \langle {\sum_{i=1}^N \delta\klamm{{\bf r}- {\bf r_i}}}\rangle. \label{eq:DefEqdensity}
\end{align}
DFT reformulates the Helmholtz free energy in terms of the particle density $n\ofR$, thus avoiding the computation of the full probability density function.

In order to do so, we have to show that the equilibrium probability density distribution $\varrho_0\klamm{{\bf r}^N,{\bf p}^N,N}$ is uniquely defined by the equilibrium particle density $n_0\ofR$. First, we introduce the functional
\begin{align}
\Omega[\varrho] \defi \text{Tr} \klamm{ \varrho \klamm{H_N - \mu N + \beta^{-1} \ln \varrho} } \qquad \klamm{\forall \varrho}  \text{Tr}\klamm{\varrho} = 1.\label{eq:DefOmega}
\end{align}
In the equilibrium case $\Omega[\varrho=\varrho_0]$ corresponds to the grand potential $\Omega_0$. We now show that the definition of the functional given above is consistent, i.e. we show that the equilibrium probability density $\varrho_0$ minimizes $\Omega[\varrho]$.

\begin{thm}{Minimumprinciple}
\begin{align}
\forall \varrho \neq \varrho_0 \qquad \Omega[\varrho_0] < \Omega[\varrho] \label{eq:MinimalPrinciple}
\end{align}
\end{thm}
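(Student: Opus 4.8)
The plan is to reduce the statement to the non-negativity of a relative-entropy (Kullback--Leibler) type quantity, which is the classical Gibbs--Bogoliubov/Mermin argument. First I would use the explicit form of the equilibrium distribution $\varrho_0 = Z_G^{-1} e^{-\beta(H_N - \mu N)}$ to eliminate the Hamiltonian from the functional: solving for $H_N - \mu N$ gives $H_N - \mu N = -\beta^{-1}\klamm{\ln\varrho_0 + \ln Z_G}$. Substituting this into the definition of $\Omega[\varrho]$ and using $\text{Tr}(\varrho) = 1$ yields
\begin{align*}
\Omega[\varrho] = \beta^{-1}\,\text{Tr}\klamm{\varrho\,\ln\frac{\varrho}{\varrho_0}} - \beta^{-1}\ln Z_G ,
\end{align*}
while the same manipulation applied to $\varrho = \varrho_0$ gives $\Omega[\varrho_0] = -\beta^{-1}\ln Z_G$, consistent with the identification $\Omega_0 = -\beta^{-1}\ln Z_G$ obtained earlier. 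Hence $\Omega[\varrho] - \Omega[\varrho_0] = \beta^{-1}\,\text{Tr}\klamm{\varrho\ln(\varrho/\varrho_0)}$, and since $\beta > 0$ the theorem is equivalent to showing that this trace is strictly positive whenever $\varrho \neq \varrho_0$.

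Second, I would establish positivity via the elementary inequality $x\ln x \geq x - 1$ for $x > 0$, with equality if and only if $x = 1$ (this follows from convexity of $x\ln x$, or equivalently from $\ln t \leq t - 1$). Since $\varrho_0 > 0$ throughout phase space (it is an exponential), I may write $\varrho\ln(\varrho/\varrho_0) = \varrho_0\,(x\ln x)$ with $x = \varrho/\varrho_0$, and apply the inequality under the trace:
\begin{align*}
\text{Tr}\klamm{\varrho\ln\frac{\varrho}{\varrho_0}} \geq \text{Tr}\klamm{\varrho_0\klamm{\frac{\varrho}{\varrho_0} - 1}} = \text{Tr}(\varrho) - \text{Tr}(\varrho_0) = 1 - 1 = 0 .
\end{align*}
For the strict inequality I would note that equality in the termwise bound forces $\varrho/\varrho_0 = 1$ at (almost) every point of phase space and for every $N$, i.e. $\varrho = \varrho_0$; contrapositively, $\varrho \neq \varrho_0$ means the integrand strictly exceeds its lower bound on a set of positive measure, so $\text{Tr}(\varrho\ln(\varrho/\varrho_0)) > 0$ and therefore $\Omega[\varrho] > \Omega[\varrho_0]$.

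The main obstacle is not the algebra but the analytic bookkeeping: one must restrict to an admissible class of trial distributions for which the traces (sums over $N$ of phase-space integrals) defining $\text{Tr}(\varrho H_N)$, $\text{Tr}(\varrho N)$ and $\text{Tr}(\varrho\ln\varrho)$ all converge, so that the rearrangement leading to the relative-entropy expression is legitimate and no $\infty - \infty$ arises; and the ``equality forces $\varrho = \varrho_0$'' step should be phrased with the appropriate almost-everywhere qualification rather than pointwise. Once these integrability hypotheses are made explicit, the remainder of the proof is routine.
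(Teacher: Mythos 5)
Your proposal is correct and follows essentially the same route as the paper: eliminate $H_N-\mu N$ via the explicit form of $\varrho_0$, reduce the difference $\Omega[\varrho]-\Omega[\varrho_0]$ to $\beta^{-1}\,\text{Tr}\klamm{\varrho_0\klamm{x\ln x - x + 1}}$ with $x=\varrho/\varrho_0$, and conclude from the elementary inequality $x\ln x - x + 1 \geq 0$ with equality only at $x=1$. Your added remarks on integrability and the almost-everywhere qualification are refinements the paper leaves implicit, but they do not change the argument.
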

\begin{proof}
With the definition of the equilibrium probability density $\varrho_0$ (see Eq.(\ref{eq:Def_PDF_R0})), it can be shown that 
\begin{align}
\beta^{-1} {\text{Tr} \klamm{\varrho\ln \varrho_0}} = \Omega_0 - \text{Tr}\klamm{ \rho \klamm{ H_N - \mu N } }
\end{align}
Hence, $\Omega[\varrho]$ can also be written as
\begin{align}
\Omega[\varrho] = \Omega[\varrho_0] + \beta^{-1} \klamm{\text{Tr} \klamm{\varrho\ln \varrho} - \text{Tr} \klamm{\varrho\ln \varrho_0 }}.
\end{align}
Now, it is our aim to show that the second term is strictly positive. Using $\text{Tr}\klamm{\varrho} = \text{Tr}\klamm{\varrho_0} = 1$, it can be rewritten as follows:
\begin{align}
\text{Tr} \klamm{\varrho\ln \varrho} - \text{Tr} \klamm{\varrho\ln \varrho_0} = \text{Tr}\klamm{ \varrho_0 \klamm{ \frac{\varrho}{\varrho_0}\ln \frac{\varrho}{\varrho_0} - \frac{\varrho}{\varrho_0} +1 } }
\end{align}
If we can show that the inner part of the right hand side is strictly positive for any positive nonunity value of $\frac{\varrho}{\varrho_0}$, then we are done. For this purpose, the problem is reformulated. We want to show, that for every $x> 0$ and $x \neq 1$, $f(x) \defi x\ln x - x + 1 > 0$. This can be easily shown by taking the derivative of $f$. Indeed, the minimum of $f(x)$ is at $x=1$, where $f(1) = 0$.
\end{proof}

The Minimumprinciple leads to the conclusion that the probability density $\varrho_0\klamm{{\bf r}^N,{\bf p}^N,N}$ is uniquely determined by the particle density $n\klamm{{\bf r}}$. The way to proof this is through the external potential.
\begin{figure}[ht]
\centering
\includegraphics{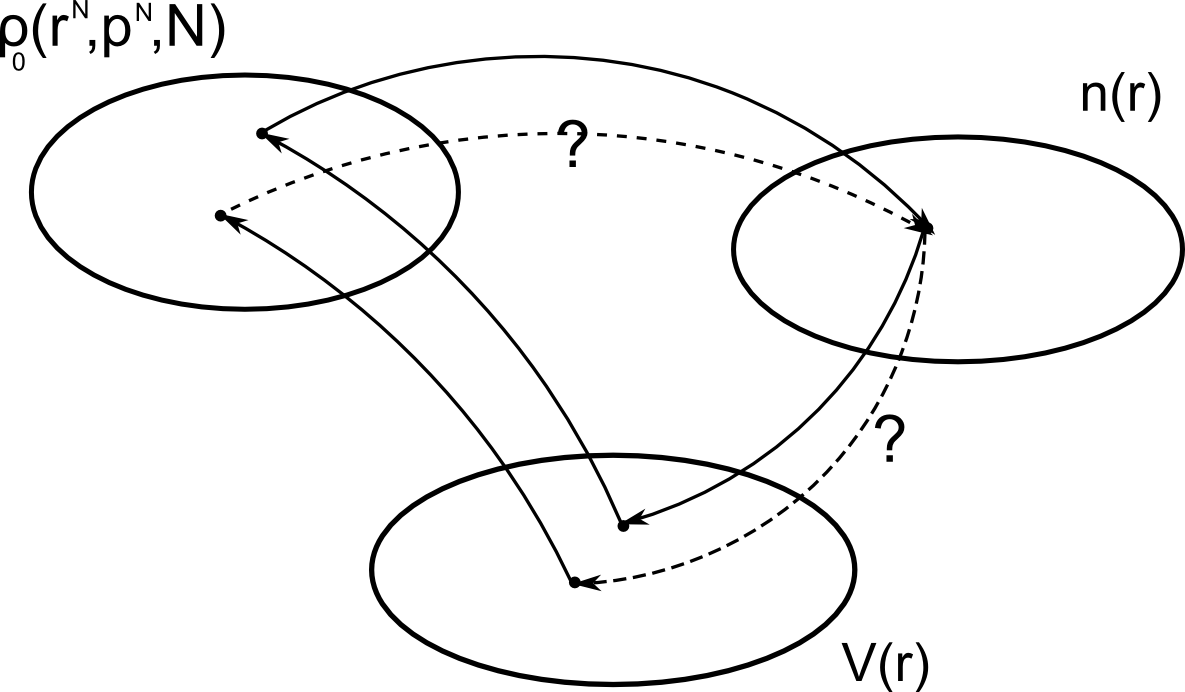}
\caption{Sketch of the mappings between equilibrium probability density $\varrho_0\klamm{{\bf r}^N,{\bf p}^N,N}$, particle density $n\ofR$ and external potential $V\ofRD$. By definition, the probability density is uniquely defined by the external potential $V\ofR$. We want to show that for each particle density $n\ofR$, there is not more than one equilibrium probability density $\varrho_0\klamm{{\bf r}^N,{\bf p}^N,N}$. This is done by showing that the mapping from the set of external potentials $V\ofR$ to the set of particle densities $n\ofR$ is injective.}
\label{fig:}
\end{figure}

\vspace{0.5cm}

\begin{thm}
The mapping which assigns an equilibrium particle density to a given external potential is injective. This means that for a given particle density, there is not more than one external potential.
\end{thm}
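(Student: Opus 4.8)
The statement is the classical-DFT analogue of the first Hohenberg--Kohn theorem, and the plan is to prove it by contradiction, invoking the Minimumprinciple (Eq.~(\ref{eq:MinimalPrinciple})) twice. Suppose two external potentials $V_1\ofR$ and $V_2\ofR$, differing by more than an additive constant, give rise to the \emph{same} equilibrium density $n_0\ofR$. Write $H_N^{(1)}$ and $H_N^{(2)}$ for the corresponding Hamiltonians; by (\ref{eq:StatMech_ExtPot}) they differ only in the one-body term, $H_N^{(1)} - H_N^{(2)} = \sum_{i=1}^N\klamm{V_1\klamm{{\bf r}_i} - V_2\klamm{{\bf r}_i}}$. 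Let $\Omega^{(j)}[\varrho]$ be the functional (\ref{eq:DefOmega}) built with $H_N^{(j)}$, let $\varrho_0^{(j)}$ be its minimizer, and set $\Omega_0^{(j)} = \Omega^{(j)}[\varrho_0^{(j)}]$.

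First I would check that $\varrho_0^{(1)} \neq \varrho_0^{(2)}$, so that the Minimumprinciple may be applied with a genuinely distinct trial state. From the explicit Gibbs form (\ref{eq:Def_PDF_R0}), $\varrho_0^{(1)} = \varrho_0^{(2)}$ would force $H_N^{(1)} - H_N^{(2)}$ to be constant over all configurations and particle numbers; evaluating in the $N=1$ sector this says $V_1\ofR - V_2\ofR$ is constant, contrary to hypothesis. (This also pins down the meaning of ``not more than one'': external potentials are identified up to an additive constant, which merely shifts $\mu$.)

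Next, I would apply the Minimumprinciple for system ``1'' to the admissible trial density $\varrho_0^{(2)}$: since $\varrho_0^{(2)}\neq\varrho_0^{(1)}$, one gets $\Omega_0^{(1)} = \Omega^{(1)}[\varrho_0^{(1)}] < \Omega^{(1)}[\varrho_0^{(2)}]$. Then I would rewrite the right-hand side by peeling off the one-body difference and using the definition (\ref{eq:DefEqdensity}) of the density as an average:
\begin{align*}
\Omega^{(1)}[\varrho_0^{(2)}] &= \Omega^{(2)}[\varrho_0^{(2)}] + \text{Tr}\klamm{\varrho_0^{(2)}\sum_{i=1}^N\klamm{V_1\klamm{{\bf r}_i} - V_2\klamm{{\bf r}_i}}} \\
&= \Omega_0^{(2)} + \int n_0\ofR\,\klamm{V_1\ofR - V_2\ofR}\, d{\bf r},
\end{align*}
so that $\Omega_0^{(1)} < \Omega_0^{(2)} + \int n_0\ofR\,\klamm{V_1\ofR - V_2\ofR}\, d{\bf r}$. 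Running the identical argument with the labels $1$ and $2$ interchanged gives $\Omega_0^{(2)} < \Omega_0^{(1)} + \int n_0\ofR\,\klamm{V_2\ofR - V_1\ofR}\, d{\bf r}$. Adding the two strict inequalities, the integral terms cancel and one is left with $\Omega_0^{(1)} + \Omega_0^{(2)} < \Omega_0^{(1)} + \Omega_0^{(2)}$, a contradiction. Hence no two inequivalent external potentials can produce the same equilibrium density, i.e.\ the map is injective.

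The only delicate point I anticipate is the middle step: one must verify carefully that $\text{Tr}\klamm{\varrho_0^{(2)}\sum_i\klamm{V_1 - V_2}}$ genuinely collapses to $\int n_0\ofR\,\klamm{V_1\ofR - V_2\ofR}\, d{\bf r}$ --- this is exactly where the shared equilibrium density $n_0$ enters, so the argument would break if the two densities were allowed to differ --- and that the Minimumprinciple is invoked only for a strictly distinct trial state, which is what the $N=1$ remark secures. Everything else is the standard ``add the two variational bounds'' manipulation.
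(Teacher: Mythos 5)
Your proof is correct and takes essentially the same route as the paper: apply the Minimumprinciple twice with the rival equilibrium distribution $\varrho_0^{(2)}$ (resp.\ $\varrho_0^{(1)}$) as trial state, collapse the trace of the one-body difference to $\int n_0\ofR\klamm{V_1\ofR - V_2\ofR}d{\bf r}$ via the definition of the density, and add the two strict inequalities to reach a contradiction. The only difference is that you explicitly verify $\varrho_0^{(1)}\neq\varrho_0^{(2)}$ and flag the additive-constant caveat, points the paper simply asserts or leaves implicit.
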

\begin{proof} Assume that for one particle density $n \ofR$, there are two different external potentials $V_{1}\klamm{\bf r},V_{2}\klamm{\bf r}$. Different external potentials map to different equilibrium probability distributions $\varrho_{0,1},\varrho_{0,2}$ (see also definition (\ref{eq:Def_PDF_R0})). We say that $\Omega_{1,2}$ is equal to $\Omega[\varrho]$ as defined in Eq.(\ref{eq:DefOmega}) with respect to the external energies $V_{1,ext}\klamm{{\bf r}^N}$ and $V_{2,ext}\klamm{{\bf r}^N}$, respectively. We obtain:
\begin{align}
\Omega_{01}= \Omega_1[\varrho_{01}] &{\overset {(\ref{eq:MinimalPrinciple})} < } \Omega_1[\varrho_{02}]\notag\\
 &{\overset {(\ref{eq:DefOmega})} =} \Omega_2[\varrho_{02}] +  \text{Tr} \klamm{ \varrho_{02} \klamm{V_{1,ext}\klamm{{\bf r}^N} - V_{2,ext}\klamm{{\bf r}^N}}}. \label{eq:Proof:1}
\end{align}
We have a closer look at the difference of the potential energies $V_{1,ext}\klamm{{\bf r}^N} - V_{2,ext}\klamm{{\bf r}^N}$. As defined in Eq.(\ref{eq:StatMech_ExtPot}), the external potential energy $V_{ext}\klamm{{\bf r}^N}$ can be written as a sum of the potentials over all the particles $i = 1\ldots N$:
\begin{align}
V_{1,ext}\klamm{{\bf r}^N} - V_{2,ext}\klamm{{\bf r}^N} &= \sum_{i=1}^N V_{1}\klamm{\bf r_i} - V_{2}\klamm{\bf r_i}\notag\\
&= \sum_{i=1}^N \int \delta\klamm{{\bf r}-{\bf r_i}} \klamm{ V_{1}\klamm{\bf r} - V_{2}\klamm{\bf r}} d{\bf r}. \label{eq:FormDFT_1}
\end{align}
In this context, the expression using the $\delta$-function is particularly important, as it allows the use of the particle density function $n\ofR$ in the sequel.
\begin{align}
\text{Tr} \klamm{ \varrho_{02} \klamm{V_{1,ext}\klamm{{\bf r}^N} - V_{2,ext}\klamm{{\bf r}^N}}}
&{\overset {(\ref{eq:FormDFT_1})} =} \text{Tr} \klamm{ \varrho_{02} \sum_{i=1}^N \int \delta\klamm{{\bf r}-{\bf r_i}} \klamm{ V_{1}\klamm{\bf r} - V_{2}\klamm{\bf r}}} d{\bf r}\notag\\
&= \int \text{Tr} \klamm{ \varrho_{02} \sum_{i=1}^N \delta\klamm{{\bf r}-{\bf r_i}} } \klamm{ V_{1}\klamm{\bf r} - V_{2}\klamm{\bf r}}d{\bf r} \notag\\
&{\overset {(\ref{eq:DefEqdensity})}  =} \int n_2 \ofR \klamm{ V_{1}\klamm{\bf r} - V_{2}\klamm{\bf r}} d{\bf r}, \label{eq:Proof:2}
\end{align}
where $n_2\klamm{\bf r}$ is the particle density with respect to $\varrho_{02}$ as defined in Eq.(\ref{eq:DefEqdensity}). Here, we used the linearity of the trace. Before taking the last step, we go back to (\ref{eq:Proof:1}). The last term can be replaced by (\ref{eq:Proof:2}). Furthermore, by symmetry arguments the same has to hold for $\Omega_{02}$. Then, we get:
\begin{align}
\Omega_{01} &< \Omega_{02} +  \int n_2 \ofR \klamm{ V_{1}\klamm{\bf r} - V_{2}\klamm{\bf r}} d{\bf r} \\
\Omega_{02} &< \Omega_{01} -  \int n_1 \ofR \klamm{ V_{1}\klamm{\bf r} - V_{2}\klamm{\bf r}} d{\bf r}
\end{align}
Obviously, this leads to a contradiction, if $n_1\ofR = n_2\ofR$.
\end{proof}

Consequently, for an existing equilibrium particle density function $n_0\ofR$, there is not more than one external potential $\varrho_0\klamm{{\bf r}^N,{\bf p}^N,N}$. Furthermore, for every external potential there is not more than one equilibrium probability density. So, starting with one equilibrium density profile $n_0\klamm{\bf r}$, there is not more than one probability density $\varrho_0\klamm{{\bf r}^N,{\bf p}^N,N}$ that corresponds to this profile. 

Consequently, $\varrho_0$ is uniquely defined by $n_0\ofR$. We conclude that $\Omega$ can be written as a functional of the particle density $n\ofR$. Obviously, the minimum principle (\ref{eq:MinimalPrinciple}) for $\Omega$ as a functional of the probability density function translates to $\Omega$ as a functional of the particle density $n\ofR$:
\begin{align}
\Omega[n_0] < \Omega[n] \qquad \klamm{\forall n} n \neq n_0.
\end{align}
Rewriting (\ref{eq:DefOmega}) in terms of the particle density $n\ofR$ leads to the expression
\begin{align}
\Omega[n] = F[n] + \int n\klamm{\bf r}\klamm{V\klamm{\bf r} - \mu}  d{\bf r}\label{eq:Def_Potential_Fct},
\end{align}
where $F[n]$ is the free energy of the system as a function of the particle density $n\ofR$. The exact expression for $F[n]$ is not known. Finding an expression for this functional is part of the fluid modelling. 

\subsection{Applying the Variational Principle}
We now want to find the particle density $n\ofR$ which minimizes the functional $\Omega[n]$. This is done using functional derivatives. For a comprehensive review into the topic, see also Parr and Yang~\cite{ParrYang} or Courant and Hilbert~\cite{CourantHilbert}.

The functional derivative  $\frac{\delta \Omega}{\delta n\ofR}$ of the functional $\Omega[n]$ at the point ${\bf r}$ is defined as 
\begin{align}
\lim_{\varepsilon \to 0} \frac{ \Omega[n + \varepsilon \eta] - \Omega[n] }{\varepsilon}  &=
\left.
 \dif{}{\varepsilon} \Omega[n + \varepsilon \eta]
\right|_{\varepsilon = 0}
= \int \frac{\delta \Omega}{\delta n\ofR} \eta\ofR d{\bf r},
\label{eq:StatMech_FunctionalDerivative}
\end{align}
where $\eta\ofR$ is an arbitrary twice continuously differentiable function which vanishes at the boundaries of the domain. Furthermore, we suppose that $\frac{\delta \Omega}{\delta n\ofR}$ is continuous and that $n\ofR$ minimizes the functional $\Omega[n]$. Then, the expression above has to vanish. By the {\it fundamental lemma of the calculus of variations}~\cite{CourantHilbert}, it follows that
\begin{align}
\frac{\delta \Omega}{\delta n\ofR} = 0. \label{eq:StatMech_EulerLagrange}
\end{align}
(\ref{eq:StatMech_EulerLagrange}) is also known as the {\it Euler-Lagrange Equation}. It is a necessary condition for an extremum of $\Omega$. There are several ways of computing the functional derivative $\frac{\delta \Omega}{\delta n\ofR}$. 

\paragraph{Application for Gradient Expansions of the free energy}
First assume that $\Omega[n]$ can be written in integral form as
\begin{align}
\Omega[n] = \int g(n\ofR,\nabla n\ofR, {\bf r}) d{\bf r},
\end{align}
where $g$ is a scalar function of the particle density, its gradient and the position ${\bf r}$. $g$ is supposed to be twice continuously differentiable in all of its arguments. Assume that $n_0\ofR$ is the desired function which minimizes $\Omega[n]$. We now introduce a variation $\varepsilon \eta\ofR$, for a twice continuously differentiable scalar function $\eta\ofR$, which vanishes at the boundaries of the volume. With a calculation similar to (\ref{eq:StatMech_FunctionalDerivative}), we get as a necessary condition:
\begin{align}
\left.\dif{}{\varepsilon}  \Omega[n+ \varepsilon \eta] \right|_{\varepsilon = 0} &= \int \eta\ofR g_n(n\ofR,n'\ofR, r) + \nabla\eta\ofR \cdot g_{\nabla n}(n\ofR,\nabla n\ofR, r) d{\bf r} \istobe 0,
\end{align}
where $g_{\nabla n}$ is the gradient of $g$ with respect to $\nabla n$. Remark that for the second term, the divergence theorem can be applied. One obtains
\begin{align}
\int_V \eta\ofR \klamm {g_n(..) - \text{div} g_{\nabla n}(..) } d{\bf r} + \int_{\partial V} \eta\ofR g_{\nabla n}(..) \cdot  d{\bf S} \istobe 0,
\end{align}
where $V$ is the volume of the system and $\partial V$ is its boundary. ${\bf S}$ is used in shorthand for the product of the normal vector of the boundary times an infinitesimal element of the surface. Due to the boundary conditions imposed on $\eta\ofR$, the surface term vanishes. As a result, the expression
\begin{align}
\int_V \eta\ofR \klamm {g_n(..) - \text{div} g_{\nabla n}(..) } d{\bf r}
\end{align}
has to vanish for all functions $\eta\ofR$. Comparing this to (\ref{eq:StatMech_FunctionalDerivative}) and (\ref{eq:StatMech_EulerLagrange}), we get that a necessary condition for an extremum is given by the {\it fundamental differential equation of Euler}~\cite{CourantHilbert}:
\begin{align}
\text{div} g_{n'} - g_n = 0 \label{eq:DFT_FundamentalDiffEq_Euler}.
\end{align}

\paragraph{Application for integral formulation of the free energy}
If $\Omega$ includes multiple integrals of the form
\begin{align}
\Omega[n] = \iint n\ofR n\ofRD h\klamm{{\bf r},{\bf r}'} d{\bf r} d{\bf r}',
\label{eq:StatMech_OmegaNonLocal}
\end{align}
then we rather make use of another approach. Remark that if in (\ref{eq:StatMech_FunctionalDerivative}), we replace $\eta\ofR$ by the Delta-function $\delta_{\bf r}\klamm{{\bf r}'} \defi \delta\klamm{|{\bf r}-{\bf r}'|}$, this gives a defining equation for $\frac{\delta \Omega}{\delta n\ofR}$ (see also Plischke and Bergersen~\cite{Plischke}):
\begin{align}
\frac{\delta \Omega[n]}{\delta n\ofR}
=
\left.
 \dif{}{\varepsilon} \Omega\klamm{
n + \varepsilon \delta_{\bf r}
}\right|_{\varepsilon = 0}
\end{align}
Applying this to (\ref{eq:StatMech_OmegaNonLocal}) gives:
\begin{align}
\frac{\delta \Omega[n]}{\delta n\ofR}
=
\int n\ofRD \klamm{h\klamm{{\bf r},{\bf r}'} + h\klamm{{\bf r}',{\bf r}} } d{\bf r}'.
\end{align}

\paragraph{Minimal Condition for the Equilibrium Particle Density}
Applying the above calculations on (\ref{eq:Def_Potential_Fct}) yields the variational equation
\begin{align}
\frac{\delta F[n]}{\delta n\ofR} + V\klamm{\bf r} - \mu = 0 \qquad \klamm{\forall {\bf r}}, \label{eq:StatMech_MinimalCondition}
\end{align}
where $\frac{\delta F[n]}{\delta n\ofR}$ is the functional derivative of $F[n]$ at ${\bf r}$.

\section{Models for the Free Energy \label{sec:StatMech_ModelFreeEnergy}}

One drawback of DFT is, that the exact expression for the Helmholtz free energy is lost when changing from the probability density function space to the particle density function space. Instead, an appropriate model for the free energy as a functional of the particle density $n\ofR$ has to be found. 

\subsection{Local Theory: The Square-Gradient Approximation}

In the sequel, we will follow the arguments presented by Cahn and Hilliard~\cite{CahnHillard_SquareGradient} in 1958. The basic assumption of this approach is that the free energy of a system does only depend on the local density and on the density of the immediate environment. The impact of the latter will be accounted for by the local density derivatives. The density and the local derivatives will be treated as independent variables of the local free energy:
\begin{align}
F[n\ofR] = \int f\klamm{n\ofR, \diff{n}{x_i},\difffTwo{n}{x_i}{x_j},\ldots } d{\bf r} 
\end{align}
Now, the local free energy $f$ is expanded in a taylor series around a system of uniform density $n$, denoted by the subscript ''0''.
\begin{align}
 f\klamm{n, \diff{n}{x_i},\difffTwo{n}{x_i}{x_j},\ldots }
=
f_0(n) 
+ \sum_i L_i   \diff{n}{x_i}
+ \sum_{ij} \kappa_{ij}^{(1)} \difffTwo{n}{x_i}{x_j} 
+ \frac{1}{2} \sum_{ij} \kappa_{ij}^{(2)} \diff{n}{x_i} \diff{n}{x_j}
+ \ldots
\end{align}
where the coefficients are obtained by 
\begin{align}
L_i \defi \left.\diff{f}{ \klamm{\diff{n}{x_i}} } \right|_0
\quad,\quad
\kappa_{ij}^{(1)} \defi \left.\diff{f}{\klamm{ \difffTwo{n}{x_i}{x_j}  }} \right|_0
\quad \text{and} \quad
\kappa_{ij}^{(2)} \defi \left.\difffTwo{f}{ \klamm{ \diff{n}{x_i} } }{ \klamm{ \diff{n}{x_j} } } \right|_0.
\end{align}
It is assumed that the system under consideration is isotropic. Hence, it must be invariant under rotation ($x_i \to x_j$) and reflection ($x_i \to -x_i$). The expression for $f$ can thus be simplified to
\begin{align}
 f\klamm{n, \diff{n}{x_i},\difffTwo{n}{x_i}{x_j},\ldots }
=
f_0(n) +
\kappa_1 \nabla^2 n + \kappa_2 (\nabla n)^2 + \ldots,
\end{align} 
where $\kappa_1 \defi \left.\diff{f}{\klamm{\nabla^2 n}} \right|_0$ and $\kappa_2 \defi \difff{f}{(\nabla n)}$. With the help of the divergence theorem, the volume integral over the second term can be transformed into one surface term and one term containing  $(\nabla n)^2$:
\begin{align}
\int_V \kappa_1 \nabla^2 n dV = - \int_V \dif{\kappa_1}{n} (\nabla n)^2 dV + \int_S \kappa_1 \nabla n \cdot N dS.
\end{align}
The volume is chosen such that the density gradient is orthogonal to the normal vector of the surface. Neglecting terms of higher order gives us the {\it square-gradient approximation}:
\begin{align}
f\klamm{n, \nabla n }
=
f_0(n) + \kappa(n) \klamm{\nabla n}^2
\end{align}
Applying the variational principle (\ref{eq:StatMech_MinimalCondition}) we get the fundamental differential equation of Euler:
\begin{align}
f_0'(n\ofR) - \kappa '(n) |\nabla n|^2 - 2\kappa(n) \nabla^2 n - \mu + V_{ext}\ofR \istobe 0.
\end{align}
This is a partial differential equation for the particle density distribution. For further details, see also Evans~\cite[p.157]{Evans}.

\subsection{Approximation for Van-der-Waals-type Approaches \label{sec:DFT_PerturbationTheory}}
Based on Van der Waals approach to introduce attractive cohesion forces, we split the free energy of the system into repulsive\footnote{Zwanzig~\cite{Zwanzig} argued that at high temperatures, the behaviour of a gas is widely characterized by the repulsive part. The same holds for dense fluids.} and attractive contributions. The repulsive reference fluid will be represented by a hard-sphere system. We will present models to approximate such a system in the next section. The attractive contribution to the free energy will be treated in a perurbation approach, which we will present in the following. In particular, we will first split the particle interaction energy $U$ into two contributions:

\begin{align}
U\klamm{{\bf r}^N} = U_{HS}\klamm{{\bf r}^N} + U_{attr}\klamm{{\bf r}^N},
\end{align}
where $U_{HS}$ is the particle interaction energy of a hard-sphere fluid and $U_{attr}$ is the attractive particle interaction energy. Furthermore, we will assume that the particle interaction energy $U\klamm{{\bf r}^N}$ can be written as a sum of pair potentials 
\begin{align}
U\klamm{{\bf r}^N} = \frac{1}{2}\sum_{i\neq j} \phi\klamm{|{\bf r}_i - {\bf r}_j|},  \label{eq:StatMech_DefInteractionEnergy}
\end{align}
where $\phi(r)$ is the $6$-$12$ Lennard-Jones potential defined by:
\begin{align}
\phi\klamm{r} = 4 \varepsilon \klamm{\klamm{\frac{\sigma}{r}}^{12} - \klamm{\frac{\sigma}{r}}^6}.
\label{eq:LennardJonesPotential}
\end{align}
$\varepsilon$ is the depth of the Lennard-Jones-Potential and $\sigma$ defines the distance at which the LJ-potential vanishes. Analogously to the split of the particle interaction energy $U$, we approximate $\phi$ by the sum of a repulsive hard-sphere component and one attractive component:
\begin{align}
\phi(r) \approx \phi_{HS}(r) + \phi_{attr}(r)
\end{align}
A model proposed by Barker and Henderson~\cite{BarkerHenderson2} sets the attractive particle interaction potential to the negative part of the LJ-Potential:
\begin{align}
\phi_{attr} (r) = \left\{\begin{array}{ll}
0 & \text{ if } r \leq \sigma\\
4 \varepsilon\klamm{\klamm{\frac{\sigma}{r}}^{12} - \klamm{\frac{\sigma}{r}}^6} & \text{ if } r > \sigma
\end{array}\right. \label{eq:LennardJones_Attractive_Approx_1}
\end{align}
Another approach has been made by Weeks, Chandler and Andersen~\cite{Weeks} who  split the LJ-potential into repulsive and attractive parts rather than a sum of positive and negative parts:
\begin{align}
\phi_{attr,W} (r) = \left\{\begin{array}{ll}
- \varepsilon & \text{ if } r \leq 2^{1/6}\sigma\\
4 \varepsilon\klamm{\klamm{\frac{\sigma}{r}}^{12} - \klamm{\frac{\sigma}{r}}^6} & \text{ if } r > 2^{1/6}\sigma
\end{array}\right. \label{eq:WeeksPerturbation}
\end{align}

\begin{figure}[ht]
\begin{center}
\includegraphics[width=10cm]{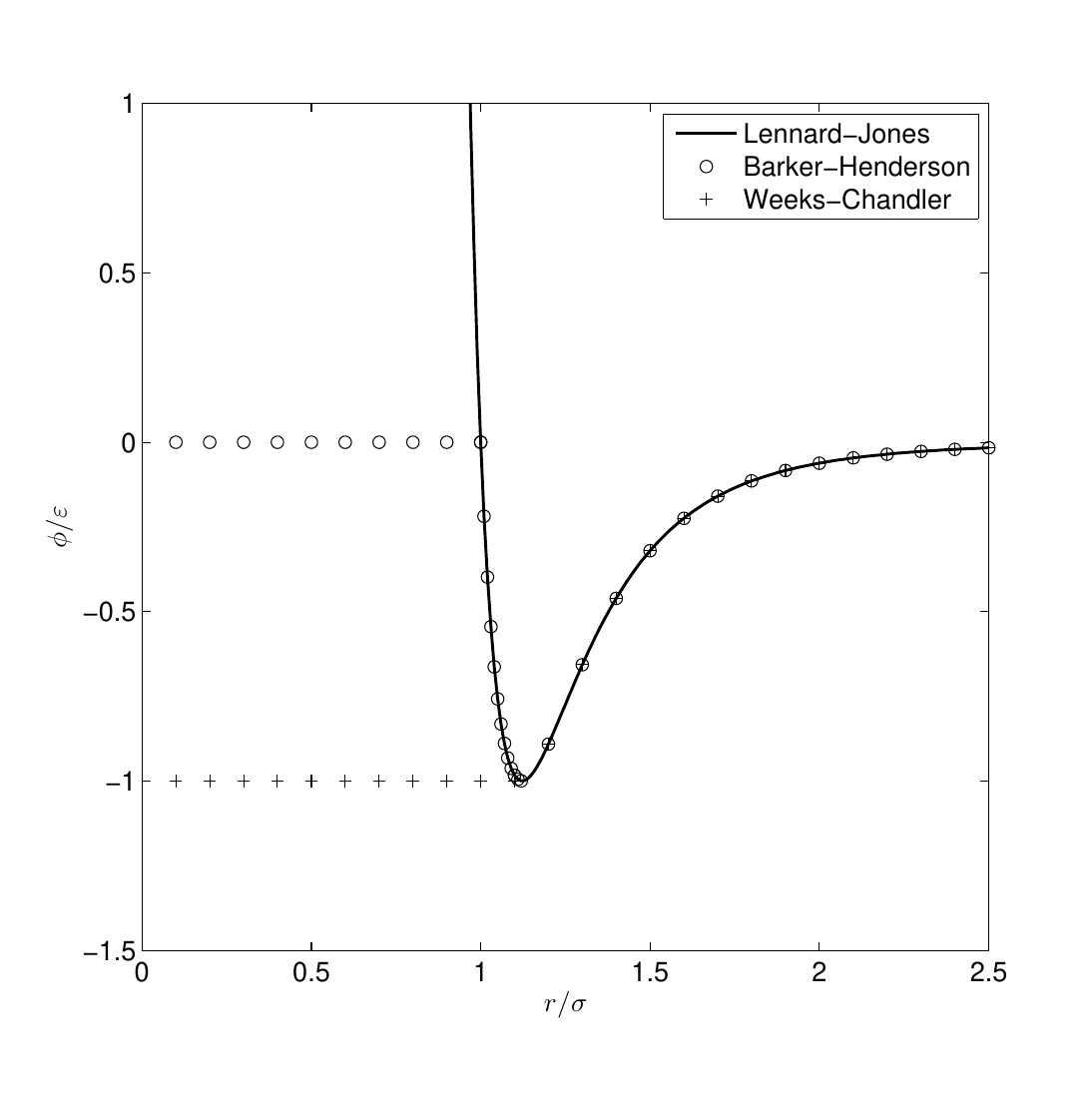}
\caption{Lennard-Jones-Potential $\phi$ as a function of the particle distance $r$. $\sigma$ is the distance at which the LJ-potential vanishes whereas $\varepsilon$ is the depth of the potential.}
\label{fig:WallPotentials1}
\end{center}
\end{figure}

\paragraph{The $\lambda$-Expansion}
 
This expansion is based on an expansion of the rate of change of the free energy with respect to the attractiveness of the fluid. For a detailed review, see also Plischke and Bergersen~\cite{Plischke} or Hansen and McDonald~\cite{Hansen}, whose arguments we sketch in the sequel. An alternative expansion of the free energy around a reference fluid, based on the work from Barker and Henderson in 1967~\cite{BarkerHenderson1}, is presented in Sec. \ref{sec:BarkerHendersonApproach} in the Appendix.

As in the previous section, we say that the particle interaction energy $U$ of the system can be split into one reference part $U_{HS}$ and one perturbative part $U_{attr}$. In order to gradually increase the attractive interaction, we introduce a parameter $\lambda \in [0,1]$ such that:
\begin{align}
U(\lambda) \defi U_{HS} + \lambda \cdot U_{attr},
\end{align}
is the interaction energy of the fluid characterized by $\lambda$. The free energy as well as the canonical partition function $Z_{C}$ of this somewhat imaginary fluid depend on the parameter $\lambda$. 

It is our aim to calculate the derivative of $F$ with respect to $\lambda$. For this, we make use of the statistical mechanical definition of the Helmholtz free energy (\ref{eq:StatMech_BarkerHenderson_FreeEnergyZC}) which yields
\begin{align}
\dif{F}{\lambda} = - \frac{1}{\beta Z_C} \dif{Z_C(\lambda)}{\lambda},
\end{align}
where the canonical partition function $Z_C$ as a function of $\lambda$ is given by
\begin{align}
Z_C\klamm{\lambda} \defi \frac{1}{h^{3N}N!} \iint e^{-\beta H_N(\lambda)} d{\bf p}^N d{\bf r}^N.
\end{align}
Here, $H(\lambda) = E_k + U(\lambda)$ similar to (\ref{eq:StatisticalMech_DefHamiltonian}).
\begin{align}
\dif{F}{\lambda} &= \frac{1}{Z_C(\lambda)} \frac{1}{h^{3N}N!} \iint U_{attr} e^{-\beta H_N(\lambda)}  d{\bf p}^N d{\bf r}^N \\
&=: \avar{U_{attr}}_{\lambda},
\end{align}
where we have taken the average with respect to a fluid with interaction energy $U(\lambda)$. Now, we expand $\diff{F}{\lambda}$ around the reference fluid ($\lambda = 0$):
\begin{align}
\dif{F}{\lambda} = \avar{U_{attr}}_{HS} + \lambda \left.{ \diff{\avar{U_{attr}}_{\lambda}}{\lambda}}\right|_{\lambda = 0} + O\klamm{\lambda^2}.
\end{align}
It can be shown that the second term can be written as
\begin{align}
\left.\diff{\avar{U_{attr}}_{\lambda}}{\lambda}\right|_{\lambda = 0} = - {\beta} \klamm{ \avar{U_{attr}^2}_{HS} - \avar{U_{attr}}_{HS}^2}.
\end{align}
We integrate this rate of change of the Helmholtz free energy from $\lambda=0$ to $1$. As a result, we expect to obtain the difference between the Helmholtz free energy of the fully perturbed fluid and the Helmholtz free energy of the reference hard-sphere fluid:
\begin{align}
F - F_{HS} &=  \int_0^1 \avar{U_{attr}}_{\lambda}d\lambda \notag \\
&= \avar{U_{attr}}_{HS} - \frac{\beta}{2} \klamm{ \avar{U_{attr}^2}_{HS} - \avar{U_{attr}}_{HS}^2} + O(\beta^2),
\label{eq:LambdaEpansion_FreeEnergy}
\end{align}
 The first term of this expansion can be written in terms of the intermolecular energy of pairs of molecules:
\begin{align}
F- F_{HS} &= \avar{U_{attr}}_{HS} + O(\beta)\notag\\
&= \frac{1}{2} \avar{\sum_{i \neq j} \phi_{attr}\klamm{|{\bf r}_i - {\bf r}_j|}}+ O(\beta)\notag\\
&= \frac{1}{2} \iint \phi_{attr}\klamm{|{\bf r} - {\bf r}'|} \avar{  \sum_{i \neq j} \delta\klamm{{\bf r}_i - {\bf r}} \delta \klamm{ {\bf r}_j - {\bf r}' }} d{\bf r}' d{\bf r}+ O(\beta)\notag\\
&= \frac{1}{2} \iint n^{(2)}_{HS}\klamm{{\bf r},{\bf r}'} \phi_{attr}\klamm{|{\bf r}-{\bf r'}|} d{\bf r}' d{\bf r}+ O(\beta),
\label{eq:StatMech_PerturbationHelmholtzFreeEnergy}
\end{align}
where $n^{(2)}_{HS}$ is the two particle distribution of the hard-sphere fluid. In contrast to the expansion introduced in the previous section, Zwanzig showed that the second-order term in (\ref{eq:LambdaEpansion_FreeEnergy}
) includes up to fourth-order correlation functions \cite{Zwanzig}, \cite{Hansen}. In the sequel, we will just consider the expansion up to the first order. 

\section{Models for the Hard Sphere Fluid \label{sec:ModelsHardSphere}}
The perturbative models for the Helmholtz free energy of the fluid are all based on an expansion around a reference hard-sphere fluid. It is thus of essential interest to find simple ways of formulating the two-particle distribution for the hard-sphere fluid as functions of the particle density.

\paragraph{The Hard-Sphere Pair Distribution Function}
One approach to describe the behavior of a hard-sphere fluid is to find approximate expressions for the distribution function $g_{HS}(r)$, defined in (\ref{eq:StatMech_PairDistributionFunction}) for a homogeneous fluid. We define the {\it pair correlation function} as
\begin{align}
h(r) \defi g(r) -1.  \label{eq:PairCorrelationFunction}
\end{align}
In an ideal gas, the particles are completely uncorrelated, hence we get $h(r) = 0$. At large distances, one can assume that the particles are uncorrelated, which leads to $h(r) \to 0$ as $r \to \infty$. In order to find approximate quantities for $g(r)$, we introduce the {\it direct correlation function} $C\klamm{{\bf r}_1,{\bf r}_2}$, defined by the Ornstein-Zernicke-Equation
\begin{align}
h({\bf r}_1,{\bf r}_2) = C\klamm{{\bf r}_1,{\bf r}_2}
+n
\int h({\bf r}_1,{\bf r}_3) C({\bf r}_3,{\bf r}_2) d{\bf r}_3.
\label{eq:StatMech_OrnsteinZernicleEquation}
\end{align}
By definition, the direct correlation function thus excludes effects of three or more particles, which are absorbed in the second term of (\ref{eq:StatMech_OrnsteinZernicleEquation}). In order to find an expression for $h(r)$, a closure is needed. The most famous closure is the Percus-Yevick approximation \cite{Plischke}
\begin{align}
C(r) = \klamm{1 - e^{\beta \phi(r)}} g(r).
\end{align}
It was solved analytically by Wertheim~\cite{Wertheim} for one-component systems (see also Fig.~\ref{fig:PYSol}) and by Lebowitz~\cite{Lebowitz} for mixtures of hard spheres. As a result, one obtains the following equation of state~\cite{Wertheim}:
\begin{align}
\frac{\beta p}{n} = \frac{1+y+y^2}{(1-y)^3}, 
\end{align}
where $y = \frac{\pi}{6} n d^3$ is the packing fraction with the hard-sphere diameter $d$.
\begin{figure}[ht]
\centering
\includegraphics[width=10cm]{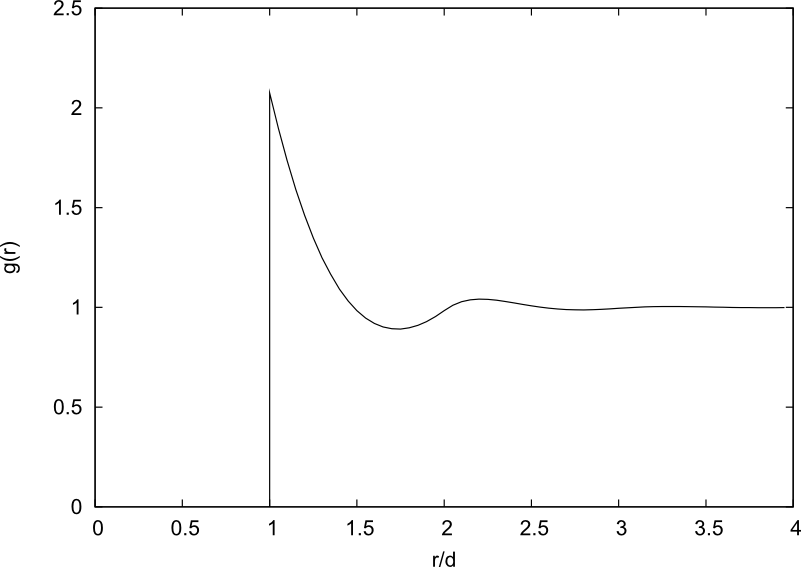}
\caption{Numerical values for the pair distribution function for a hard-sphere fluid of density $n=0.5$ over the distance $r$. The values are the numerical evaluation of the analytical solution of the Percus-Yevick equation by Wertheim~\cite{Wertheim}. The numerical evaluation of the solution was done by Throop and Bearman~\cite{ThroopBearman}.\label{fig:PYSol}}
\end{figure}

\paragraph{Carnahan Starling Approximation for the Helmholtz Free Energy - A Local Density Approximation (LDA)}

Carnahan and Starling~\cite{CarnahanStarling} followed a different approach by approximating the coefficients of a virial expansion by Ree and Hoover~\cite{ReeHoover} with an infinite series. They obtained a slightly modified equation of state
\begin{align}
\frac{\beta p}{n} = \frac{1+y+y^2-y^3}{(1-y)^3}.
\end{align}
Making use of the Helmholtz free energy of an ideal gas, this leads to the following local approximation:
\begin{align}
f_{HS}\klamm{n} = \beta^{-1} \klamm{\ln \klamm{\Lambda^3 n} - 1 + \frac{y\klamm{4-3y}}{\klamm{1-y}^2}},
\end{align}
where $\beta^{-1} \klamm{\ln \klamm{\Lambda^3 n} - 1}$ is the ideal-gas contribution to the Helmholtz free energy:
\begin{align}
f_{id}\klamm{n} = \beta^{-1} \klamm{\ln \klamm{\Lambda^3 n} - 1},\\
\text{and}\qquad
F_{id}[n] = \int f_{id}\klamm{n\ofR} n\ofR d{\bf r}.
\end{align}
For more details on this derivation, see also Hansen and McDonald~\cite{Hansen}. For the total Helmholtz free energy of the fluid, this yields
\begin{align}
F_{HS}[n] = \int f_{HS}\klamm{n\ofR} n\ofR d{\bf r}.
\end{align}


\paragraph{Rosenfeld Fundamental Measure Theory}
The Carnahan-Starling approximation is a local approach, which excludes the possibility of layering in the density profile near a hard wall~\cite[p.849]{TarazonaEvans}. Rosenfeld~\cite{Rosenfeld} derives a non-local approximation which is not a-priori restricted to small nonuniformities. In the homogeneous case, this theory regains the results of the Percus-Yevick-theory. In the inhomogeneous case, considering short-range correlations allows the appearance of oscillatory density profiles. 

Rosenfelds measure theory is based on the assumption that the hard sphere free energy can be written as  a sum of functions of weighted densities $n_\alpha$. 
\begin{align}
F_{HS}[n] &= F_{id}[n] +  \int \Upsilon \klamm{n_\alpha} d{\bf r} \label{eq:FHS:Def}\\
\text{ where } \qquad \Upsilon \klamm{n_\alpha} &\defi 
\underbrace{
- n_0 \ln \klamm{1-n_3} + \frac{n_1n_2}{1-n_3} + \frac{n_2^3}{24\pi \klamm{1-n_3}^2}}
-
\underbrace{
\frac{{\bf n}_1\cdot{\bf n}_2}{1-n_3} - \frac{n_2\klamm{{\bf n}_2 \cdot {\bf n}_2}}{8\pi \klamm{1-n_3}^2}
}.\\
&\qquad\qquad \text{terms for uniform mixture} \qquad\qquad \text{terms for non-uniform mixture}\notag
\end{align}
We note that the sign of the contribution of the vector-weighted densities is positive in Rosenfeld's original papers \cite{Rosenfeld,Rosenfeld1990}, but has been corrected in his subsequent publications~\cite{Rosenfeld1994,Rosenfeld1998}.

In (\ref{eq:FHS:Def}), $F_{id}[n]$ is the ideal gas contribution to the Helmholtz free energy.  $n_i$ are weighted densities:
\begin{align}
n_\alpha\ofR \defi \int n\ofRD w_\alpha \klamm{{\bf r}-{\bf r'}} d{\bf r}'.
\end{align}
The characteristic weight functions $w_\alpha$ are defined as follows:
\begin{align}
w_3\ofR &\defi \Theta\klamm{d/2 - |{\bf r}|} \quad , \quad w_2\ofR \defi \delta\klamm{d/2 - |{\bf r}|}\\
w_1\ofR &\defi \frac{w_2\ofR}{2\pi d} \quad,\quad w_0\ofR \defi \frac{w_2\ofR}{\pi d^2}
\end{align}
where $d$ is the hard-sphere diameter and $\Theta$ is the unit step function $\Theta\klamm{x > 0} = 1$, $\Theta\klamm{x < 0} = 0$. The vector valued weight functions are defined as 
\begin{align}
{\bf w}_1\ofR \defi \frac{{\bf w}_2\ofR}{2 \pi d}
\quad,\quad
{\bf w}_2\ofR \defi \frac{\bf r}{|{\bf r}|} \delta\klamm{ d/2 - |{\bf r}| }.
\end{align}

\paragraph{The Barker-Henderson-Diameter}
Barker and Henderson~\cite{BarkerHenderson2} defined a modified potential function depending on an inverse-steepness parameter and the depth of the potential. The modified potential is defined such that if these parameters are zero, one regains the hard-sphere interaction potential. The Helmholtz free energy of a fluid with the modified potential can then be expanded around zero. Barker and Henderson showed that the first-order term of the inverse-steepness parameter of this expansion vanishes, if the hard-sphere diameter $d$ is chosen as the following temperature-dependent term
\begin{align}
d = \int_0^\sigma \klamm{1-e^{-\beta \phi(r)}} dr,  \label{eq:StatMech_DFT_HS_Diameter}
\end{align}
where $\sigma$ is such that $\phi(\sigma) = 0$.

The evaluation of this integral involves some extra numerical work, as $\phi(r) \to \infty$ as $r \to 0$. Hence, we make use of the approximation $d = \sigma$. Consider that using the hard-sphere potential $\phi(r < d) \defi \infty$ and $\phi(r>d) \defi 0$, and setting $d=\sigma$ satisfies the equation above. This approximation is valid for low temperatures \cite{Tang_BarkerHendersonDiameter}. However, there are more sophisticated models such as the approximation by Cotterman, Schwarz and Prausnitz~\cite{CottermannPrausnitz}:
\begin{align}
d = \frac{1 + 0.2977 \cdot \tilde T}{1 + 0.33163 \cdot \tilde T + 1.047710 \cdot 10^{-3} \cdot \tilde T^2} \sigma, \label{eq:StatMech_CSPApprox}
\end{align}
where $\tilde T = k_B T/\varepsilon$ is the reduced temperature.
Tang~\cite{Tang_BarkerHendersonDiameter} shows that this expression matches perfectly with the Barker Henderson expression (\ref{eq:StatMech_DFT_HS_Diameter}) for the hard-sphere diameter for temperatures $\tilde T<15$. For subcritical temperatures, one gets a correction factor of $0.97\ldots 1.0$, as shown in Fig.~\ref{fig:CottermannDiameter}.

\begin{figure}[ht]
\centering
\includegraphics{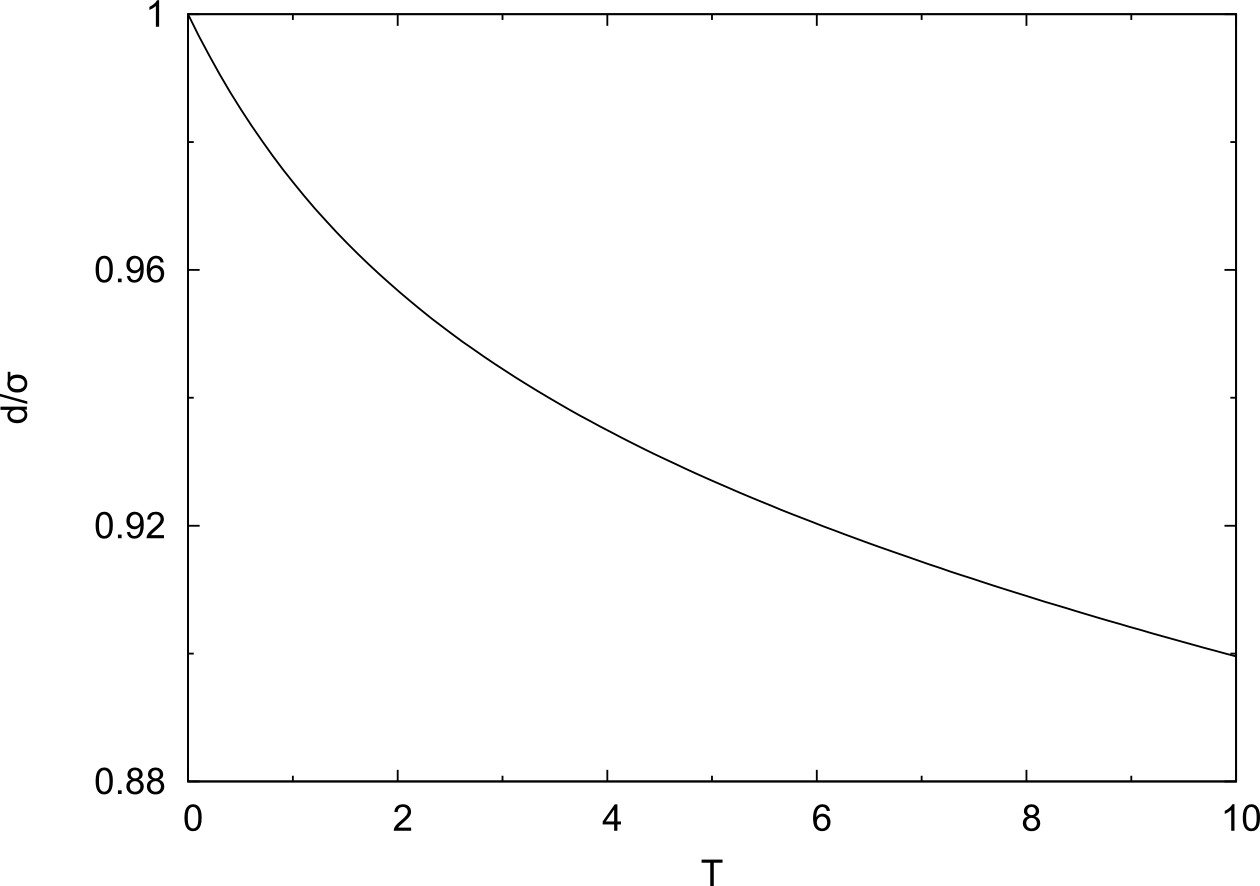}
\caption{Approximation by Cotterman, Schwarz and Prausnitz~\cite{CottermannPrausnitz} for the ratio of hard-core diameter $d$ to soft-core diameter $\sigma$ as in (\ref{eq:StatMech_CSPApprox}). For low temperatures below the critical temperature, the fraction $d/\sigma$ is between $0.97$ and $1$.}
\label{fig:CottermannDiameter}
\end{figure}

\section{The Model used in the Present Work \label{sec:StatMech_OurModel}}

In the present work, we employ a perturbation approach, based on Eq.(\ref{eq:StatMech_PerturbationHelmholtzFreeEnergy}). For the two-particle distribution $n_{HS}^{(2)}$ of the hard-sphere fluid, we neglect all particle-particle correlations by applying a simple {\it Random Phase Approximation}:
\begin{align}
n^{(2)}_{HS}\klamm{{\bf r},{\bf r}'} = n\ofR n\ofRD.
\end{align} 
For the attractive interaction potential $\phi_{attr}(r)$, we employ the Barker-Henderson approach (\ref{eq:LennardJones_Attractive_Approx_1}). Remark that by doing so, the two particle distribution $n_{HS}^{(2)}$ is artificially set to zero for particles with a distance of $|{\bf r}-{\bf r}'| < \sigma$. This is because in (\ref{eq:StatMech_PerturbationHelmholtzFreeEnergy}), $\phi_{attr}(r)$ is multiplied with $n_{HS}^{(2)}$. At the same time, $\phi_{attr}(r) = 0$ for $r < \sigma$. This corresponds to the property of the Percus-Yevick solution as in Fig.~\ref{fig:PYSol}.

The hard-sphere diameter $d$ is set to $\sigma$ for simplicity. Finally, we use the Carnahan-Starling approximation from the class of Local Density approximations (LDA) for the hard-sphere Helmholtz free energy. This leads to the following equation for $\Omega$ as a functional of the particle density:
\begin{align}
\Omega[n] = 
\int f_{HS}(n) n\ofR d{\bf r}+ 
\frac{1}{2} \iint n\ofR n\ofRD \phi_{attr} \klamm{|{\bf r'}-{\bf r}|} d{\bf r}' d{\bf r} + 
\int n\ofR \klamm{V \ofR - \mu }d{\bf r}. \label{eq:OmegaHS}
\end{align}
The non-dimensionalization inherently given in this equation is:
\begin{align}
\tilde n &\defi n\sigma^3 \quad,\quad 
\tilde {\bf r} \defi \frac{\bf r}{\sigma} \quad , \quad 
\tilde T \defi \frac{k_B T}{\varepsilon} \quad,\quad 
\tilde \mu \defi \frac{\mu}{\varepsilon} - \frac{k_B T}{\varepsilon} \ln\klamm{\frac{6\Lambda^3}{\pi d^3}} \quad,\quad
\tilde y = y,\\
\tilde V({\tilde{\bf r}}) &\defi \frac{V\ofR}{\varepsilon} \quad, \quad
\tilde \phi_{attr}(\tilde r) \defi \frac{\phi_{attr}(r)}{\varepsilon} \quad, \quad
\tilde \Omega[\tilde n] \defi \frac{\Omega[n]}{\varepsilon}\quad,\quad
\tilde f_{HS}(\tilde n) \defi \frac{f_{HS}(n)}{\varepsilon} - \frac{k_B T}{\varepsilon} \ln\klamm{\frac{6\Lambda^3}{\pi d^3}}.
\end{align}
In contrast to the non-dimensionalization with the critical temperature, here we do not make use of any additional parameters.
In the sequel, we will omit the tilde and just use dimensionless  variables. Consequently, we get for the fimensionless local hard-sphere free energy
\begin{align}
 f_{HS}\klamm{n} = T \klamm{\ln \klamm{y} - 1 + \frac{y\klamm{4-3y}}{\klamm{1-y}^2}}, \label{eq:DefHardSphereFE_dimless}
\end{align}
where $y = \frac{\pi}{6} n$. The attractive interaction potential (\ref{eq:LennardJones_Attractive_Approx_1}) reduces to 
\begin{align}
\phi_{attr} (r) = \left\{\begin{array}{ll}
0 & \text{ if } r \leq 1\\
4 \klamm{{\frac{1}{r^{12}}} - {\frac{1}{r^6}}} & \text{ if } r > 1
\end{array}\right.. \label{eq:PerturbationPotential_dimless}
\end{align}
The variational equation (\ref{eq:StatMech_MinimalCondition}), where $F[n]$ equals the first two terms in (\ref{eq:OmegaHS}), provides a necessary condition for the minimum of the grand potential $\Omega$:
\begin{align}
\frac{\delta \Omega[n]}{\delta n \klamm{{\bf r}}} = \mu_{HS}\klamm{n\ofR} +
 \int  n\klamm{ {\bf r}'} \phi_{attr}\klamm{| {\bf r}-  {\bf r}'|} d {\bf r}' 
+  V\klamm{ {\bf r}} -  \mu {\overset ! =} 0  \qquad \klamm{\forall {\bf r}} , \label{eq:minCond_dimless}
\end{align}
where the first term is the hard-sphere chemical potential
\begin{align}
\mu_{HS}\klamm{n} \defi \dif{\klamm{nf_{HS}(n)}}{n} = T \klamm{ \ln\klamm{y} + \frac{y\klamm{8-9y+3y^2}}{(1-y)^3}}. \label{eq:Mu_HS_Def}
\end{align}

\section{The Uniform Liquid \label{sec:UniformLiquid}}
In order to set proper boundary conditions, it is important to calculate the particle densities of uniform fluids. The second term of (\ref{eq:OmegaHS}) can be modified such that the integration is done over $\klamm{n\ofR - n\ofRD} \phi_{attr}\klamm{|{\bf r}-{\bf r'}|}$:
\begin{align}
\Omega[n] = 
\int \bar f(n\ofR) n\ofR d{\bf r}+ 
\frac{1}{2} \iint n\ofR \klamm{n\ofRD - n\ofR } \phi_{attr} \klamm{|{\bf r'}-{\bf r}|} d{\bf r}' d{\bf r}
 + \int n\ofR \klamm{V \ofR - \mu } d{\bf r} , \label{eq:Uniform_Omega}
\end{align}
where $\bar f$ defined in (\ref{eq:BarFDef}). This formulation has the advantage of capturing all effects due to non-uniformity of the liquid in the second term, which can be verified by seeing that the second term in Eq. (\ref{eq:Uniform_Omega}) vanishes for a uniform particle density $n\ofR = n$. Consequently, we get a local density approach for uniform liquids in a volume $V$ by $F[n] = {V} n \bar f(n)$, with
\begin{align}
\bar f \klamm{n} &\defi f_{HS}(n) + \alpha n, \label{eq:BarFDef}\\
\text{where} \qquad \alpha &\defi \frac{1}{2}\int \phi_{attr} \klamm{|{\bf r}|} d{\bf r} = 2\pi \int_0^\infty \phi_{attr} \klamm{r} dr=  -\frac{16}{9} \pi. \label{eq:Uniform_def_alpha} 
\end{align}
Here, we made use of the attractive potential as in (\ref{eq:PerturbationPotential_dimless}).
The uniform grand canonical potential per volume element equals the pressure (see also Eq.(\ref{eq:StatMech_GrandCanonicalPotenial})). Inserting (\ref{eq:BarFDef}) into (\ref{eq:Uniform_Omega}) for a fluid with uniform density $n$ and without external potential yields
\begin{align}
{\Omega\klamm{n}}/{{V}} = -p(n) = n f_{HS}\klamm{n} + \alpha n^2 - \mu n. \label{eq:Uniform_Omega_def}
\end{align}
Consider that here, the grand potential $\Omega$ is no longer a functional, but a function of the particle density $n$, which is a scalar in the uniform case. The equilibrium particle densities of a uniform fluid are obtained by solving the minimal condition
\begin{align}
-p'(n) = \mu_{HS}\klamm{n} + 2\alpha n - \mu = 0. \label{eq:Uniform_Omega_prime}
\end{align}
We search for liquid and gas densities such that both phases are in equilibrium. This means that we search for two minima of equal depth of the negative pressure. Hence, we have three equations:
\begin{alignat}{2}
p'(n_l) &= 0 &\qquad &\text{ Variational principle for $n_l$ } \notag\\
p'(n_g) &= 0 &\qquad &\text{ Variational principle for $n_g$}  \label{eq:Uniform_eq}\\
p(n_l) &= p(n_g) &\qquad &\text{ Mechanical equilibrium}
\notag
\end{alignat}
with the three unknowns $n_l,n_g$ and $\mu$ for a given temperature $T$. For the solution $\mu = \mu_{sat}$ of this system of equations, we get two minima of equal depth of the negative pressure as a function of the density (see Fig.~\ref{plot:1COEX}). For a chemical potential $\mu < \mu_{sat}$, the global minimum of $-p(n)$ is at the gas density (see Fig.~\ref{plot:1GAS}). In this case, we say that the gas phase is preferred, or more stable. For $\mu > \mu_{sat}$, the liquid phase is preferred (see Fig.\ref{plot:1LIQ}). In the Appendix, plots of the negative pressure $-p(n)$ and its derivative over $n$ are shown for the temperatures $T = 0.4,0.7,0.9$ and $1.003$. 

\begin{figure}[ht]
\begin{center}
\subfigure[Gas phase, $\mu~=~3.36$]{
\includegraphics[width=4.5cm]{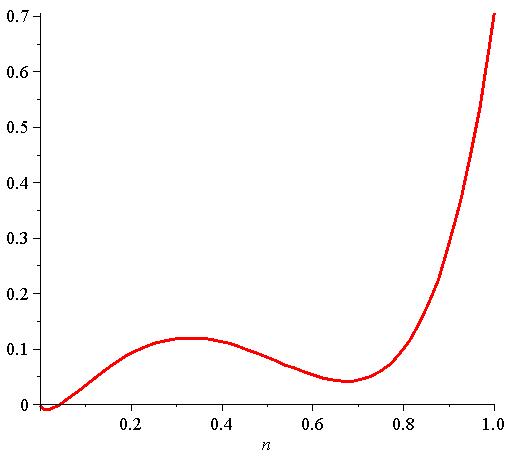}
\label{plot:1GAS}
}
\subfigure[Saturation, $\mu~=~-3.44$]{
\includegraphics[width=4.5cm]{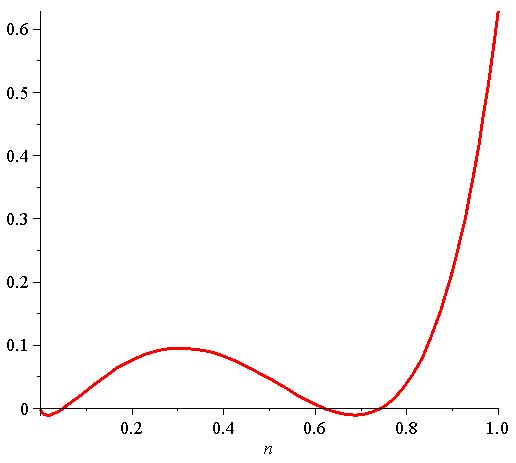}
\label{plot:1COEX}
}
\subfigure[Liquid Phase, $\mu~=~3.52$]{
\includegraphics[width=4.5cm]{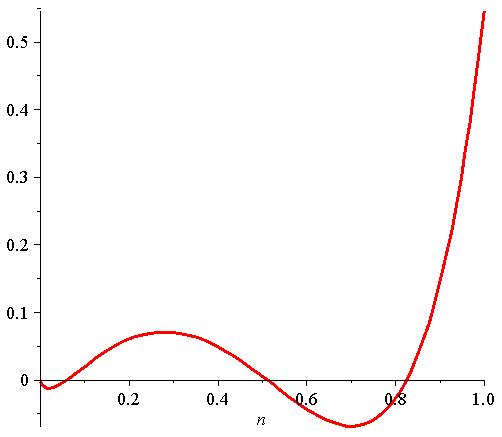}
\label{plot:1LIQ}
}
\caption{Plots of the negative pressure over the density for a uniform fluid for different chemical potentials at $T = 0.7$. The negative pressure corresponds to the grand canonical potential per volume in the uniform case. The preferred density of the system is given by the minimum of the negative pressure. At equilibrium of two phases, there are two minima of equal depth, each one of which indicates the density of one phase (see (b)). For a negative deviation of the chemical potential from the equilibrium value, the gas phase is more stable (see (a)), whereas positive deviations give a more stable liquid phase (see (c)).}
\label{fig:UniformOmegaMu}
\end{center}
\end{figure}
Figure~\ref{fig:PhaseDiagrams} depicts the phase
diagrams for the temperature and pressure as functions of density
and temperature, respectively. These are
contrasted to results of molecular dynamical simulations by
Trokhymchuk and Alejandre~\cite{Trokhymchuk} and experiments by
Michels, Levelt, and De Graeff~\cite{Michels} for argon. There is
qualitative agreement between the model used in this work and the
molecular dynamics simulations/experiments but not a quantitative one.
Indeed, the model seems to underestimate the critical temperature.
The gas density is adequately approximated for the relevant
temperatures close to $0.7$, but the liquid
densities are lower than expected. One can also see that the
saturation pressure is larger compared to the actual one. However,
such deviations appear to be common in DFT/mean field approaches. In
the homogeneous limit they have been analyzed in detail by Tang and
Wu~\cite{TangWu} who pointed out that the deviations are due to
neglecting higher-order correlations in such approaches. However, it
was also shown that in the non-homogeneous case including an attractive wall, the deviations of
density profiles from results of molecular dynamical simulations are
less than expected.

\begin{figure}[ht]
\begin{center}
\subfigure[$\mu_{sat}$ over the reduced temperature $T$]{
\includegraphics[width=7cm]{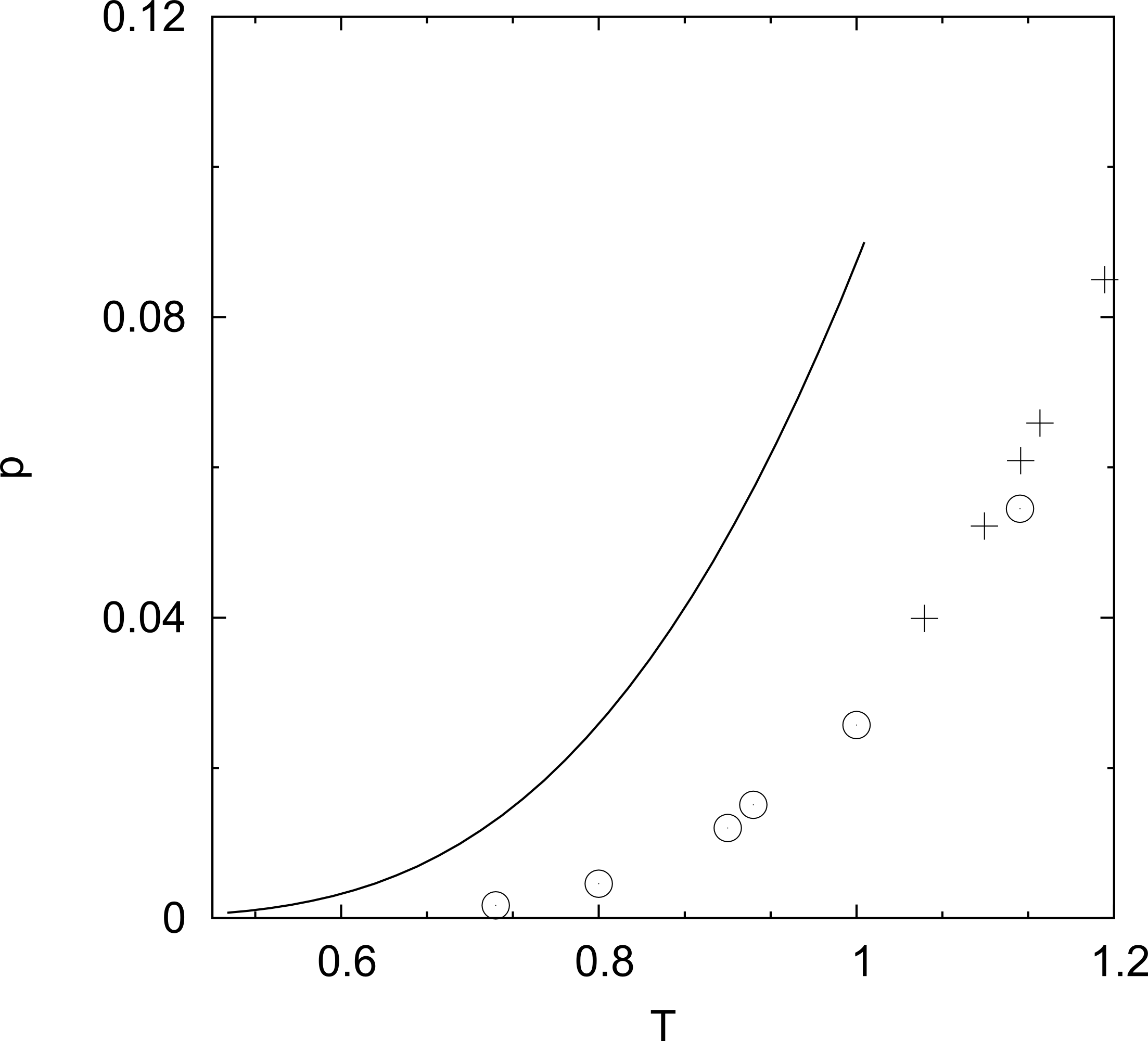}
\label{fig:saturationDataChemPotTemp}
}
\subfigure[Particle densities at $\mu_{sat}$ for liquid and gas]{
\includegraphics[width=7cm]{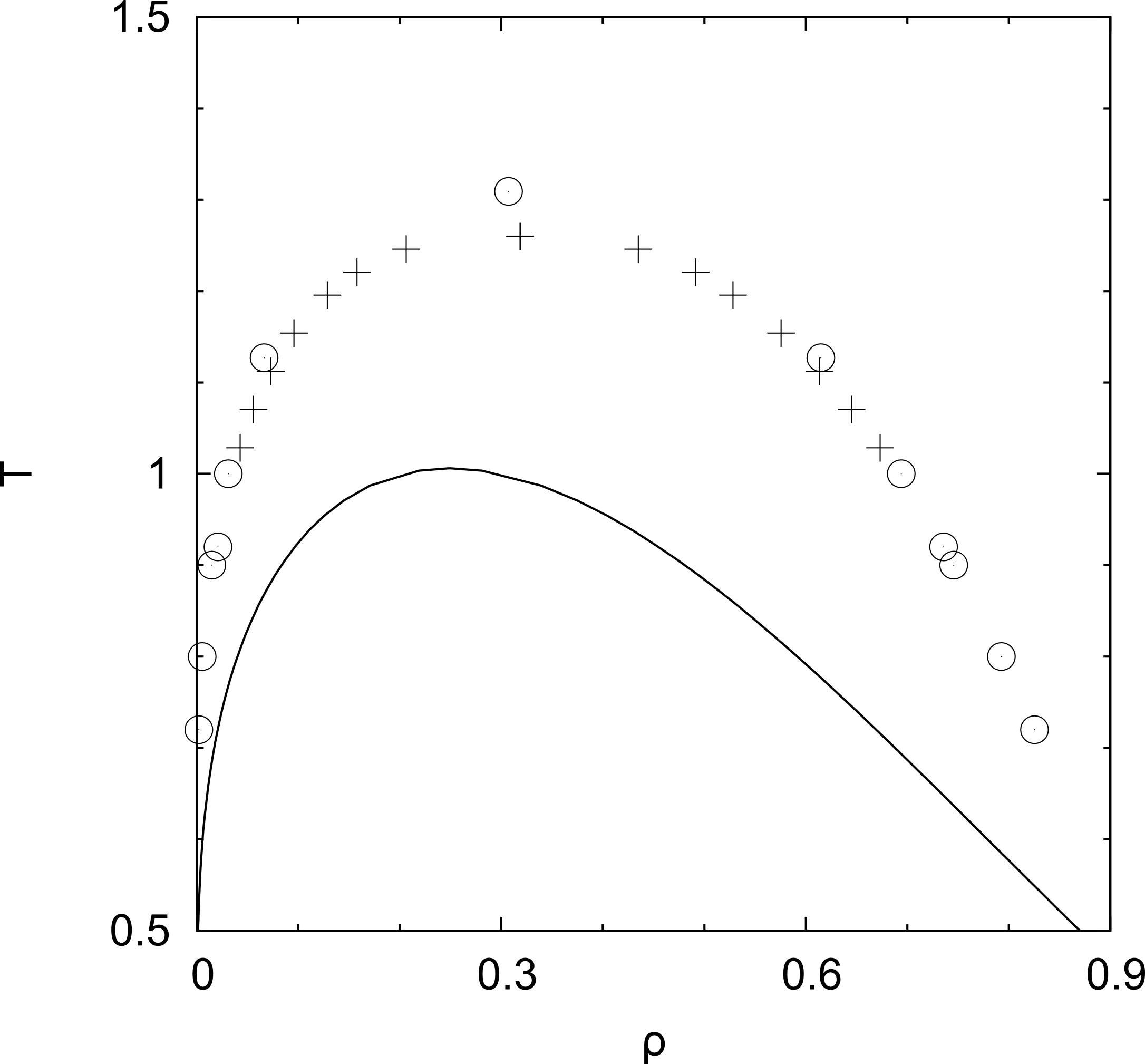}
\label{fig:saturationDataDensityTemp}
}
\caption{Phase diagrams at saturation for the temperature as a function of
density and pressure as a function of temperature. The solid lines are obtained from the model used in this work. Crosses: canonical molecular
dynamics simulations by Trokhymchuk and Alejandre~\cite{Trokhymchuk}; circles: experimental results for argon ($\sigma = 3.405\cdot 10^{-8} \text{cm}$ and $\varepsilon
= 165.3 \cdot 10^{-16} \text{erg}$) by Michels, Levelt, and De Graeff~\cite{Michels}.}
\label{fig:PhaseDiagrams}
\end{center}
\end{figure}

\subsection{The Critical Point}
We now want to calculate the values at the critical point. For this, remark that Eq.~(\ref{eq:Uniform_Omega_prime}) for the equilibrium defines the chemical potential as a function of the density and temperature:
\begin{align}
\mu(n,T) = \mu_{HS}\klamm{n} + 2\alpha n.
\end{align}
Hansen and McDonald~\cite{Hansen} showed that in the critical point, the chemical potential satisfies
\begin{align}
\left.\diff{\mu}{n}\right|_{T=T_c} = \left.\difff{\mu}{n}\right|_{T=T_c} = 0.
\end{align}
These equations can be reduced to
\begin{align}
-1 + 5y_c + 20y_c^2 +4y_c^3-5y_c^4+y_c^5=0,
\end{align} 
where $y_c= \frac{\pi}{6}n_c$ and $n_c$ is the critical density.
This equation has the three real solutions $\klamm{-1.284180231,-0.4245763099,0.1304438842}$. The critical temperature is 
\begin{align}
T_c = - 12 \frac{\alpha}{\pi} \frac{y_c\klamm{1-y_c}^4}{1+4y_c+4y_c^2-4y_c^3 +y_c^4}.
\end{align}
Now choose the only positive solution for $y_c$. This leads to 
\par
\begin{center}
\fbox{
\centering
\parbox{14cm}{
\centering
\begin{align}
n_c = 0.2491294675\quad,\quad
T_c = 1.006172833 \quad,\quad
\mu_c = -3.459392667,
\end{align}}}
\end{center}
which is in excellent agreement with the critical temperature obtained by solving (\ref{eq:Uniform_eq}) numerically (see also Fig.~\ref{fig:saturationDataDensityTemp}). 
In contrast to these results, Monte-Carlo simulations of the Lennard-Jones fluid using mixed-field finite scaling analysis by Caillol~\cite{Caillol} estimate the critical values to be $T_c = 1.326 \pm 0.002$, $n_c = 0.316\pm 0.002$ and $\mu_c = -2.676 \pm 0.005$. We conclude that our model fails to predict the behavior of the fluid close to the critical point. 

\subsection{The Point of Maximal Chemical Potential}

\begin{figure}[ht]
\centering
\includegraphics[height=6cm]{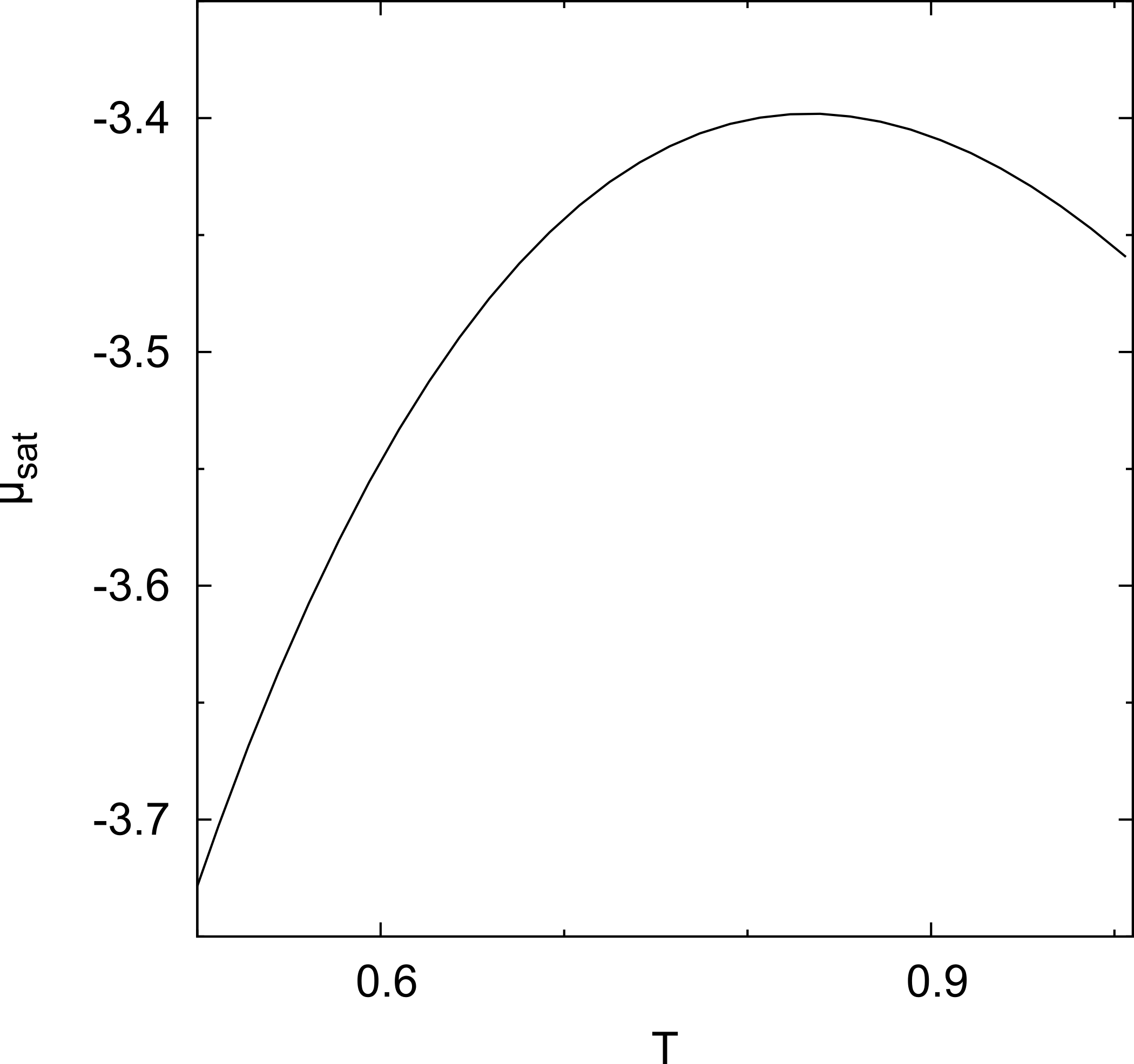}
\caption{
Chemical potential at saturation as a function of the temperature.}
\label{fig:SatPlot_MuT}
\end{figure}

In Fig.~\ref{fig:SatPlot_MuT}, it can be observed that the chemical potential at saturation attains a maximum close to $T= 0.8$. To calculate the values at this extremum, it is necessary to consider the defining equations (\ref{eq:Uniform_eq}). These equations can be rewritten by the function ${\bf k}: \mathbb{R}^4 \to \mathbb{R}^3$,
\begin{align}
{\bf k} \klamm{n_g,n_l,\mu,T} \defi 
\left(
\begin{array}{c}
p'(n_g)\\
p'(n_l)\\
p\klamm{n_l}-p\klamm{n_g}
\end{array}
\right). \label{eq:Uniform_Def_k}
\end{align}
Every configuration at saturation corresponds to one solution of the equation ${\bf k} \klamm{n_g,n_l,\mu,T} = {\bf 0}$. Now, we parametrize the solutions of this equation with the temperature. This means, that the liquid and gas densities as well as the chemical potential at saturation are written as functions of the temperature $T$. The value of ${\bf k}$ is equal to zero on the path  $\klamm{n_g(T),n_l(T),\mu(T),T}$. Hence, $d{\bf k}/dT = 0$, i.e. 
\begin{align}
\dif{{\bf k}}{T} = 
\diff{{\bf k}}{n_l} \dif{n_l}{T}+ 
\diff{{\bf k}}{n_g} \dif{n_g}{T}+ 
\diff{{\bf k}}{\mu} \dif{\mu}{T}+
\diff{{\bf k}}{T} = 0
\end{align}
The point with maximal chemical potential is characterized by $\dif{\mu}{T} = 0$. Setting the respective term in the previous equation to zero and defining the two variables $n_{lT} \defi \dif{n_l}{T}$ and $n_{gT} \defi \dif{n_g}{T}$ leads to the system of six equations
\begin{align}
{\bf k} &= 0\\
\diff{{\bf k}}{n_l} n_{lT}+ 
\diff{{\bf k}}{n_g} n_{gT}+ 
\diff{{\bf k}}{T} &= 0
\end{align}
with the unknowns $\klamm{n_l,n_g,\mu,T,n_{lT},n_{gT}}$. Applying a Newton method on this nonlinear system of equations gives the solution:
\begin{center}
\fbox{
\centering
\parbox{12cm}{
\centering
\begin{alignat*}{3}
n_l &= 0.5570137060& \qquad
n_{lT} &= -1.064612550& \qquad
T &= 0.8334676569 \\
n_g &= 0.05142195929&
n_{gT} &= 0.3804118128&
\mu &= -3.398145808 
\end{alignat*}}}
\end{center}

\section{Surface Tension and the Excess Grand Potential \label{sec:SurfaceTensionExcessGP}}

In this work, we are interested inhomogeneous systems, i.e. in systems with varying density $n\ofR$. In this context, remark that in (\ref{eq:StatMech_GrandCanonicalPotenial}), we introduced the grand potential for homogeneous systems as the negative product of the volume times the pressure of the system. This relation is no longer true for inhomogeneous systems. 

In the presence of a dividing surface between two volumes, the surface itself has a certain contribution to the grand potential. This contribution will be measured by means of the change of the grand potential per unit area of the interface:
\begin{align}
\gamma = \left. \diff{\Omega}{A} \right|_{T,V,\mu},
\label{eq:StatMech_DefSurfaceTension}
\end{align}
which is called the {\it surface tension} or {\it surface energy}. Hence, it becomes necessary to introduce a more general form of Eq.(\ref{eq:StatMech_GrandCanonicalPotenial}): 
\begin{align}
\Omega &= \Omega_{Bulk} + \Omega_{ex} = \Omega_{Bulk} + \gamma A,
\label{eq:DefExcessOmega_Gamma}
\end{align}
where $\Omega_{Bulk}$ is the contribution to the grand potential from the bulk fluid and $\Omega_{ex}$ is the contribution from the interface (see also Plischke \cite[p.164]{Plischke} or Landau \cite[p.455]{Landau}). 

In order to find an expression for $\Omega_{Bulk}$, we divide the three-dimensional space into a partition of three sets: One bulk volume $V_A$, a film $V_f$ and a bulk volume $V_B$ (see also Fig.~\ref{fig:AreasInterface}). The set of possible density distributions $n\ofR$ is restricted by assuming that the density of the fluid in $V_A$ and in $V_B$ is equal to the uniform bulk densities $n_A$ and $n_B$, respectively. Resuming, this leads to
\begin{align}
n\ofR =
\left\{
\begin{array}{lll}
n_A & \text{ if } & {\bf r} \in V_A\\
n_f\ofR & \text{ if } & {\bf r} \in V_f\\
n_B & \text{ if } & {\bf r} \in V_B
\end{array}
\right. . \label{eq:DFTApplication_DensityDistributionForm}
\end{align}

\begin{figure}[ht]
\centering
\includegraphics[width=10cm]{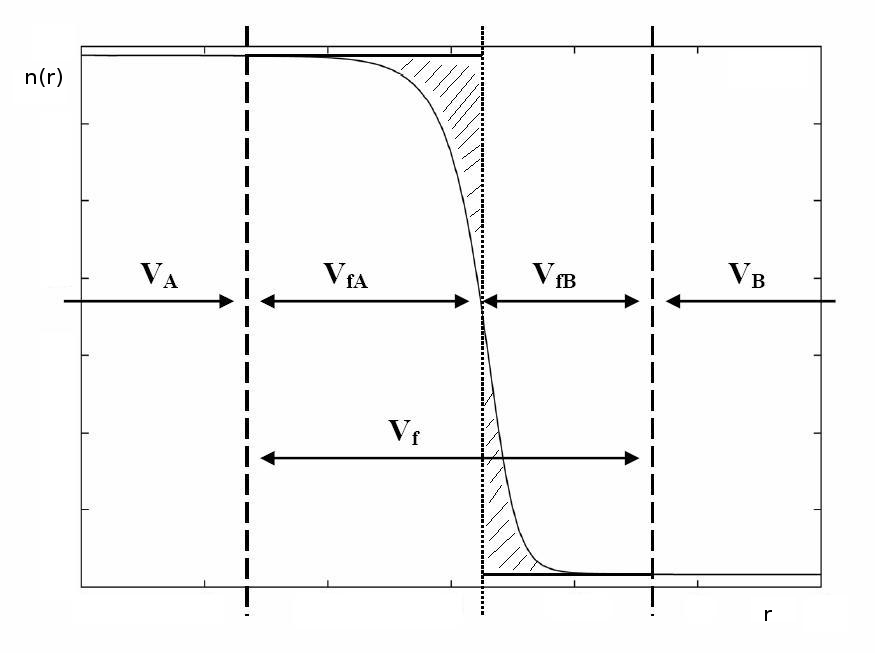}
\caption{An example of a partition of the space for a liquid-gas interface. $V_A$ and $V_B$ are the volumes in which uniform density is assumed. $V_f$ is the volume in which computations are executed. The interface divides the film volume into the two parts $V_{fA}$ and $V_{fB}$. In the case of a liquid-gas interface, this division is uniquely defined by the Gibbs dividing surface, where $V_{fA}$ and $V_{fB}$ are chosen such that the shaded areas are of the same size.}
\label{fig:AreasInterface}
\end{figure}

The liquid-gas interface can be interpreted as the dividing surface between the two volumes $V_A\cup V_{fA}$ and $V_B \cup V_{fB}$ (see also Fig. \ref{fig:AreasInterface}). Then, we define $\Omega_{A}$ as the grand potential of a system with volume $|V_A\cup V_{fA}|$ and with density $n_A$. $\Omega_B$ is defined analogously. Then, the bulk contribution $\Omega_{Bulk}$ to this system can be written as a sum of these two quantities:
\begin{align}
\Omega_{Bulk} = \Omega_A + \Omega_B
\qquad \text{where} \qquad
\begin{array}{l}
\Omega_A = -p\klamm{n_A} |V_A \cup V_{fA}| \\
\Omega_B = -p\klamm{n_B} |V_B \cup V_{fB}| 
\end{array}.
\end{align} 
Note that $\Omega_{Bulk}$ is not the grand potential of a sharp interface system, but the bulk contribution of two volumes of the sizes of $V_A\cup V_{fA}$ and $V_B\cup V_{fB}$, respectively. The difference to a sharp interface system is that long-range forces between the two volumes are not included in $\Omega_{Bulk}$.

However, the volumes $V_{fA}$ and $V_{fB}$ are not uniquely defined, as the density profile is smooth. The choice of the volumes can be restricted by imposing the condition that $\Omega_{ex}$ corresponds to the excess free energy per unit area. We get:
\begin{align}
\Omega_{ex} = \gamma A &= \Omega - \Omega_A - \Omega_B \label{eq:StatMech_SurfaceTensionDef} \\
&= \klamm{F - F_A - F_B} - \mu \klamm{N - N_A - N_B} \notag\\
 &= F_S - \mu N_S, \notag
\end{align}
where $F_S$ is the surface excess Helmholtz free energy and $N_S$ is the surface excess number of particles. Here, we have used that the grand potential can be written as $\Omega = F - \mu N$. In order to obtain $\Omega_{ex} = F_S$, the excess surface number of particles $N_S$ has to vanish. This yields:
\begin{align}
N &= N_A + N_B\\
\int n\ofR d{\bf r} &= n_A |V_A \cup V_{fA}| +  n_B |V_B \cup V_{fB}|\\
\Leftrightarrow \qquad \int_{V_{fA}} \klamm{n_A - n\ofR} d{\bf r} &=  \int_{V_{fB}}  \klamm{ n\ofR - n_B } d{\bf r},
\end{align}
which is the defining equation for the position of the surface. 

In order to link the grand potential of the system as defined in (\ref{eq:Uniform_Omega}) with the surface energy, we insert  (\ref{eq:Uniform_Omega_def}) in (\ref{eq:Uniform_Omega}) and get
\begin{align}
\Omega[n] = 
- \int p(n\ofR) d{\bf r}+ 
\frac{1}{2} \iint n\ofR \klamm{n\ofRD - n\ofR } \phi_{attr} \klamm{|{\bf r'}-{\bf r}|} d{\bf r}' d{\bf r}
 + \int n\ofR V \ofR  d{\bf r} .
\end{align}
The value of the grand potential as defined above depends on the size of the bulk volumes $V_A$ and $V_B$. In order to remove this inconsistency, we consider instead the excess grand potential defined by (\ref{eq:DefExcessOmega_Gamma})
\begin{align}
\Omega_{ex}[n] = \gamma A = 
&\Omega[n] - |V_{A} \cup V_{fA}| \klamm{ p\klamm{n_A}} - |V_{fB} \cup V_B| \klamm{ p\klamm{n_B}} \notag\\
= &- \int_{V_A \cup V_{fA}} \klamm{p(n\ofR)-p\klamm{n_A}} d{\bf r} 
- \int_{V_{fB} \cup V_B} \klamm{p(n\ofR)-p\klamm{n_B}} d{\bf r}+ \notag\\
&+\frac{1}{2} \iint n\ofR \klamm{n\ofRD - n\ofR } \phi_{attr} \klamm{|{\bf r'}-{\bf r}|} d{\bf r}' d{\bf r}
 + \int n\ofR V \ofR  d{\bf r}.
\label{eq:Def_ExcessGrandPotential}
\end{align}
Remark that the condition $N_S = 0$ only makes sense if a liquid-gas interface is considered. In the case of a solid substrate in contact with a bulk gas phase, it is more useful to use the surface of the substrate as a natural division between the two phases. The excess number of particles is then used as a measure for the amount of adsorbed liquid on the substrate. 

\subsection{Surface Tension of Droplets}
In this work, we study surface effects on planar and on spherical substrates. With respect to a spherical geometry, there have been recent studies which are of some interest for the interpretation of the results of this work and which we summarize briefly. 

In particular, the surface tension of a liquid drop in a bulk gas phase has been studied exceedingly during the past decades. The well-known Kelvin relation 
\begin{align}
\Delta p = \frac{2\gamma_{lg,R}}{R}
\end{align}
establishes a relation between the surface tension and the pressure difference of the liquid inside a droplet and its vapor outside the droplet. The radius $R$ at which this relation holds exactly defines the so-called surface of tension~\cite{Tolman}. However, the surface tension does also depend on the droplet size. Based on the thermodynamical Gibbs relation $d\gamma = -\Gamma d\mu$, where $\Gamma$ is the excess density related to the surface of tension, Tolman showed that the surface tension of a droplet with large radius $R$ depends in the first approximation on the distance $\bar\delta$ between the surface of tension $R_s$ and the Gibbs dividing surface $R_e$, i.e. $\bar \delta = R_e - R_s$. Particularly, Tolman showed that
\begin{align}
\frac{\gamma_{lg,R}}{\gamma_{lg,\infty}} = \frac{1}{1 + 2\bar\delta/R} = 1 - 2 \frac{\bar \delta}{R} + {O}\klamm{\klamm{\frac{\bar\delta}{R}}^2}, \label{eq:SurfaceTensionSpherical_Tolman}
\end{align} 
where $\gamma_{lg,\infty}$ is the surface tension of the plane interface. The parameter $\bar \delta$ is also known as the Tolman-length and was said to be of the range of $0.25$ to $0.6$ times the intermolecular distance of the liquid phase. However, Tolman restricted the validity of formula (\ref{eq:SurfaceTensionSpherical_Tolman}) to droplets of appropriately large sizes. For very small droplets, Tolman expected $\bar \delta$ to depend on the droplet radius $R$. He also questions the method of derivation based on thermodynamic methods~\cite{Tolman}. 

In recent molecular dynamical simulations, Sampayo \emph{et.
al.}~\cite{Sampayo} provide evidence that the macroscopic theory can only be used for droplets with a radius greater than ten times the molecular diameter. For smaller droplets, the effect of the second order energy fluctuation is not negligible in the expansion (\ref{eq:LambdaEpansion_FreeEnergy}) of the free energy. Indeed, for simulations with droplet sizes of five times the molecule diameter, the second-order fluctuation term is of the same order of magnitude as the first order term, with inverse sign \cite{Sampayo}. Consider that in the derivation of the minimal condition (\ref{eq:minCond_dimless}) used in this work, all second-order terms were neglected. Hence, in the sequel calculations for the spherical case are restricted to cases where the liquid-vapor interface appears at radius larger than ten molecule diameters.

\chapter{Density Profiles of Thin Films on a Solid Substrate \label{sec:DensityProfiles}}

We consider solid substrates which are in contact with a fluid. The interaction between a particle of the fluid and a particle of the substrate is described by a Lennard-Jones-Potential: 
\begin{align}
\phi_w\klamm{r} \defi 4 \varepsilon_w \klamm{\klamm{\frac{\sigma_w}{r}}^{12} - \klamm{\frac{\sigma_w}{r}}^6}
\label{eq:Wall_LennardJonesPotential}
\end{align}
with the two supplementary parameters $\sigma_w$ and $\varepsilon_w$, which describe the distance at which the potential vanishes and the depth of the potential, respectively. The external potential induced by the wall at a certain point in the fluid corresponds to the integrated LJ-potential over the wall $W$ times the density $n_w$ of the wall particles. For simplicity, $n_w$ is absorbed in the parameter $\varepsilon_w$ of the LJ-potential such that we obtain the general external potential
\begin{align}
V\klamm{\bf r} \defi \int_{W} \phi_w\klamm{|{\bf r} - {\bf r}'|} d{\bf r}'.
\label{eq:DenProf_GenDefWallPot}
\end{align}

\section{Numerical Methods for One-dimensional Geometries \label{sec:DenProf_Numerics}}

We consider one dimensional geometries which allow a reduction of the minimal condition (\ref{eq:minCond_dimless}) to the following expression 
\begin{align}
\mu_{HS}\klamm{n(z)} +
 \int  n\klamm{ z'} \Phi\klamm{z,z'} dz' 
+  V\klamm{z} -  \mu {\overset ! =} 0  \qquad \klamm{\forall z},
\label{eq:DenProf_1DMinCond}
\end{align}
where exact expressions for $\Phi\klamm{z,z'}$ and $V\klamm{z}$ will be defined with respect to the given geometries. We restrict the computations to an interval $[z_0,z_N]$. This domain is discretized into $N$ intervals of equal length $\Delta z \defi (z_N-z_0) / N$. The integral term is discretized with a trapezoidal rule. It is assumed that to the left of the domain, i.e. for $z < z_0$, the density of the fluid is $n_-$. For $z > z_N$, the density is assumed to be $n_+$. Hence, the above condition transforms to 
\begin{align}
g_i(n_0,\ldots,n_{N})
\defi& 
\mu_{HS}\klamm{n_i}+  n_- \Psi_-\klamm{z_i}+ n_+ \Psi_+\klamm{z_i} + V(z_i) - \mu +\notag\\
&+\frac{\Delta z}{2}\klamm{ n_0 \Phi\klamm{z_0,z_i} + 2\sum_{j=1}^{N-1} n_j \Phi\klamm{z_i,z_j} + n_N \Phi\klamm{z_N,z_i}} \istobe 0 \qquad
\klamm{\forall i = 0,\ldots,N},  \label{eq:g_Discretized}
\end{align}
where
\begin{align}
\Psi_-(z) \defi \int_{z<z_0}  \Phi\klamm{z,z'} dz' \qquad
\text{and} \qquad
\Psi_+(z) \defi \int_{z>z_N}  \Phi\klamm{z,z'} dz' 
\label{eq:DenProf_PsiDef}
\end{align}
are expressions for the influence of the boundary conditions on one particle in the fluid. (\ref{eq:g_Discretized}) gives $N+1$ nonlinear equations for the unknown densities ${\bf n}\defi\klamm{n_0,\ldots n_{N}}^T$. The Jacobian of ${\bf g}\defi\klamm{g_0,\ldots g_{N}}^T$ is 
\begin{align}
\klamm{\bf J}_{ij} = \diff{g_i}{n_j}
=& \delta_{ij} \mu_{HS}'(n_i) +  \frac{\Delta z}{2} \klamm{2 - \delta_{j0} - \delta_{jN}}  \Phi(z_i,z_j),
\label{eq:DenProf_Jacobi_Def}
\end{align}
where $\delta_{ij}$ is the Kronecker-Delta and the derivative of $\mu_{HS}$ is given by
\begin{align*}
\mu_{HS}'(n) = T\klamm{\frac{1}{y} + 2 \frac{4 - y}{(1-y)^4}}\frac{\pi}{6}
\end{align*}
with $y = \frac{\pi}{6} n$.

\paragraph{The Newton Method}
In order to solve (\ref{eq:g_Discretized}), a Newton scheme can be applied. In each iteration $k$, the linear system of equations 
\begin{align}
{\bf J} \cdot \Delta {\bf n}^{k} = - {\bf g}\klamm{\bf n^k}
\label{eq:DenProf_NewtonStep}
\end{align}
is solved using a LU decomposition method, where ${\bf J}$ is computed with respect to ${\bf n}^k$. When solving (\ref{eq:DenProf_NewtonStep}), one can make use of the structure of ${\bf J}$. The interaction potential $\Phi$ in (\ref{eq:DenProf_Jacobi_Def}) shows a fast decay with increasing distance of the diagonal (see also Sec.\ref{sec:Planar_AnalyticalExpr} and Sec.~\ref{sec:Sphere_AnalyticalExpr}). Hence, we set all off-diagonal elements to zero for which $|z_i-z_j| > 5$, where the cutoff of $5$ is an adjustable parameter. Doing this, we obtain a sparse system of linear equations, which is considerably easier to solve than the full system. 

Additionally, the singularities of ${\bf g}$ in zero and $6/\pi$ due to the hard sphere chemical potential $\mu_{HS}(n)$ (see Eq.(\ref{eq:Mu_HS_Def})) require some extra attention. They are accounted for by rescaling the vector $\Delta {\bf n}^{k}$ such that
\begin{align*}
{\bf n}^{k+1} = {\bf n}^k + \lambda \Delta {\bf n}^{k},
\end{align*}
is bounded to $[0,6/\pi]$, where $\lambda\in (0,1)$. The rescaling is done using the following short algorithm:
\begin{alltt}
\(\lambda\)=1.;
for(i=0;i<N;i++)\{
   if( \( \nk\)[i] + \( \Deltan\)[i] <= 0)
      \(\lambda\) = min( -\(\nk\)[i]/\(\Deltan\)[i], \(\lambda\));
   if(  \( \nk\)[i] + \( \Deltan\)[i] >= 6/\(\pi\) )
      \(\lambda\) = min( (6/\(\pi\) - \( \nk\)[i])/\(\Deltan\)[i] , \(\lambda\));
\}
if(\(\lambda\) < 1.)
   \(\lambda\) = \(\lambda\) * (1.-1.e-14);	

for(i=0;i<N;i++)
   \( \nkpe\)[i] = \( \nk\)[i] + \(\lambda\)*\( \Deltan\)[i];
\end{alltt}

\paragraph{Modified Newton Method}
The main difficulty when using the Newton method is that in each iteration the full linear system of equations (\ref{eq:DenProf_NewtonStep}) has to be solved for $N+1$ variables. One way to avoid this costly computation is to make use of the structure of the Jacobi matrix ${\bf J}$. In (\ref{eq:DenProf_Jacobi_Def}) it can be seen that the diagonal term of ${\bf J}$ is always the greatest term. This is because $\mu_{HS}'(n)$ can be assumed to be greater or equal one, whereas the grid size $\Delta z$ is usually smaller than $0.1$. In the modified Newton method, the Jacobian is approximated by its diagonal. In this case, the iteration simplifies to:
\begin{align*}
\Delta {n}^{k+1}_i &= -  \frac{{g_i}\klamm{\bf n^k}}{\tilde g_i '}\\
\text{ where }\qquad \tilde g_i ' &=\mu_{HS}'(n_i) + \Delta z \Phi_{\text{Pla}}(0)
\end{align*}
and $g_i$ as defined in (\ref{eq:g_Discretized}). Again, the singularities of $\mu_{HS}'(n_i)$ are considered using the algorithm described above.

%
%
%

\section{Density Profiles for a Thin Film on a Planar Wall \label{sec:DenProf_PlanarWall}}
A planar wall $W = \mathbb{R}^2\times \mathbb{R}^-$ suggests symmetry in the two directions parallel to the wall. We set the density $n(z)$ for negative $z$ to zero. Hence, the minimal condition (\ref{eq:DenProf_1DMinCond}) can be written as
\begin{align}
\mu_{HS}\klamm{n(z)} +
 \int_0^\infty  n\klamm{ z'} \Phi_{\text{Pla}}\klamm{|z - z'|} dz' 
+  V_{\text{Pla}}\klamm{z} -  \mu {\overset ! =} 0  \qquad \klamm{\forall z > 0}.
\label{eq:DenProf_MinCond_PlanarWall}
\end{align}
$\Phi_{\text{Pla}}(z)$ is the attractive interaction potential between a point in the fluid and a plane at distance $z$. For completeness, we also give an expression for the excess grand potential (\ref{eq:Def_ExcessGrandPotential}) per unit area. In the case of a liquid-gas interface, this yields
\begin{align}
\gamma_{lg,\infty}[n(\cdot)] \defi& - \int_{-\infty}^{z_G} \klamm{ p(n(z)) - p(n_l)} dz
- \int_{z_G}^\infty \klamm{ p(n(z)) - p(n_g)} dz+\notag\\
&+ \frac{1}{2} \int_0^\infty \int_{-\infty}^\infty n(z) \klamm{ n(z') - n(z) } \Phi_{\text{Pla}}(|z-z'|) dz' dz,
\label{eq:DenProf_LGSurfaceTension}
\end{align}
where $z_G$ is the position of the Gibbs dividing surface. $\gamma_{lg,\infty}$ is also known as the surface tension of the interface. In the case of a planar wall, the surface of the wall at $z=0$ is used as a natural division between the gas bulk phase and the substrate (see also Sec.~\ref{sec:SurfaceTensionExcessGP}). This yields:
\begin{align*}
\gamma_{wall,\infty}[n(\cdot)] \defi& - \int_0^\infty \klamm{ p(n(z)) - p(n_g)} dz
+ \frac{1}{2} \int_0^\infty \int_{-\infty}^\infty n(z) \klamm{ n(z') - n(z) } \Phi_{\text{Pla}}(|z-z'|) dz' dz +\ldots\\
&+ \int_0^\infty n(z) V_{\text{Pla}}(z) dz
\end{align*}
$\gamma_{wall,\infty}$ is also referred to as the surface energy of the wall.

\subsection{Analytical Expressions\label{sec:Planar_AnalyticalExpr}}
\paragraph{Interaction Potential}

Integrating the attractive interaction potential between one point and all points of a plane at distance $z$ leads to the exact expression for $\Phi_{\text{Pla}}$:
\begin{align}
\Phi_{\text{Pla}}\klamm{z} :&= \iint \phi_{attr}\klamm{\sqrt{x'^2+y'^2+z^2}} dx' dy'  \notag\\
&= 2\pi \int_0^{\infty} \phi_{attr}\klamm{\sqrt{r^2 + z^2}} rdr \notag\\
&{\overset {R = \sqrt{r^2+z^2}} =} 2\pi \int_z^\infty \phi_{attr} \klamm{R}  R dR\notag\\
&
{\overset{(\ref{eq:PerturbationPotential_dimless})}=}
 \left\{
\begin{array}{ll}
-\frac{6}{5}\pi & \text{ if } z < 1\\
4\pi\klamm{\frac{1}{5 z^{10}}-\frac{1}{2z^4}} & \text{ if } z > 1
\end{array}
\right. .\label{eq:Def_Phi}
\end{align}

\paragraph{Influence of Boundary Conditions}
The definition of the interaction potential (\ref{eq:Def_Phi}) in the planar case, together with (\ref{eq:DenProf_PsiDef}), leads to 
\begin{align*}
\Psi_-(z) = \Psi_{\text{Pla}}\klamm{|z-z_0|} 
\qquad \text{and}\qquad
\Psi_+(z) = \Psi_{\text{Pla}}\klamm{|z-z_N|},
\end{align*}
where 
\begin{align}
\Psi_{\text{Pla}}(z) \defi&  \int_z^\infty \Phi_{\text{Pla}}(z') dz' \notag\\
=& \left\{
\begin{array}{ll}
4\pi  \klamm{{\frac{1}{45z^9}}-{\frac{1}{6z^3}}} & \text{ for } z > 1\\
-\frac{16}{9} \pi + \frac{6}{5} \pi z & \text{ for } z \leq 1\\
\end{array} \right.
\label{eq:1D_Numerics_Vn}
\end{align}

\paragraph{The Wall Potential}
Following the general definition (\ref{eq:DenProf_GenDefWallPot}) for a wall potential, we get:
\begin{align}
V_{\text{Pla}}(z) :&= \int_{z' < 0} \phi_w\klamm{
\sqrt{ x'^2 + y'^2 + (z-z')^2 }
} dx' dy' dz' \notag\\
& =
4{\pi \varepsilon_w \sigma_w^3 }
\klamm{{ \frac{1}{45} \klamm{\frac{\sigma_w}{z}}^9}- 
\frac{1}{6}\klamm{\frac{\sigma_w}{z}}^3}.\label{eq:WallPotential}
\end{align}
The wall is assumed to be non-penetrable, i.e. $V(z) = \infty$ for negative $z$. 

\paragraph{The Adsorption}
The excess number of particles per unit area is also referred to as adsorption $\Gamma$. It is defined by:
\begin{align}
\Gamma[n(\cdot)] \defi \int_0^\infty \klamm{n(z) - n_g} dz.
\label{eq:PlanarAdsorption}
\end{align}

\subsection{Numerical Results \label{sec:DensityProfiles_PlanarWall}}

\paragraph{The Liquid-Gas Interface}
We compute the density profile of a liquid-gas interface at equilibrium chemical potential. For this, we set $n_- = n_l$ and $n_+ = n_g$ in Eq.(\ref{eq:g_Discretized}), whereas the external potential is set to zero. In Fig.\ref{fig:LiquidGasInterface_Profiles}, we show density profiles at different temperatures. At high temperatures close to the critical point, the profiles are very smooth. For low temperatures, the density profiles become steeper.

\begin{figure}[ht]
\centering
\includegraphics{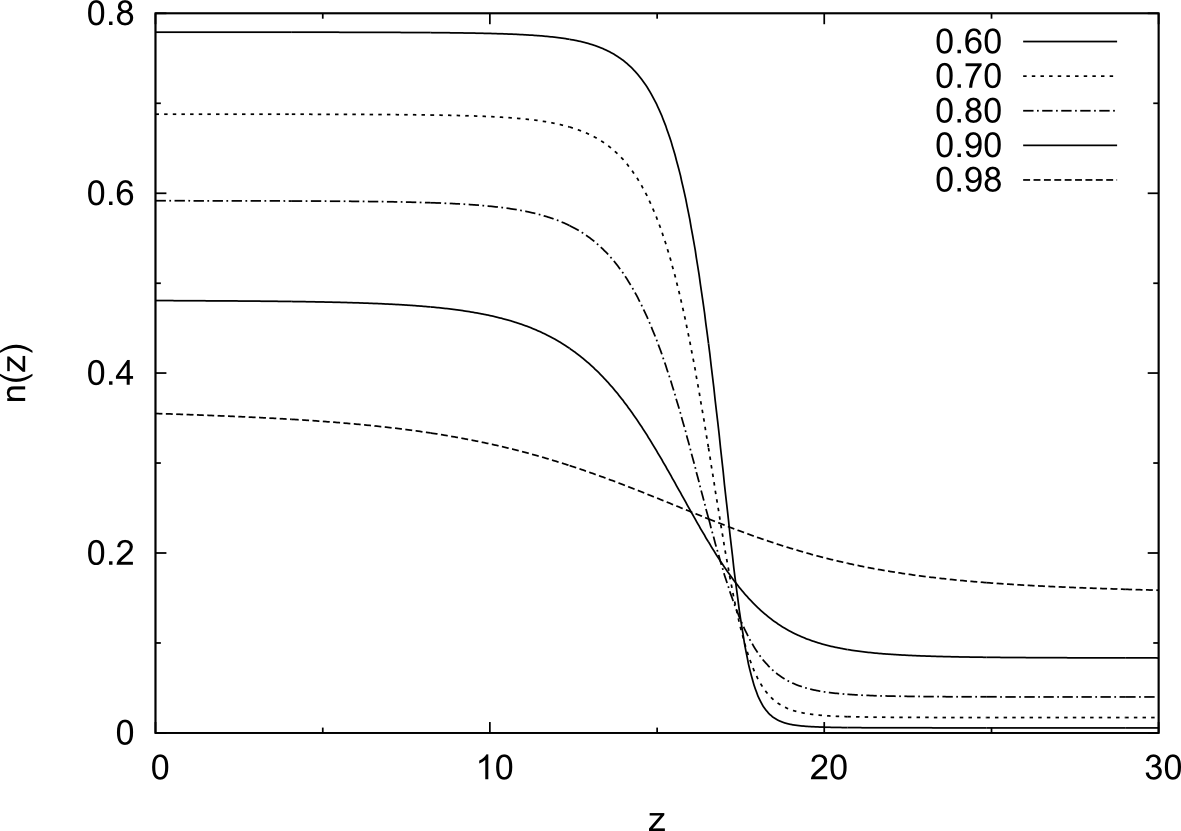}
\caption{Plots of density profiles of liquid-gas interfaces at different temperatures.}
\label{fig:LiquidGasInterface_Profiles}
\end{figure}

We compute the liquid-gas surface tension by means of (\ref{eq:DenProf_LGSurfaceTension}). In Fig. \ref{fig:SurfaceTensionData}, these results are compared to experiments, molecular dynamical simulations and
other DFT computations. Here, the surface tension is plotted against $T/T_c$. This leads to a good agreement, as errors in the critical region are avoided~\cite{Toxvaerd}.

\begin{figure}[pht]
\centering
\includegraphics[width=10cm]{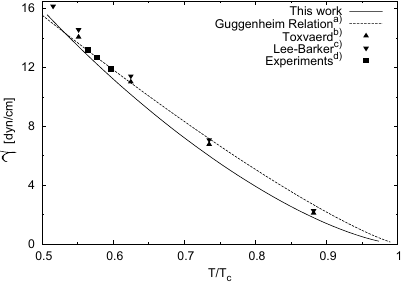}
\caption{Plots of surface tension as a function of dimensionless temperature
$8T/T_c$. Solid line: our model; dashed line: fit to
equation $\gamma(T) = \gamma_0 \klamm{1- {T}/{T_c}}^{1 + r}$ by
Guggenheim~\cite{Guggenheim}. The resulting coefficients are
$\gamma_0 = 36.31 \text{dyn}/\text{cm}$ and $r = \frac{2}{9}$;
triangles (up): computational results by Toxvaerd for a 12-6 LJ fluid
using the Barker-Henderson perturbation theory~\cite{BarkerHenderson2} with the Percus-Yevick solution~\cite{ThroopBearman} for the hard-sphere reference fluid and using the exact
hard sphere diameter~\cite{Toxvaerd}; triangles (down): Monte Carlo
simulations by Lee and Barker~\cite{LeeBarker}; squares:
experimental results by Guggenheim~\cite{Guggenheim}}
\label{fig:SurfaceTensionData}
\end{figure}

\begin{figure}[ht]
\centering
\includegraphics[width=10cm]{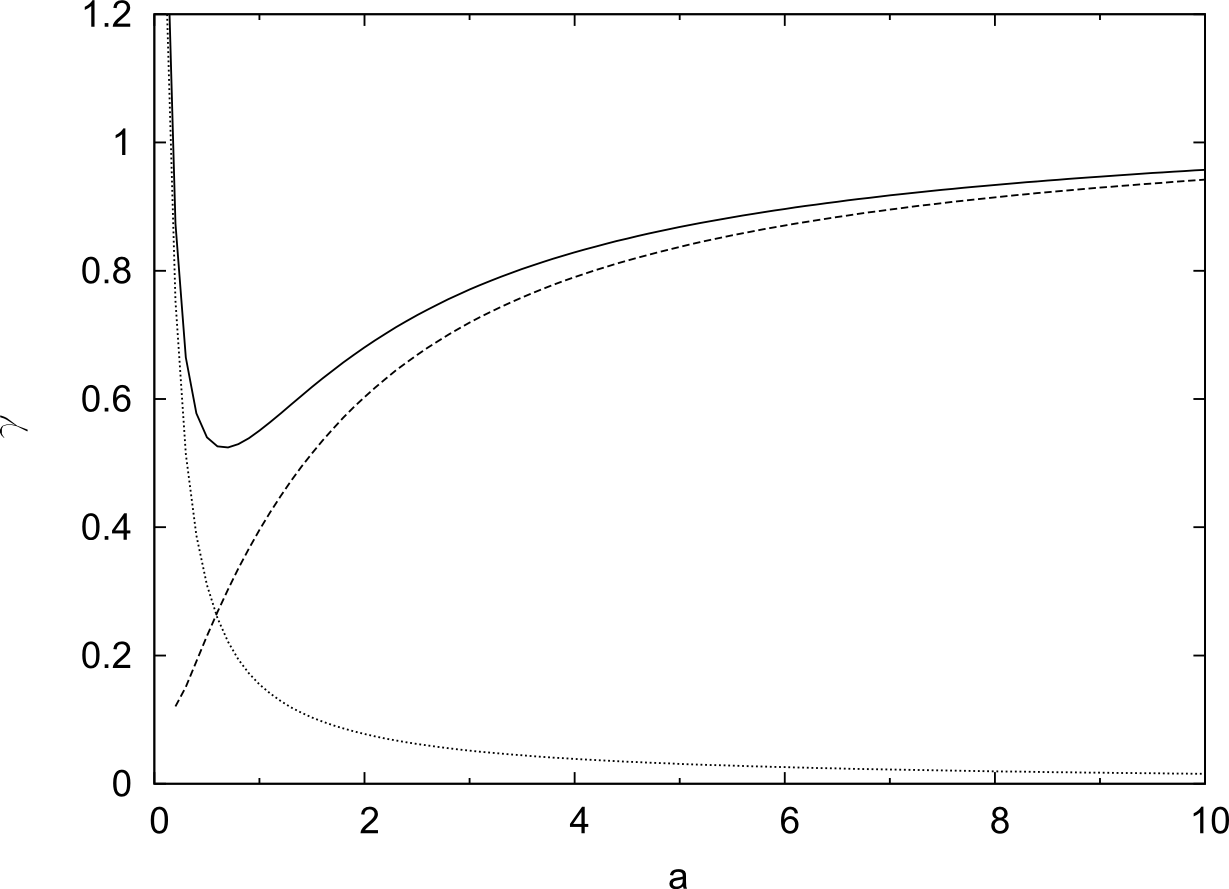}
\caption{Surface tension (\ref{eq:DenProf_LGSurfaceTension}) of the density profile (\ref{eq:Toxvaerd_DensityProfile}) for $T = 0.7$ as a function of the steepness-parameter $a$. The
dashed line represents the contribution from cohesion/ attractive forces to the surface tension, whereas the dotted line represents contributions from the ideal and repulsive terms of the free energy.}
\label{fig:1DInterface_SurfaceTensionVsSteepness}
\end{figure}
 
The density profiles at temperatures up to $T = 0.7$ in Fig.~\ref{fig:LiquidGasInterface_Profiles} suggest an approximation of the liquid-gas interface by a step function, often also called sharp-interface approximation. In order to check if this approximation is appropriate for the calculation of the surface tension, we impose the following analytic expression for the density profile:
\begin{align}
n_{a}(z) = \frac{n_{l}- n_{g}}{2}  \tanh \klamm{ a  \klamm{ \frac{L}{2} - z }} + \frac{n_{l} + n_{g}}{2},
\label{eq:Toxvaerd_DensityProfile}
\end{align}
where $L$ is the length of the domain and $a$ is a parameter for the steepness of the profile. This approximation was introduced by 
Toxvaerd~\cite{Toxvaerd} as the best analytical trial function for the density distribution of the interface. In Fig.~\ref{fig:1DInterface_SurfaceTensionVsSteepness}, the surface tension (\ref{eq:DenProf_LGSurfaceTension}) of $n_{\alpha}(z)$ is plotted versus the steepness parameter $a$ at $T = 0.7$. The minimum of this graph is at $a = 0.7$ and $\gamma_{lg,\infty}[n_{0.7}(\cdot)] = 0.524$. This value is slightly higher than the value obtained if the minimal condition is solved for the full density profile ($\min \gamma_{lg,\infty}[n(\cdot)] = 0.518$). 

For very steep profiles, we expect the surface tension to approach the surface tension of the sharp interface approximation, which is obtained by setting the volume $V_f$ of the interface in (\ref{eq:DFTApplication_DensityDistributionForm}) to zero. This yields for a planar interface
\begin{align*}
\gamma_{lg,\infty}^{SIA} &= 
- \frac{(n_l-n_g)^2}{2} \int_{-\infty}^0 \int_0^\infty \phi_{attr}(|z-z'|) dz' dz\\
&= \frac{3}{4}\pi (n_l-n_g)^2
\end{align*}
At $T= 0.7$, this is $\gamma_{lg,\infty}^{SIA}(T=0.7) = 1.061$, which is close to the surface tension $1.03$ of a very steep $\tanh$-profile (with a steepness-parameter $a=30$). Hence, the sharp-interface surface tension is almost twice the value of the exact surface tension. We conclude that the SIA is not an appropriate method for the calculation of surface tensions.

\paragraph{The Wall-Fluid Interface}

At small distance $z$ from the wall, the particle density $n(z)$ can be approximated by an analytical function. For this, we consider the minimal condition (\ref{eq:DenProf_MinCond_PlanarWall}) for a planar wall. The second term is bounded  by
\begin{align*}
\int_{0}^{\infty} n(z')\Phi_{\text{Pla}}\klamm{|z'-z|} dz' < 
\int_{-\infty}^{\infty} n(z')\Phi_{\text{Pla}}\klamm{|z'-z|} dz'
< 2 \alpha
\end{align*}
where $\alpha$ is defined in (\ref{eq:Uniform_def_alpha}). Remark that for small $z$, the wall potential (\ref{eq:WallPotential}) goes to infinity as $z^{-9}$. Hence, we can write
\begin{align*}
\mu_{HS}(n(z)) + V_{\text{Pla}}(z) + O\klamm{1} &= 0 \qquad \text{ for }\quad z \to 0
\end{align*}
We conclude that the hard-sphere chemical potential $\mu_{HS}$ has to equilibrate the external potential for small $z$. Furthermore, the external potential is positive for $z \to 0$. We now have a closer look at the hard-sphere chemical potential (\ref{eq:Mu_HS_Def}). The algebraic term is positive for $z > 0$. Hence, the external potential is equilibrated by the logarithm in $\mu_{HS}$, which means that the density $n$ has to go to zero as we approach the wall. 

It is easy to show that the algebraic term of $\mu_{HS}$ vanishes with $O(n)$ as $n \to 0$. We conclude that
\begin{align*}
\ln \klamm{\frac{\pi}{6}n(z)} = -\frac{1}{T} V_{\text{Pla}}(z) + O\klamm{1} \qquad \text{ for }\quad z \to 0,
\end{align*}
which finally leads to the prediction 
\begin{align}
n(z) \approx \exp \klamm{ -4\frac{\pi \varepsilon_w \sigma_w^3 }{T}
\klamm{{ \frac{1}{45} \klamm{\frac{\sigma_w}{z}}^9}- 
\frac{1}{6}\klamm{\frac{\sigma_w}{z}}^3}} \qquad \text{ for } \quad z \to 0.
\label{eq:DenProf_ClosetoWallPrediction}
\end{align}
In Fig. \ref{fig:CloseToWall}, this prediction is compared with numerical computations. It shows very good agreement in the relevant range. 

\begin{figure}[ht]
\centering
\includegraphics{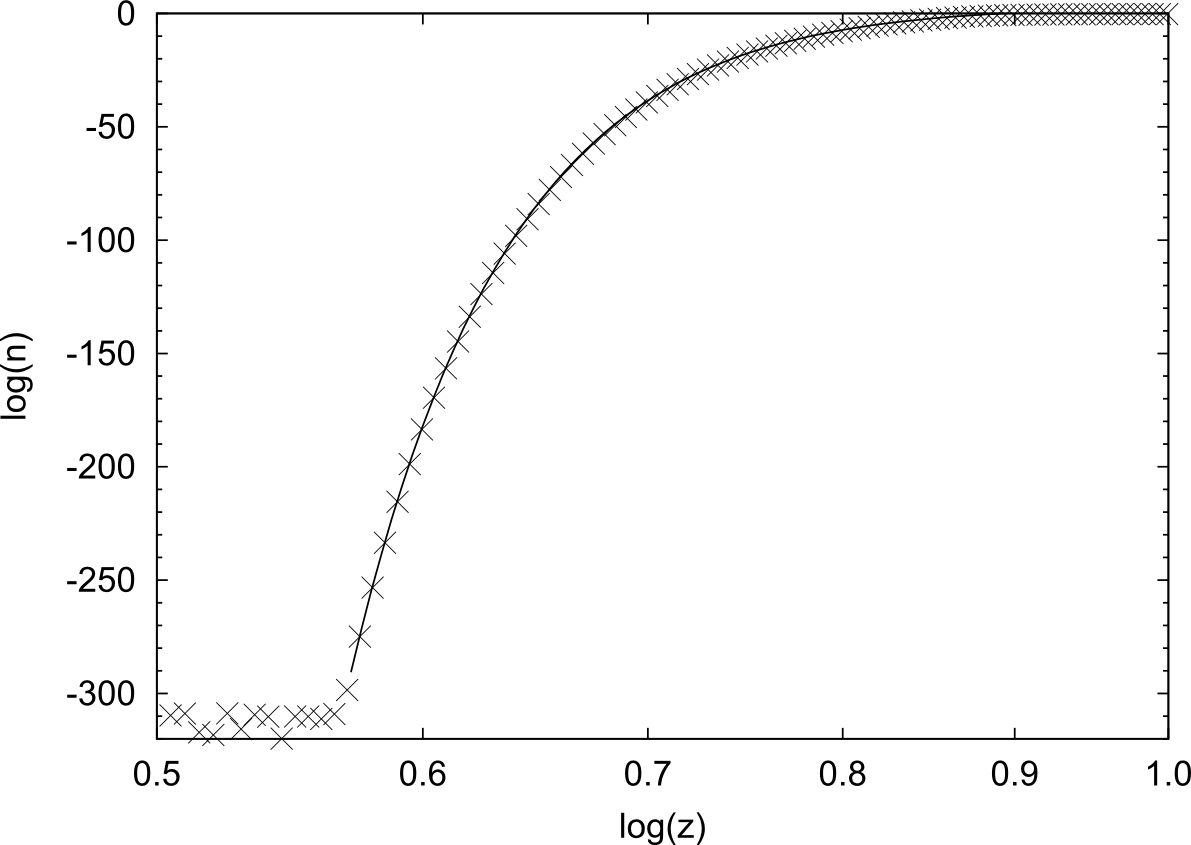}
\caption{Log-Log plot of the particle density close to a wall over the distance $z$ from the wall at $T = 0.7$ and $\Delta \mu = -0.01$. The wall parameters are $\varepsilon_w = 0.8$ and $\sigma_w = 1.25$. The crosses are results of a numerical computation. The solid line is the prediction (\ref{eq:DenProf_ClosetoWallPrediction}). It shows a very good agreement for distances greater than $0.57$ from the wall. For values closer to the wall, the numerical accuracy breaks down.}
\label{fig:CloseToWall}
\end{figure}
\paragraph{A Liquid Film on a Planar Wall}
Fig.~\ref{fig:1DProfile} depicts the density profile of a thin film on a planar wall. It is compared with the density profile of a liquid-gas interface at saturation without external potential. Both profiles show a very good agreement. However, the density of the liquid film is slightly larger than the liquid bulk density close to the wall. It decays with increasing distance from the wall. Close to the wall, the density goes to zero. A more detailed analysis of the behavior close to the wall is given in Eq.~\ref{eq:DenProf_ClosetoWallPrediction}. Furthermore, the density profile shows one oscillation at a distance of some molecule diameters of the wall. This oscillation is due to the hard-sphere reference fluid. Further oscillations are suppressed by the local Carnahan-Starling model which is used in the present work. If a non-local model for the hard-sphere fluid is used, more detailed oscillations close to the wall could be observed (see also Sec.\ref{sec:ModelsHardSphere}). 
\begin{figure}[ht]
\centering
\includegraphics{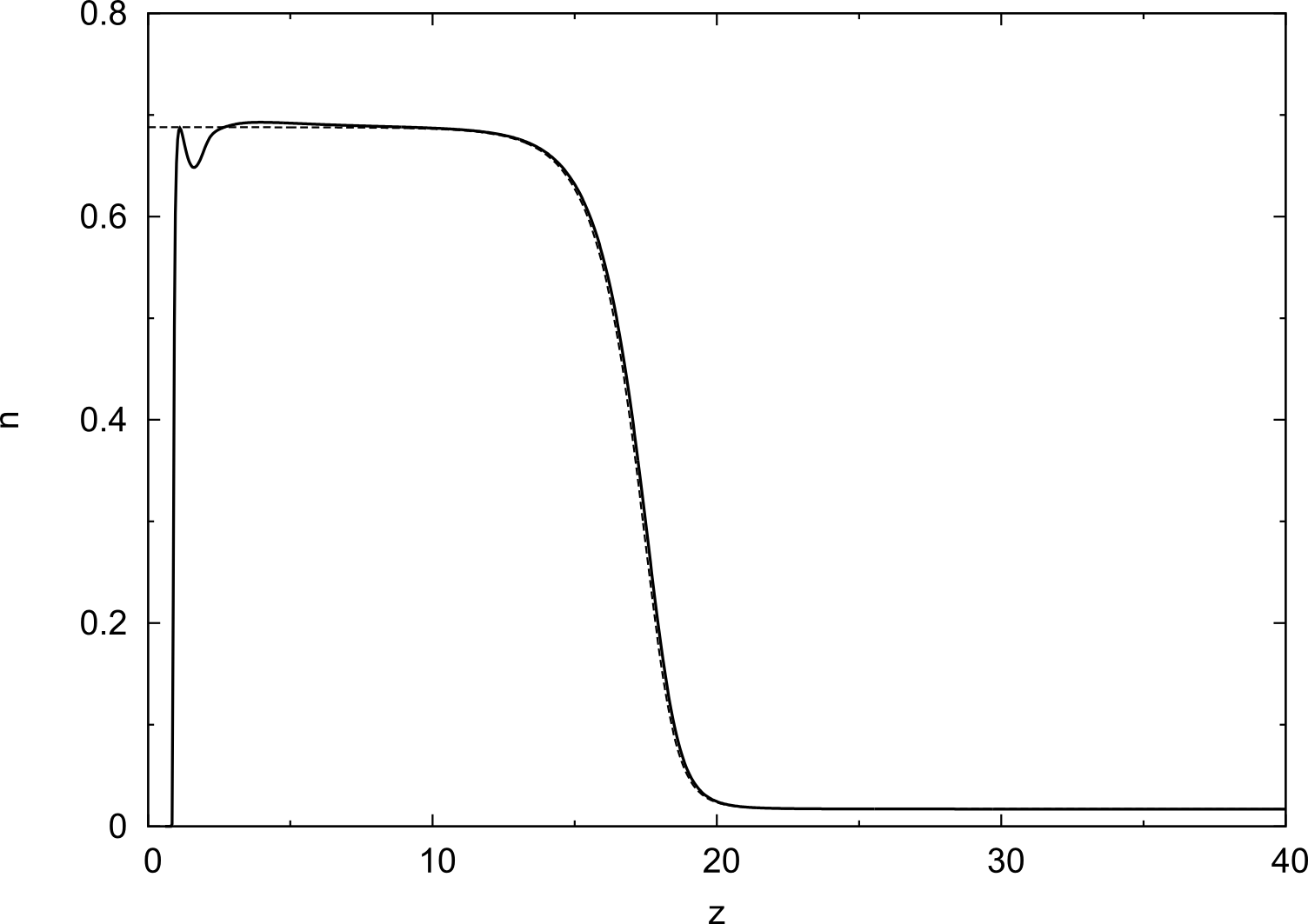}
\caption{Plot of density $n$ over the distance of the wall $z$ for a planar wall with parameters $\varepsilon_w = 0.8$ and $\sigma_w = 1.25$, at temperature $T= 0.7$ (solid line). The deviation of the chemical potential from saturation is $\Delta \mu = -0.001$. The dashed line is the density profile of a liquid-gas interface at saturation without external potential.}
\label{fig:1DProfile}
\end{figure}

\section{Density Profiles for a Thin Film on a Sphere \label{sec:DenProf_Sphere}}
We consider the case of a spherical wall $W = \{ {\bf r} \in \mathbb{R}^3:  |{\bf r}| < R \}$ with radius $R$. The symmetry of this system suggests rotational invariance in the two angular variables in spherical coordinates. The extremal condition (\ref{eq:minCond_dimless}) then reduces to 
\begin{align}
\mu_{HS}\klamm{n(r)} +
 \int_R^\infty  n\klamm{ r'} \Phi_{sph}\klamm{r,r'} dr' 
+  V_{sph,R}\klamm{r} -  \mu {\overset ! =} 0  \qquad \klamm{\forall r > R}.
\label{eq:DenProf_MinCond_SphericalWall}
\end{align}
$\Phi_{Sph}(r,r')$ is the attractive interaction potential between a point at distance $r$ from the origin and the surface of a sphere with radius $r'$. $V_{sph,R}(r)$ is the external potential induced by a sphere of radius $R$. 

\subsection{Analytical Expressions \label{sec:Sphere_AnalyticalExpr}}
\paragraph{Interaction Potential}
The interaction potential between a point at distance $r$ from the origin and the surface of a sphere with radius $r'$ can be written in spherical coordinates as
\begin{align*}
\Phi_{sph}\klamm{r,r'} :&= r'^2  \int_0^{2\pi}  \int_0^\pi \phi_{attr}\klamm{|{\bf r'}- {\bf r}|} \sin \vartheta ' d\vartheta ' d\varphi ',
\end{align*}
where ${\bf r}$ is any point at distance $r$ from the origin. This definition is consistent due to the rotational invariance of the expression. Hence, we set for simplicity ${\bf r} = \klamm{0,0,r}^T$ in cartesian coordinates. Consequently, one gets
\begin{align}
\Phi_{sph}\klamm{r,r'} 
&= 2 \pi r'^2 \int_0^\pi \phi_{attr}\klamm{ \sqrt{r^2 - 2rr'\cos \vartheta ' + r'^2}} \sin \vartheta ' d\vartheta '\notag\\
&= \pi \frac{r'}{r} \int_{(r-r')^2}^{(r+r')^2} \phi_{attr}\klamm{ \sqrt{t}} dt \qquad \text{ where } t = r^2 - 2rr'\cos \vartheta ' + r'^2\notag\\
&=  4 \pi\frac{r'}{r}
\left\{
\begin{array}{ll}
 \int_{(r-r')^2}^{(r+r')^2} dt \klamm{
\frac{1}{t^6}
-\frac{1}{t^3}
} & \text{ if } |r-r'|>1\\
\int_{1}^{(r+r')^2} dt \klamm{
\frac{1}{t^6}
-\frac{1}{t^3}
} & \text{ if } |r-r'|<1, |r+r'|>1\\
0 & \text{ else }\\
\end{array}
\right.
\notag\\
&= 
4 \pi\frac{r'}{r}
\left\{
\begin{array}{ll}
\klamm{
\frac{1}{5}\klamm{
\frac{1}{(r-r')^{10}}
-\frac{1}{(r+r')^{10}}
}
+
\frac{1}{2}\klamm{
\frac{1}{(r+r')^4}
-\frac{1}{(r-r')^4}
}
} & \text{ if } |r-r'|>1 \\
\klamm{
-\frac{1}{5}{
\frac{1}{(r+r')^{10}}
} + \frac{1}{2}{
\frac{1}{(r+r')^4}
} - \frac{3}{10}
}& \text{ if } |r-r'|<1 \text{ and } |r+r'|>1\\
0 & \text{ else }
\end{array}
\right.
\label{eq:DenProf_Phi_Spherical}
\end{align}
This can be written in terms of the planar interaction potential (\ref{eq:Def_Phi}) as
\begin{align*}
\Phi_{sph}\klamm{r,r'} = \frac{r'}{r} \klamm{ \Phi_{\text{Pla}}\klamm{r-r'} - \Phi_{\text{Pla}}\klamm{r + r'}}.
\end{align*}

\paragraph{Influence of Boundary Conditions}
The definition of the interaction potential (\ref{eq:DenProf_Phi_Spherical}) in the spherical case, together with (\ref{eq:DenProf_PsiDef}), leads to 
\begin{align}
\Psi_{in,R}(r) :&= \int_0^R \Phi_{sph}(r,r') dr' \notag\\
&= 
\frac{\pi}{3r}
\left\{
\begin{array}{ll}
\frac{1}{30}\klamm{
\frac{r+9R}{(r+R)^9} - \frac{r-9R}{(r-R)^9}
}+ \frac{r-3R}{(r-R)^3} - \frac{r+3R}{(r+R)^3}
& \text{ if }  R + 1 < r\\
 {\frac {27}{10}}\,-{\frac {26}{15}r}-\frac{9}{5}\,\klamm{{
{R}^{2}- \left( r-1 \right) ^{2}}}+{\frac {r+9\,R}
{30 \left( r+R \right) ^{9}}}-{\frac {r+3\,R}{\left( r+R
 \right) ^{3}}} 
& \text{ if } 1 < R < r < R + 1 \\
 -\frac{9}{5} R^2+{\frac {1}{30}}\,{\frac {r+9\,R}{ \left( r
+R \right) ^{9}}}-{\frac {1}{30 r ^ 8}}-{\frac {r+3\,R}{
 \left( r+R \right) ^{3}}}+\frac{1}{r^2}  
& \text{ if } 0< R < r < 1\\
0 &  \text{ if } R = 0\\
\end{array}\right. .
\label{eq:Sph_PsiIn}
\end{align}

The attractive interaction potential induced by the volume outside a sphere with radius $R>1$ is given by:
\begin{align}
\Psi_{out,R}(r) :&= \int_R^\infty \Phi_{sph}(r,r') dr' \notag\\
&= 
-\frac{\pi}{3r}
\left\{
\begin{array}{ll}
\frac{1}{30}\klamm{
\frac{r+9R}{(r+R)^9} - \frac{r-9R}{(r-R)^9}
}+ \frac{r-3R}{(r-R)^3} - \frac{r+3R}{(r+R)^3}
& \text{ if } |r - R| > 1\\
{\frac {27}{10}}+{\frac {26}{15}}r-\frac{9}{5}\klamm{{
{R}^{2}- \left( r+1 \right) ^{2}}} + {\frac {r+9\,R}
{30 \left( r+R \right)^{9}}} - {\frac {r+3R}{\left( r+R
 \right) ^{3}}}
& \text{ else }\\
\end{array}\right. .
\notag
\end{align}

\paragraph{Wall Potential}
Analogously to the computations above, one easily checks that the external potential (\ref{eq:DenProf_GenDefWallPot}) of a hard sphere $W = \{ {\bf r} \in \mathbb{R}^3:  |{\bf r}| < R \}$ with the wall-fluid interaction potential (\ref{eq:Wall_LennardJonesPotential}) is given by
\begin{align}
V_{sph,R}(r) = 
\varepsilon_w \sigma_w^3 \pi 
\frac{\sigma_w}{3r}\klamm{
\frac{\sigma_w^8}{30}\klamm{
\frac{r+9R}{(r+R)^9} - \frac{r-9R}{(r-R)^9}
}+ 
\sigma_w^2\klamm{
\frac{r-3R}{(r-R)^3} - \frac{r+3R}{(r+R)^3}
}} .
\label{eq:Sph_WallPot}
\end{align}
The external potential induced by a cavity $W = \{ {\bf r} \in \mathbb{R}^3:  |{\bf r}| > R \}$ is given by:
\begin{align}
V_{cav,R}(r) =
- \varepsilon_w \sigma_w^3 \pi \frac{\sigma_w}{3r}
\klamm{
\frac{\sigma_w^8}{30}\klamm{
\frac{r+9R}{(r+R)^9} - \frac{r-9R}{(r-R)^9}
}+ 
\sigma_w^2 \klamm{
\frac{r-3R}{(r-R)^3} - \frac{r+3R}{(r+R)^3}
}}.
\notag
\end{align}

\paragraph{The Adsorption}
In the spherical case, the adsorption per unit area of the substrate is defined by:
\begin{align}
\Gamma[n(\cdot)] \defi \int_{R}^{\infty} \klamm{\frac{r}{R}}^2 \klamm{ n(r) - n_g} dr.
\end{align}

\subsection{Numerical Results}

The density profile of a thin film on a spherical wall is depicted in Fig.~\ref{fig:SphProfile}, where it is compared with the density profile of a planar liquid-gas interface at saturation. Similar to the planar case, both profiles are practically indistinguishable far away from the wall. In Fig.~\ref{fig:RToInf} , density profiles for different radius of the substrate are compared with a density profile of a thin film on a planar wall. The film thickness of the thin film on a spherical substrate is less than the film thickness on a planar substrate. It approaches slowly the planar value with increasing radius of the wall.   

\begin{figure}[ht]
\centering
\includegraphics{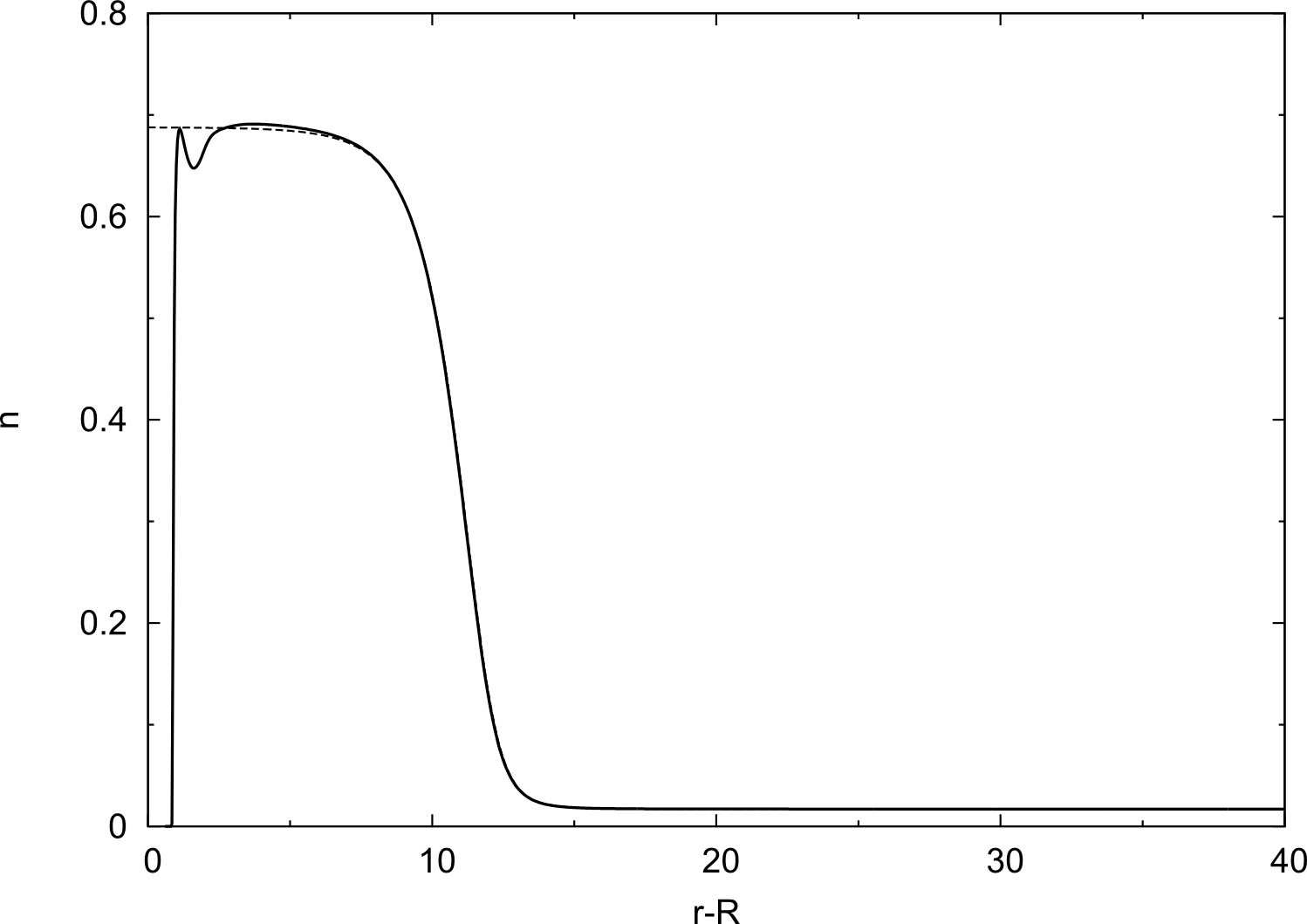}
\caption{Plot of density $n$ over the distance of the wall $r-R$ for a spherical wall with radius $R = 100$ and wall parameters $\varepsilon_w = 0.8$ and $\sigma_w = 1.25$, at temperature $T= 0.7$ (solid line). The deviation of the chemical potential from saturation is $\Delta \mu = -0.001$. The dashed line is the density profile of a liquid-gas interface at saturation without external potential. Far away from the wall, it is practically indistinguishable from the density profile of the thin film.}
\label{fig:SphProfile}
\end{figure}

\begin{figure}[ht]
\centering
\includegraphics{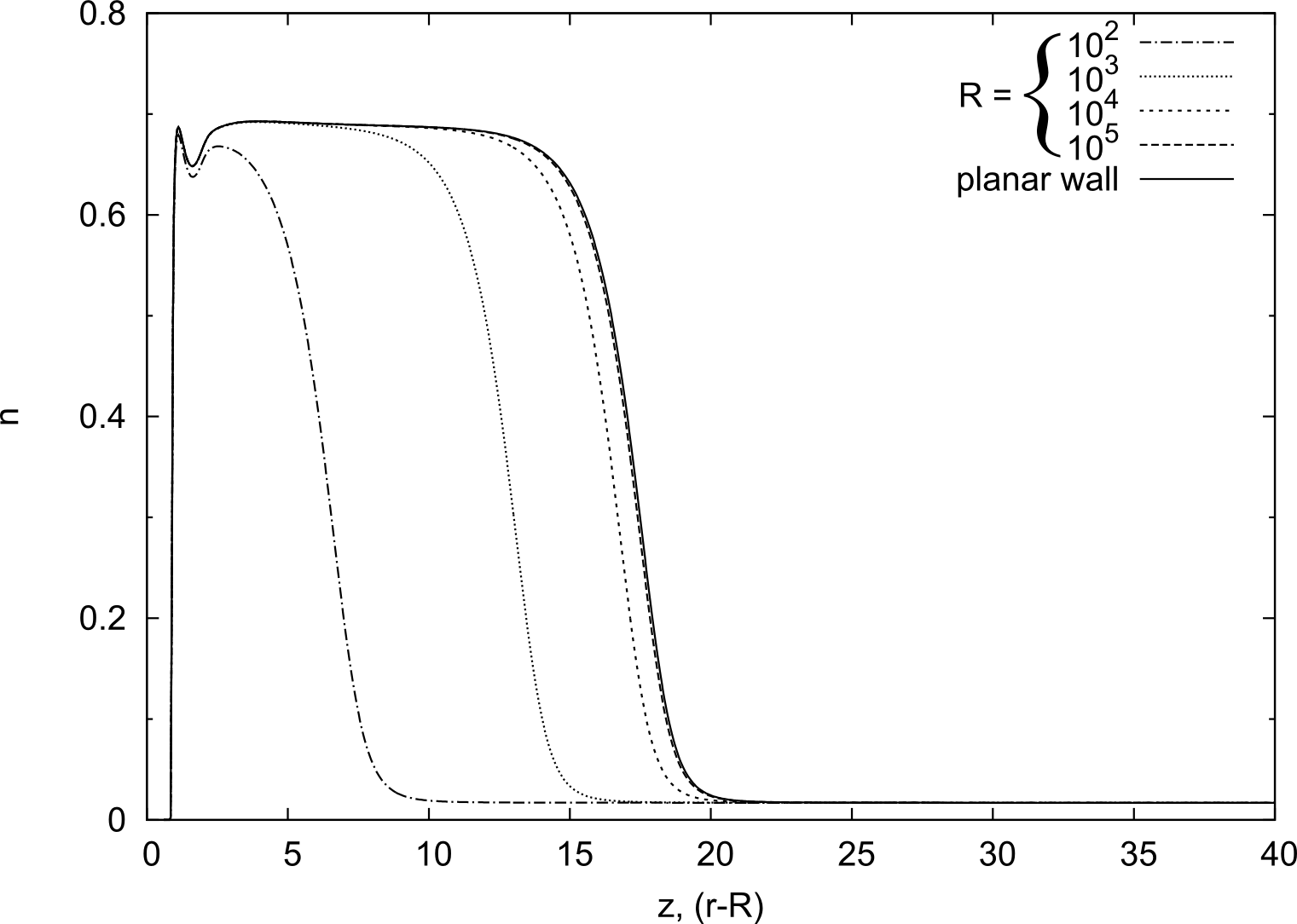}
\caption{Plots of density $n$ over the distance of the wall for spherical walls with varying radius $R$ and for a planar wall at temperature $T= 0.7$. The wall parameters are $\varepsilon_w = 0.8$ and $\sigma_w = 1.25$. The deviation of the chemical potential from saturation is $\Delta \mu = -0.001$.}
\label{fig:RToInf}
\end{figure}

\chapter{Wetting Behavior on a Solid Substrate \label{sec:WettingBehavior}}

In the study of wetting, we are interested in describing the amount of fluid adsorbed on a substrate as a function of the temperature, the chemical potential and the attractiveness of the wall. Analytical methods approximate the density profile $n\ofR$ of the liquid film by a test function $n_\ell\ofR$, where $\ell$ is the thickness of the liquid film. As a result, the grand potential can be written as a function of the film thickness $\ell$: $\Omega(\ell) \defi \Omega[n_\ell\klamm{\cdot}]$. Now, instead of minimizing the functional $\Omega[n\klamm{\cdot}]$ with respect to the density profile $n(\cdot)$, one minimizes $\Omega(\ell)$ with respect to the film thickness $\ell$, thus reducing the complexity of the problem substantially. The main drawback of the analytical methods is that they mainly depend on the quality of the test function $n_\ell\klamm{\cdot}$. In particular, most approximations are not suitable for small film thicknesses. 

This can be avoided, if the variational principle (\ref{eq:minCond_dimless}) is solved for the full density profile $n\ofR$, which can only be done numerically. We present a continuation method which allows to compute the full set of density profiles for a fixed temperature and for a varying chemical potential.
\section{Analytical Methods for the Prediction of Wetting Behavior for One-Dimensional Geometries \label{sec:Wetting_Analytical}}

We present two methods to approximate the grand potential function as a function of the film thickness, namely the sharp-interface approximation (SIA) and the piecewise function approximation (PFA). In particular, we introduce the analytic expressions to calculate these approximations, without specifying the geometry. This allows to apply the given methods on more complex structures beyond the planar or the spherical substrate at a later stage of research. In order to do this, the grand potential will be written in terms of volumes, basically the wall volume $W$, a film volume $V_f$ and a bulk volume $V_B$. At a later stage, the volumes can be parameterized by the film thickness $\ell$, which leads to the grand potential as a function of $\ell$.

We assume a density distribution as in (\ref{eq:DFTApplication_DensityDistributionForm}) where $V_A$ is the volume occupied by the wall.  The fluid density in the wall is zero such that we get
\begin{align*}
n\ofR =
\left\{
\begin{array}{lll}
0 & \text{ if } & {\bf r} \in W\\
n_f\ofR & \text{ if } & {\bf r} \in V_f\\
n_B & \text{ if } & {\bf r} \in V_B
\end{array}
\right. . 
\end{align*}
In Fig. \ref{fig:PartitionSpace}, a typical configuration is shown together with the respective density profile close to a solid substrate.

\begin{figure}[ht]
\centering
\includegraphics[width=10cm]{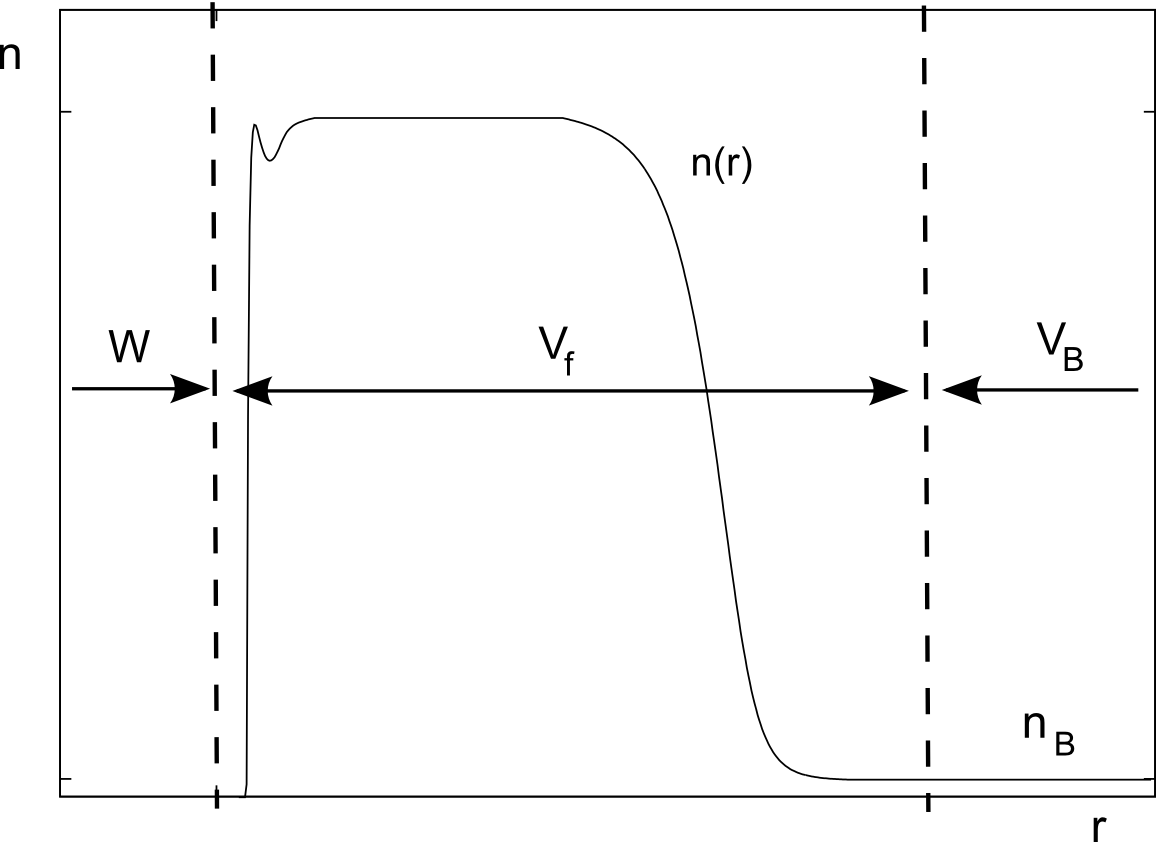}
\caption{Sketch of a partition of the space for a liquid film on a solid substrate. $V_A$ and $V_B$ are the volumes in which uniform density is assumed. $V_f$ is the volume in which computations are executed.}
\label{fig:PartitionSpace}
\end{figure}

\subsection{The Sharp Interface Approximation}
\begin{figure}[ht]
\centering
\includegraphics{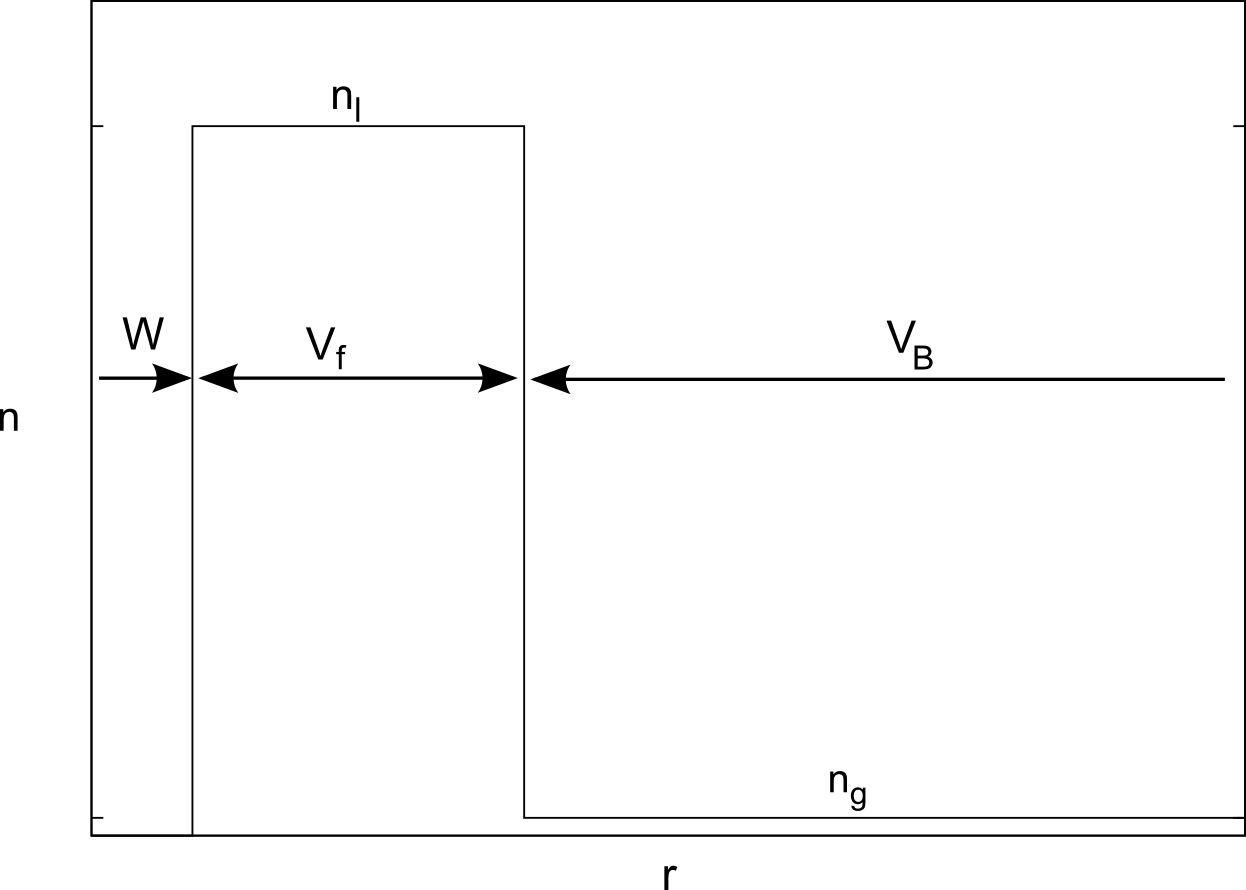}
\caption{Sketch of a density profile of a sharp interface approximation. $W$ is the volume in which the density is assumed to be zero. This includes the wall plus a thin layer around the wall, in which the repulsive forces of the wall dominate.}
\label{fig:SIA_Density}
\end{figure}

In the SIA for a liquid film on a solid substrate, the density in the film volume $V_f$ is assumed to be uniform equal to the bulk liquid density $n_l$. Remark that close to the wall, the film density goes to zero due to the repulsive character of the wall (see also Fig.~\ref{fig:CloseToWall}). This can be taken into account by extending the volume $W$ at which the density vanishes up to a certain distance from the wall. Consider that in this case, $W$ is the volume of the wall plus a thin layer close to the wall (see Fig.~\ref{fig:SIA_Density}). 
\begin{align*}
n^{SIA}\ofR \defi
\left\{
\begin{array}{lll}
0 & \text{ if } & {\bf r} \in W\\
n_l & \text{ if } & {\bf r} \in V_f\\
n_g & \text{ if } & {\bf r} \in V_B
\end{array}
\right. . 
\end{align*}
Making the above assumptions, the excess grand potential (\ref{eq:Def_ExcessGrandPotential}) can be written as follows:
\begin{align}
\Omega_{ex}^{SIA} \defi \Omega_{ex}[n^{SIA}\klamm{\cdot}] 
= &- \int_{V_A} \klamm{p(n\ofR)-p\klamm{n_A}} d{\bf r} 
- \int_{V_B} \klamm{p(n\ofR)-p\klamm{n_B}} d{\bf r}+   \int n\ofR V \ofR  d{\bf r} +\notag\\
&+\frac{1}{2} \iint n\ofR \klamm{n\ofRD - n\ofR } \phi_{attr} \klamm{|{\bf r'}-{\bf r}|} d{\bf r}' d{\bf r}\\
=&   - \Delta p |V_f|
+  n_l \int_{V_f} V\ofR d{\bf r} +  n_g \int_{V_B} V\ofR d{\bf r} + \ldots\notag  \\
& 
- n_l^2 I \klamm{V_f,W}
- \klamm{n_l - n_g}^2 I\klamm{V_f,V_B}
- n_g^2 I \klamm{W,V_B}
 \label{eq:SharpKink_OmegaEx_2}
\end{align}
where $\Delta p = p\klamm{n_l} - p\klamm{n_g}$, $|\cdot|$ is a measure for the volume and $I\klamm{\cdot,\cdot}$ is an operator defined by:
\begin{align*}
I: \mathcal{P}\klamm{\mathbb{R}^3} \times \mathcal{P}\klamm{\mathbb{R}^3} \to \mathbb{R} \quad,\quad
I(A,B) \defi \int_A  \int_B \phi_{attr} \klamm{|{\bf r} - {\bf r}'|} d{\bf r}' d{\bf r}.
\end{align*}
Here, $\mathcal{P}\klamm{\mathbb{R}^3}$ is the powerset of $\mathbb{R}^3$, i.e. the set of all subsets of $\mathbb{R}^3$. The first term in (\ref{eq:SharpKink_OmegaEx_2}) can be directly linked with the deviation of the chemical potential from its saturation value $\mu_{sat}$ at a given temperature. For this, assume that the density and the chemical potential are close to their values at saturation. Then, expanding the pressure (\ref{eq:Uniform_Omega_def}) as a function of the density and the chemical potential around saturation up to terms of first order, gives:
\begin{align*}
- p(n,\mu) =& f_{HS}(n) n + \alpha n^2 - \mu n \\
=& - p\klamm{n_{sat},\mu_{sat}} + \klamm{ \mu_{HS}(n_{sat}) + 2\alpha n_{sat} - \mu_{sat} }\klamm{n-n_{sat}} - n_{sat} \klamm{\mu-\mu_{sat}} + HOT.
\end{align*} 
The term $\mu_{HS}(n_{sat}) + 2\alpha n_{sat} - \mu_{sat}$ vanishes, as $n_{sat}$ is the equilibrium density at $\mu_{sat}$ (see also Eq. (\ref{eq:Uniform_Omega_prime})). $n_l$ and $n_g$ are the liquid and gas bulk densities at $\mu$, whereas $n_{l,sat}$ and $n_{g,sat}$ are the liquid and gas bulk densities at $\mu_{sat}$, respectively. Hence, we obtain
\begin{align}
- p\klamm{n_l,\mu}
 \approx& -p\klamm{n_{l,sat},\mu_{sat}}  - n_{l,sat} \klamm{\mu - \mu_{sat}}\notag\\
\text{and} \qquad - p\klamm{n_g,\mu} \approx& -p\klamm{n_{g,sat},\mu_{sat}} - n_{g,sat} \klamm{ \mu - \mu_{sat}}.\notag
\end{align}
At saturation, the bulk gas pressure equals the bulk liquid pressure (see Eq.(\ref{eq:Uniform_eq})). Hence, the pressure difference can be written as 
\begin{align}
\Delta p = p\klamm{n_l} - p\klamm{n_g} \approx& \klamm{n_{l,sat} - n_{g,sat}}\Delta \mu,
\label{eq:Wetting_ApproxPressureChemPot}
\end{align}
where $\Delta \mu = \mu - \mu_{sat}$. Note that now assumptions on the form of the pressure have been made in order to obtain Eq. (\ref{eq:Wetting_ApproxPressureChemPot}). We now turn our attention to the terms in the last line of Eq. (\ref{eq:SharpKink_OmegaEx_2}), which can be rearranged as follows:
\begin{align*}
- n_l^2 I\klamm{V_f\cup V_B,W}
-\klamm{n_l - n_g}^2 I \klamm{ V_f\cup W,V_B}
+ 2n_l\klamm{n_l - n_g} I \klamm{W,V_B}.
\end{align*}
This leads to the following
\begin{align}
\Omega_{ex}^{SIA} = &  - \Delta \mu \klamm{ n_{l,sat} - n_{g,sat} } |V_f|
+ \Omega_{wl}^{SIA} + \Omega_{lg}^{SIA} + \Omega_B^{SIA}, 
 \label{eq:SharpKink_OmegaEx_4}
\end{align}
where $\Omega_{wl}^{SIA}$ and $\Omega_{lg}^{SIA}$ are the sharp-interface wall-liquid and the liquid-gas excess grand potentials defined by
\begin{align}
\Omega_{wl}^{SIA} &\defi - \frac{n_l^2}{2} I\klamm{W,V_f \cup V_B} +  n_l \int_{V_f \cup V_B} V\ofR d{\bf r}
\label{eq:SIA_OmegaWL}
\\
\text{and}\qquad 
\Omega_{lg}^{SIA} &\defi - \frac{(n_l-n_g)^2}{2} I\klamm{W\cup V_f , V_B}.
\label{eq:SIA_OmegaLG}
\end{align}
$\Omega_B^{SIA}$ is the binding potential defined by
\begin{align}
\Omega_B^{SIA} \defi n_l \klamm{n_l - n_g} I\klamm{W,V_B} - (n_l-n_g)\int_{V_B} V\ofR d{\bf r}. \label{eq:SharpKing_defBinding}
\end{align}
Physically, (\ref{eq:SharpKink_OmegaEx_4}) can be explained as follows: $\Omega_{wl}^{SIA}$ and $\Omega_{lg}^{SIA}$ are needed to create the wall-liquid and the liquid-gas interface. However, it has to be taken into account that the surfaces interact with each other, as their distance is finite. This is done by introducing the binding energy $\Omega_B$. For further details, see also Israelachvili~\cite{Israelachvili}.

\subsection{The Piecewise Function Approximation \label{sec:PFA}}

The main drawback of the SIA is that it overestimates the liquid-gas surface tension by up to one hundred percent (see also Fig.~\ref{fig:1DInterface_SurfaceTensionVsSteepness}). In order to avoid effects due to this error, we introduce the PFA, where it is assumed that the wall-liquid and the liquid-gas-interface are smooth and have a finite width. The test function can then be written as:
\begin{align}
n^{PFA}\klamm{\bf r} \defi
\left\{
\begin{array}{lll}
0 & \text{ if } & {\bf r} \in W \\
n_{wl}\ofR & \text{ if } & {\bf r} \in V_{wl} \\
n_{l} & \text{ if } & {\bf r} \in V_{f}\\
n_{lg}\ofR & \text{ if } & {\bf r} \in V_{lg}\\
n_{g} & \text{ if } & {\bf r} \in V_{B} 
\end{array}
\right.,
\label{eq:FiniteDifference_Splitting}
\end{align}
where $V_{wl}$ and $V_{lg}$ are the volumes of the wall-liquid and the liquid-gas interface (see also Fig.~\ref{fig:PFA_Volumes}).
\begin{figure}[ht]
\centering
\includegraphics{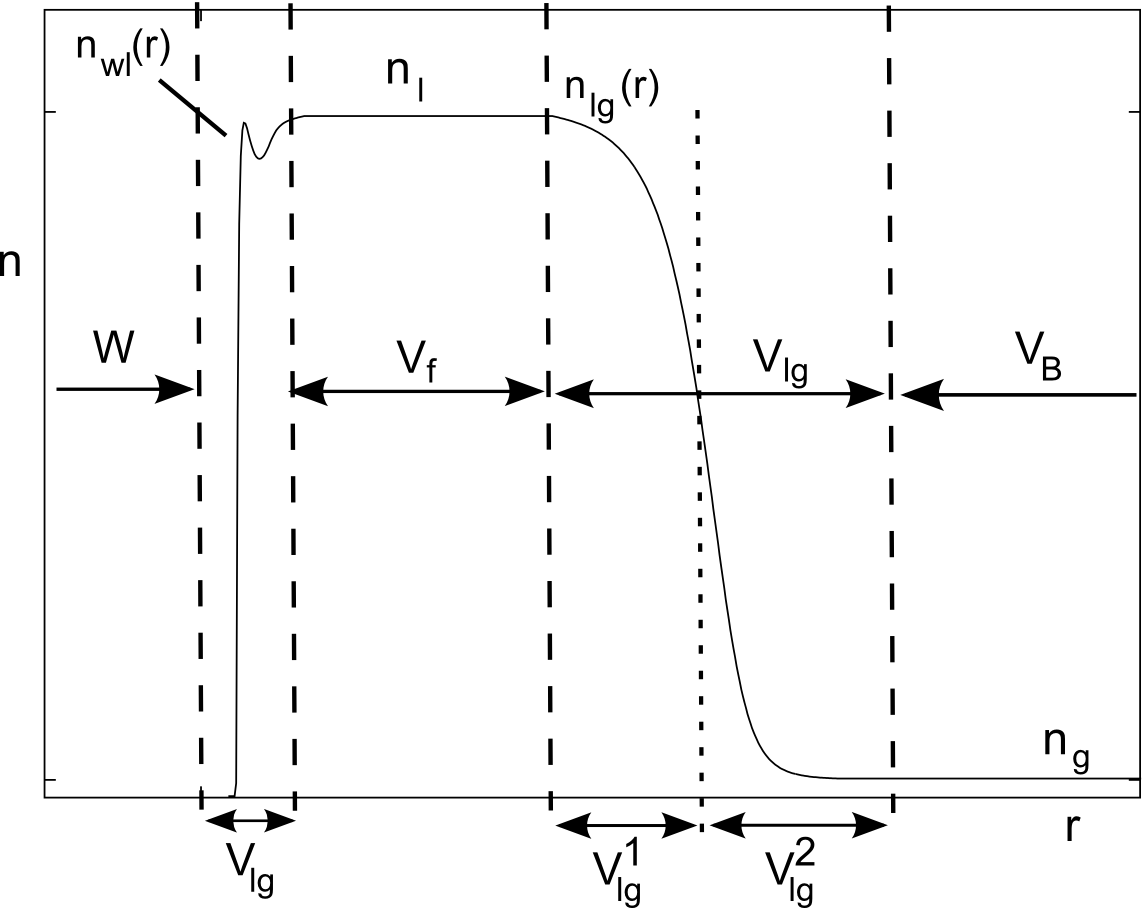}
\caption{Sketch of a density profile of a piecewise function approximation. $n_{wl}\klamm{\bf r}$ and $n_{lg}\klamm{\bf r}$ are the density profiles of the wall-liquid and the liquid-gas interface, respectively. The width of both interfaces is supposed to be independent on the amount of adsorbed fluid. The liquid-gas interface $V_{lg}$ is separated in the two parts $V_{lg}^{1}$ and $V_{lg}^2$ by the Gibbs dividing surface.}
\label{fig:PFA_Volumes}
\end{figure}

It is our aim to write the excess grand potential (\ref{eq:Def_ExcessGrandPotential}) for the test function (\ref{eq:FiniteDifference_Splitting}) in a similar way to the formulation (\ref{eq:SharpKink_OmegaEx_4}) in the SIA. For this, first the attractive contribution to the excess grand potential (\ref{eq:Def_ExcessGrandPotential}) is analyzed:
\begin{align*}
\text{A} \defi
\frac{1}{2}\iint n\ofRD \klamm{n\ofRD - n\ofR} \phi_{attr}\klamm{|{\bf r} - {\bf r}'|} d{\bf r}' d{\bf r} = 
\quad
&[WV_{wl}] + \{V_{wl}V_{wl}\} 
+ [\klamm{W \cup V_{wl}} V_f ]+\\
+ &[V_{lg}V_B] +  \{V_{lg}V_{lg}\}
+ [V_f \klamm{V_{lg} \cup V_{B}}]+\\
+ &[\klamm{W \cup V_{wl}}\klamm{V_{lg} \cup V_B}]
\end{align*}
where the operators $[\cdot\cdot]$ and $\{\cdot\cdot\}$ are defined as
\begin{align*}
[XY] &\defi -\frac{1}{2}\int_X \int_Y \klamm{n^{PFA}\ofR - n^{PFA}\ofRD}^2 \phi_{attr}\klamm{|{\bf r}-{\bf r}'|}d{\bf r}' d{\bf r}\\
\text{and}\qquad 
\{XX\} &\defi \frac{1}{2}\int_X \int_X n_X\ofR \klamm{n^{PFA}\ofRD - n^{PFA}\ofR} \phi_{attr}\klamm{|{\bf r}-{\bf r}'|} d{\bf r}' d{\bf r}.
\end{align*}

We define the density profiles of a wall-liquid and a liquid-gas interface as
\begin{align}
\bar n_{wl}\klamm{\bf r} \defi
\left\{
\begin{array}{lll}
0 & \text{ if } & {\bf r} \in W \\
n_{wl}\ofR & \text{ if } & {\bf r} \in V_{wl} \\
n_{l} & \text{ if } & {\bf r} \in  V_{f} \cup V_{lg} \cup V_{B} 
\end{array}
\right.
\qquad \text{and}\qquad
\bar n_{lg}\klamm{\bf r} \defi
\left\{
\begin{array}{lll}
n_l & \text{ if } & {\bf r} \in W \cup V_{wl} \cup V_{f}\\
n_{lg}\ofR & \text{ if } & {\bf r} \in V_{lg} \\
n_{g} & \text{ if } & {\bf r} \in  V_{B} 
\end{array}
\right. .
\end{align}
The excess grand potential (\ref{eq:Def_ExcessGrandPotential}) with respect to these density distributions will be denoted by $\Omega_{wl}^{PFA}$ and $\Omega_{lg}^{PFA}$, respectively. Their attractive contributions are
\begin{align*}
\text{A}^{wl} = & [WV_{wl}] + \{V_{wl}V_{wl}\} 
+ [\klamm{W \cup V_{wl}} V_f ]
+ [\klamm{W \cup V_{wl}}\klamm{V_{lg} \cup V_B}]_{wl}\\
\text{and} \qquad
\text{A}^{lg} = & [V_{lg}V_B] + \{V_{lg}V_{lg}\} 
+ [V_f \klamm{V_{lg} \cup V_{B}}] 
+[\klamm{W \cup V_{wl}}\klamm{V_{lg} \cup V_{B}}]_{lg},
\end{align*}
where $[\cdot \cdot]_{wl}$ or $[\cdot \cdot]_{lg}$ means that the operator $[\cdot \cdot]$ is evaluated with respect to $\bar n_{wl}(\cdot)$ and $\bar n_{lg}(\cdot)$ instead of $n^{PFA}(\cdot)$, respectively. The contributions of the pressure term to $\Omega_{wl}^{PFA}$ and $\Omega_{lg}^{PFA}$ can be written as
\begin{align*}
\text{P}^{wl} &\defi  - \int_{V_{wl}} \klamm{p\klamm{n_{wl}\ofR} - p\klamm{n_{l}}} d{\bf r}\\
\text{and}\qquad
\text{P}^{lg} &\defi  - \int_{V_{lg}^1} \klamm{p\klamm{n_{lg}\ofR} - p\klamm{n_{l}}} d{\bf r}
- \int_{V_{lg}^2} \klamm{p\klamm{n_{lg}\ofR} - p\klamm{n_{g}}} d{\bf r},
\end{align*}
where $V_{lg}^1$ and $V_{lg}^2$ correspond to the division of $V_f$ by the Gibbs dividing surface for the liquid-gas interface such that
\begin{align*}
\int_{V_{lg}^1} \klamm{ n_{lg}\ofR - n_l } d{\bf r}+
\int_{V_{lg}^2} \klamm{ n_{lg}\ofR - n_g } d{\bf r}
=
0.
\end{align*}
The pressure-term contributions and the contributions of the attractive terms sum up with contributions from the external potential to
\begin{align}
\Omega_{wl}^{PFA} &\defi \Omega_{ex}[\bar n_{wl}(\cdot)] = \text{P}^{wl}+ \text{A}^{wl} +  \int_{V_{wl}} V\ofR n_{wl}\ofR d{\bf r} +  n_{l} \int_{V_f \cup V_{lg}^l \cup V_B^l} V\ofR d{\bf r}
\label{eq:PFA_OmegaWL}
\\
\text{and} \qquad
\Omega_{lg}^{PFA} &\defi \Omega_{ex}[\bar n_{lg}(\cdot)] = \text{P}^{lg}+ \text{A}^{lg}.
\label{eq:PFA_OmegaLG}
\end{align}
After having calculated the excess grand potentials for the wall-liquid and liquid-gas interface, we analyze the remaining terms of the excess grand potential of $n^{PFA}\ofR$. For this, we subtract the attractive terms $\text{A}^{wl}$ and $\text{A}^{lg}$ from the attractive term $\text{A}$, which yields:
\begin{align}
\text{A} - \text{A}^{wl} - \text{A}^{lg} &=[\klamm{W \cup V_{wl}}\klamm{V_{lg}\cup V_B}]
-[\klamm{W \cup V_{wl}}\klamm{V_{lg}\cup V_B}]_{wl}
-[\klamm{W \cup V_{wl}}\klamm{V_{lg}\cup V_B}]_{lg}\notag\\
&=- \frac{1}{2}\int_{W \cup V_{wl}} \int_{V_{lg}\cup V_B}
\left( \klamm{n^{PFA}\ofR - n^{PFA}\ofRD}^2 - \klamm{n^{PFA}\ofR - n_l}^2 - \right.\notag\\
& \qquad \qquad \qquad \qquad -\left. \klamm{n_l - n^{PFA}\ofRD}^2\right) \phi_{attr}\klamm{|{\bf r} - {\bf r}'|} d{\bf r}' d{\bf r}\notag\\
&=
\int_{W\cup V_{wl}} \int_{V_{lg}\cup V_B} \klamm{n_{l} - n^{PFA}\ofRD }\klamm{n_{l} - n^{PFA}\ofR} \phi_{attr}\klamm{|{\bf r} - {\bf r}'|} d{\bf r}'d{\bf r},
\label{eq:PFA_RemainNL}
\end{align}
where we have used that in $W \cup V_{wl}$, $n^{PFA}\ofR$ equals $\bar n_{wl}\ofR$ and in $V_{lg}\cup V_B$, $n^{PFA}\ofR$ equals $\bar n_{lg}\ofR$. The pressure terms of the excess grand potential can be written as follows:
\begin{align}
P &= 
-\int_{V_{wl}} \klamm{p\klamm{n_{wl}\ofR} - p\klamm{n_g\ofR}} d{\bf r}
- \int_{V_f} \klamm{p\klamm{n_l} - p\klamm{n_g}} d{\bf r}
- \int_{V_{lg}} \klamm{p\klamm{n_{lg}\ofR} - p\klamm{n_g}} d{\bf r}\notag\\
&= 
\text{P}^{wl} + \text{P}^{lg} - 
 \int_{V_{wl}\cup V_f \cup V_{lg}^1} \klamm{p\klamm{n_{l}} - p\klamm{n_{g}}} d{\bf r}\notag\\
&\approx \text{P}^{wl} + \text{P}^{lg} - \Delta \mu \klamm{n_l - n_g} |V_{wl}\cup V_f \cup V_{lg}^1|, \label{eq:PFA_LocalTerm}
\end{align}
where we have made use of approximation (\ref{eq:Wetting_ApproxPressureChemPot}) for the pressure-term. Putting together (\ref{eq:PFA_LocalTerm}) and (\ref{eq:PFA_RemainNL}) leads to the following approximate expression for the excess grand potential of the configuration $n^{PFA}\ofR$:
\begin{align}
\Omega_{ex}^{PFA} \defi& 
- \Delta \mu \klamm{n_l - n_g} |V_{wl}\cup V_f \cup V_{lg}^1|
+ \Omega_{lg}^{PFA} + \Omega_{wl}^{PFA} + \Omega_B^{PFA} \approx 
\Omega_{ex}[n^{PFA}(\cdot)].
\label{eq:OmegaEx_PFA}
\end{align}
The binding potential in the spherical case is the sum of the remaining attractive contribution (\ref{eq:PFA_RemainNL}) and the contribution from the external potential:
\begin{align}
\Omega_B^{PFA} \defi&
\int_{W\cup V_{wl}} \int_{V_{lg}\cup V_B} \klamm{n_{l} - n^{PFA}\ofRD }\klamm{n_{l} - n^{PFA}\ofR} \phi_{attr}\klamm{|{\bf r} - {\bf r}'|} d{\bf r}'d{\bf r}  \ldots \notag\\
&- \int_{V_{lg}\cup V_B} \klamm{ n_{l} - n^{PFA}\ofR}V\ofR d{\bf r}.
\label{eq:OmegaB_PFA}
\end{align}

\section{Numerical Method: The Pseudo Arc Length Continuation \label{sec:ContinuationMethod}}

Additionally to the analytical approaches, the minimization problem is also solved numerically for the full density profile. In Sec.~\ref{sec:DenProf_Numerics}, we introduced numerical methods in order to get one density profile ${\bf n}$ for each chemical potential $\mu$. However, in the case of a prewetting transition, there can be multiple solutions for one chemical potential. Out of these solutions, only one is stable, whereas the other solutions are meta- or unstable. In order to compute the full bifurcation diagram of the set of density profiles over the chemical potential, a pseudo arc length continuation method is employed. 

We introduce an arc-length parametrization such that $\klamm{\mu(s),{\bf n}(s)}$ with $s \in \mathbb{R}$ is a connected set of solutions of condition (\ref{eq:g_Discretized}), where we include the chemical potential $\mu$ as an additional variable:
\begin{align}
{\bf g}\klamm{\mu,{\bf n}} \istobe 0. \label{eq:Annex_Conti_1DDiscCondition}
\end{align}
The density inside the wall is zero whereas far away from the substrate the density is assumed to equal the bulk gas density. Hence we set in (\ref{eq:g_Discretized}) $n_- = 0$ and $n_+ = n_g$. The main idea of the continuation scheme is to trace the set of solutions along the curve parametrized by $s$. In order to do so, it is assumed that a point $\klamm{\mu^n,{\bf n}^n}$ at position $s^n$ on the curve of solutions is given, where $n$ is the step of the continuation scheme being solved for.
\begin{figure}[htb]
\begin{center}
\includegraphics[width=7cm]{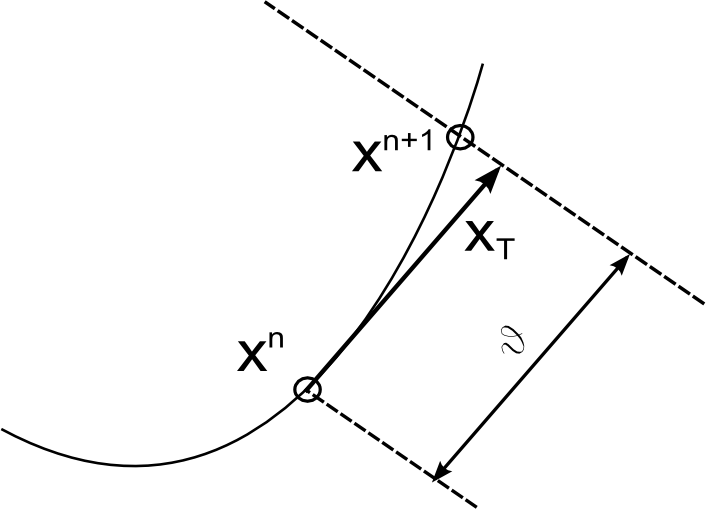}
\caption{Sketch of one iteration step of the continuation scheme. ${\bf x}^n$ and ${\bf x}^{n+1}$ are consecutive points of the iteration, where, ${\bf x} = \klamm{\mu, {\bf n}}$. ${\bf x}_T$ is the tangent vector in ${\bf x}^n$. By following the curve of solutions in direction of the tangent vector, the pseudo arc length continuation scheme is able to trace the curve of solutions through turning points. 
\label{fig:SktechContinuationMethod}}
\end{center}
\end{figure}

First, the tangent vector $\klamm{\dif{\mu}{s},\dif{{\bf n}}{s}}$ at position $s^n$ is computed. This is done by differentiating the function ${\bf g}(s) \defi {\bf g}\klamm{\mu(s),{\bf n}(s)}$ with respect to $s$. From (\ref{eq:Annex_Conti_1DDiscCondition}), it is known that ${\bf g}$ is zero on the curve of solutions $\klamm{\mu(s),{\bf n}(s)}$. Hence, the differential $\dif{{\bf g}}{s}$ vanishes:
\begin{align*}
\dif{{\bf g}}{s} = 
\left(
\begin{array}{cc}
\diff{{\bf g}}{ \mu} & 
{\bf J}
\end{array}
\right)
\cdot 
\left(
\begin{array}{c}
\dif{\mu}{s}\rule[-1.7ex]{0pt}{0pt} \\
\dif{{\bf n}}{s}
\end{array}
\right)
= 0,
\end{align*}
where ${\bf J}$ is the Jacobian $\diff{{g_i}}{ n_j}$ defined in (\ref{eq:DenProf_Jacobi_Def}) and
\begin{align}
\diff{{g_i}}{ \mu} = -1 + \dif{n_{g}}{\mu} \Psi_{+}\klamm{ z_N -  z_i}. 
\label{eq:Continuation_GiMu}
\end{align}
The second term takes into account that the boundary condition $ n_+ = n_{g}$ is a function of the chemical potential. From the equilibrium condition (\ref{eq:Uniform_Omega_prime}), it can be easily seen that 
\begin{align}
\dif{n_g}{\mu} = \frac{1}{2\alpha + \mu_{HS}'(n_g)}.
\label{eq:DngDmu}
\end{align}
In Tab.~\ref{tab:DngDmu}, some values for $\dif{n_g}{\mu}$ at saturation are given. The last term in (\ref{eq:Continuation_GiMu}), $\Psi_+$, is given in the planar case by (\ref{eq:1D_Numerics_Vn}). The absolute value of this expression is bounded by $1.81$ for $|z_n-z_i| > 1$. At distances less than $1$ from the wall, the $|\Psi_{+}|$ is bounded by $5.59$.
\begin{table}
\begin{center}
\begin{tabular}[ht]{l|llllllll}
\toprule
$T$ &
0.5& 0.55	&0.6	&0.65&	0.7&	0.75&	0.8&	0.85\\
$\left. dn_g/d\mu\right|_{sat}$ &
$0.00183$	&$0.00402$	&$0.00856$	&$0.0165$&	$0.0303$	&$0.0551$&	$0.1049$	&$0.2393$\\
\bottomrule
\end{tabular}
\caption{Values of the derivative of the gas bulk density with respect to the chemical potential at different temperatures (see Eq.~(\ref{eq:DngDmu})).}
\label{tab:DngDmu}
\end{center}
\end{table}
We conclude that for temperatures less than $0.8$, $\diff{g_i}{\mu} \approx -1$ is a reasonable approximation which only leads to a slight torsion of the tangent vector. Hence, the defining equation for the approximate tangent vector  $\klamm{\mu_T, {\bf n}_T} \approx \klamm{\dif{\mu}{s},\dif{{\bf n}}{s}}$ is written as:
\begin{align}
\left(
\begin{array}{cc}
- {\bf 1} & 
{\bf J}
\end{array}
\right)
\cdot 
\left(
\begin{array}{c}
\mu_T \\
{\bf n}_T
\end{array}
\right)
= 0. \label{eq:Annex_DefiningEquationTangent}
\end{align}
Remark that this homogeneous system of linear equations leaves one degree of freedom, as we only have $N+1$ equations, but $N+2$ variables $(\mu_T,{\bf n}_T)$. The additional equation can be used to determine whether the tangent vector points in positive or negative direction in terms of the arc length $s$.

In order to find the next point $\klamm{\mu^{n+1},{\bf n}^{n+1}}$ on the curve, an additional equation has to be set up. For this purpose we introduce a scalar product for the $\klamm{\mu,{\bf n}}$-space, which takes into account the discretization of the density profile into $N$ intervals of length $\Delta  z$:
\begin{align}
\langle \klamm{\mu_1 , {\boldsymbol n}_1} |  \klamm{\mu_2 , {\boldsymbol n}_2}\rangle 
\defi& 
\mu_1 \mu_2 +
\frac{\Delta z}{2} n_{10}n_{20} +
\Delta z \sum_{j=1}^{N-1} n_{1j}n_{2j}+
\frac{\Delta z}{2} n_{1N}n_{2N}  \notag\\
=&
\mu_1 \mu_2 +\frac{\Delta z}{2} \sum_{j=0}^N \klamm{2 - \delta_{j0} - \delta_{jN}} n_{1j}n_{2j}, \notag
\end{align}
where $\delta_{ij}$ is the Kroenecker delta.
The norm with respect to the scalar product is defined as 
\begin{align*}
\norm{\klamm{ \mu,{\boldsymbol n}}} \defi
\langle
\klamm{ \mu,{\boldsymbol n}}|
\klamm{ \mu,{\boldsymbol n}}
\rangle^{1/2}.
\end{align*}
We say that the curve of solutions $\klamm{ \mu(s),{\boldsymbol n}(s)}$ is parametrized by arc length with respect to the norm given above. Hence, the length between two points on the curve can be written as
\begin{align*}
\int_{s^n}^{s^n+\theta} \norm{\klamm{\dif{\mu}{s},\dif{{\boldsymbol n}}{s}}} ds = \theta.
\end{align*}
Linearizing the norm around $s^n$ and making use of the approximate tangent vector $(\mu_T,{\boldsymbol n}_T)$ at $s^n$, one obtains
\begin{align*}
\langle
\klamm{ \mu_T,{\boldsymbol n}_T}|
\klamm{ \mu(s^n + \theta) -  \mu(s^n) , {\boldsymbol n}(s^n + \theta) -  {\boldsymbol n}(s^n)}
\rangle \approx \theta,
\end{align*}
where we used that the tangent vector is normalized such that
\begin{align*}
\norm{\klamm{ \mu_T,{\boldsymbol n}_T}} = 1.
\end{align*}
Using $\klamm{\mu^{n+1},{\boldsymbol n}^{n+1}}$ instead of $\klamm{\mu(s^n + \theta), {\boldsymbol n}(s^n + \theta)}$ leads to the additional equation for the next point on the curve of solutions:
\begin{align}
K_n\klamm{\mu^{n+1},{\boldsymbol n}^{n+1}} \defi \langle
\klamm{ \mu_T,{\boldsymbol n}_T}|
\klamm{ \mu^{n+1} -  \mu^n, {\boldsymbol n}^{n+1} -  {\boldsymbol n}^{n}}
\rangle - \theta \quad \istobe \quad 0. \label{eq:Annex_Conti_AdditionalEq}
\end{align}
For a geometric interpretation of equation (\ref{eq:Annex_Conti_AdditionalEq}), see also Fig. \ref{fig:SktechContinuationMethod}.

In order to obtain the next point $\klamm{\mu^{n+1},{\boldsymbol n}^{n+1}}$ on the curve, (\ref{eq:Annex_Conti_AdditionalEq}) is solved together with (\ref{eq:Annex_Conti_1DDiscCondition}). This is done using a modified Newton-Scheme. In each Newton-step, the following system of linear equations is solved:
\begin{align}
\left(
\begin{array}{cc}
\mu_T & ({\bar{\boldsymbol n}}_T)^T\\
-{\bf 1} & {\bf J}
\end{array}
\right)
\cdot 
\left(
\begin{array}{c}
\Delta \mu\\
\Delta {\boldsymbol n}
\end{array}
\right)
= 
\left(
\begin{array}{c}
K_n\klamm{\mu^{n,m},{\boldsymbol n}^{n,m}} \\
 {\bf g}(\mu^{n,m},{\boldsymbol n}^{n,m})
\end{array}
\right),
\label{eq:LinearEqnIsotherm}
\end{align}
where we are considering the $n$-th step of the continuation scheme and the $m$-th step of the Newton method, such that $\Delta  \mu\defi \mu^{n,m+1}-\mu^{n,m}$ and $\Delta  {\boldsymbol n} \defi {\boldsymbol n}^{n,m+1}-{\boldsymbol n}^{n,m}$. Again, we have 
approximated $\diff{g_i}{ \mu}$ by $-1$. In (\ref{eq:LinearEqnIsotherm}), $\bar{\bf n}_T$ is defined by
\begin{align*}
{\bar{\boldsymbol n}}_{T,j} \defi \frac{\Delta z}{2} \klamm{2 - \delta_{j0} - \delta_{jN}}{\bf n}_{T,j}.
\end{align*}
Finally, (\ref{eq:LinearEqnIsotherm}) is solved using a conjugate gradient method, where the Jacobian of the system is approximated by introducing a cutoff of $5$  for the intermolecular potential $\Phi$ (see also Sec.~\ref{sec:DenProf_Numerics}).

Remark that the approximation made for $\diff{g_i}{\mu}$ does not affect the accuracy of the result. This is because the defining equations $ {\bf g}(\mu^n,{\bf n}^n)$ of the isotherm are not affected. Instead, the approximation leads to negligible deviations of the step size between two points of the iteration process $\klamm{\mu^{n+1},{\bf n}^{n+1}}$ and $\klamm{\mu^{n},{\bf n}^{n}}$ on the curve of solutions.

\subsection{The Maxwell Construction \label{sec:MaxwellConstruction}}

Once a set of solutions $\klamm{\mu(s),{\bf n}(s)}$ is computed, it is of particular interest to find solutions for which the excess grand potential is equally large and which are at the same chemical potential. These solutions denote first order wetting transitions, as we shall demonstrate in Sec.~\ref{sec:PlanarIsotherm}. The Maxwell construction offers an easy way to compute such points. In order to introduce this method, we assume that the set of solutions is given in its continuous form as $\klamm{\mu(s),n(s)(\cdot)}$, where $n(s)(\cdot)$ is the continuous density profile at position $s$ on the curve of solutions.

The excess grand potential (\ref{eq:Def_ExcessGrandPotential}) is a function of the chemical potential and a functional of the density profile $n(\cdot)$. Hence, the difference of the excess grand potential between two points on the curve can be written as follows:
\begin{align*}
\Omega_{ex}(s + \theta) - \Omega_{ex}(s)
&=
\int_s^{s+\theta}
\klamm{
 \left.
\diff{\Omega_{ex}}{\mu} \dif{\mu}{s}
\right|_{\mu(s),n(s)(\cdot)}
+
\int \klamm{ \left.\frac{\delta \Omega_{ex}}{\delta n\ofR}\right|_{\mu(s),n(s)(\cdot)}  \dif{n}{s}\ofR} d{\bf r} 
}
ds,
\end{align*}
where $\frac{\delta \Omega_{ex}}{\delta n\ofR}$ is the functional derivative of $\Omega_{ex}$ in ${\bf r}$ and $ \Omega_{ex}$ as a function of the parameter $s$ is the external potential evaluated at $\klamm{\mu(s),n(s)(\cdot)}$. $n(s)(\cdot)$ is a solution of the variational principle (\ref{eq:minCond_dimless}). Hence, the second term in the integral vanishes such that we obtain
\begin{align*}
\Omega_{ex}(s + \theta) - \Omega_{ex}(s)
&=
\int_s^{s+\theta}
\klamm{
\diff{\Omega_{ex}}{\mu} \dif{\mu}{s}}
ds
=
\int_{\mu(s)}^{\mu(s+\theta)}
\diff{\Omega_{ex}}{\mu} d\mu
\end{align*}
given that the mapping $\mu(\cdot): [s,s+\theta]\to\mathbb{R}$ is injective, i.e. given that there are no turning points with respect to $\mu$ between $s$ and $s+\theta$. The derivative of $\Omega_{ex}$ (see Eqs.(\ref{eq:Def_ExcessGrandPotential}) and (\ref{eq:Uniform_Omega_def})) with respect to $\mu$ yields
\begin{align*}
\diff{\Omega_{ex}}{\mu} = - \int_{V_f} \klamm{ n\ofR - n_g} d{\bf r} = - \Gamma,
\end{align*}
where $\Gamma$ is the excess number of particles of the system, also denoted as adsorption. This gives
\begin{align*}
\Omega_{ex}(s + \theta) - \Omega_{ex}(s)
&=
- \int_{\mu(s)}^{\mu(s+\theta)}
\Gamma(\mu) d\mu.
\end{align*}

\begin{figure}[ht]
\centering
\includegraphics{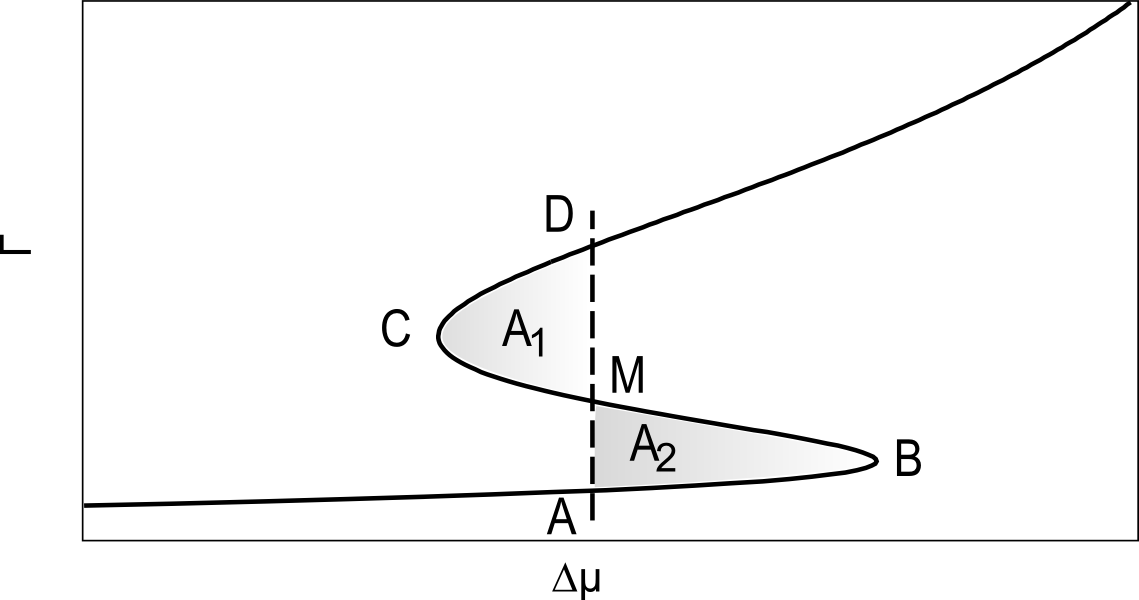}
\caption{Projection of a curve of solutions $\klamm{\mu(s),n(s)(\cdot)}$ on a $\Gamma-\Delta \mu$ - diagram. The multi-valued S-type curve of solution is the signature of a first-order wetting transition.}
\label{fig:MaxwellConstruction}
\end{figure}
We apply this equation on a scenario where the curve of solutions is multi-valued in $\mu$ as shown in Fig.\ref{fig:MaxwellConstruction}. Then, the difference of the excess grand potential between points M and A corresponds with the area $A_2$ enclosed by the curve of solutions A-B-M and the dashed line in Fig.~\ref{fig:MaxwellConstruction}:
\begin{align*}
\Omega_M - \Omega_A &= \klamm{\Omega_M - \Omega_B} + \klamm{\Omega_B - \Omega_A}
= 
\int_M^B\Gamma(\mu) d\mu - \int_A^B \Gamma(\mu) d\mu 
= A_2.
\end{align*}
Analogously, $\Omega_D - \Omega_M$ corresponds with the negative area $A_1$ enclosed by M-C-D and the dashed line in Fig.~\ref{fig:MaxwellConstruction}.
\begin{align*}
\Omega_D - \Omega_M &= \klamm{\Omega_D - \Omega_C} + \klamm{\Omega_C - \Omega_M}
= 
- \klamm{ \int_C^D\Gamma(\mu) d\mu - \int_C^M \Gamma(\mu) d\mu} 
= - A_1.
\end{align*}
Hence, the solution in D has the same excess grand potential as the solution in A, if the areas $A_1$ and $A_2$ in the $\Gamma-\mu$-diagram are equally large:
\begin{align*}
\Omega_D - \Omega_A = A_2 - A_1
\end{align*}

\section{Wetting on a Planar Wall \label{sec:Wetting_PlanarWall}}

\subsection{The Isotherm}

We now present connected sets of solutions of the extremal condition
(\ref{eq:g_Discretized}) that include metastable and unstable
branches of the isotherms. These results were obtained using the pseudo arc length continuation method introduced in Sec.~\ref{sec:ContinuationMethod}. We will also present
phase diagrams for the prewetting line and compare our results with
analytical predictions obtained from a SIA. We note that the majority of previous DFT
computations trace stable or metastable equilibrium density
profiles, but not unstable branches, with a few notable exceptions
which use continuation schemes, e.g. the DFT study of polymer
systems by Frischknecht \emph{et. al.}~\cite{Frischknecht}.

\subsection{Isotherms for a Planar Wall \label{sec:PlanarIsotherm}}
\begin{figure}[hbt]
\begin{center}
\includegraphics[width=8.5cm]{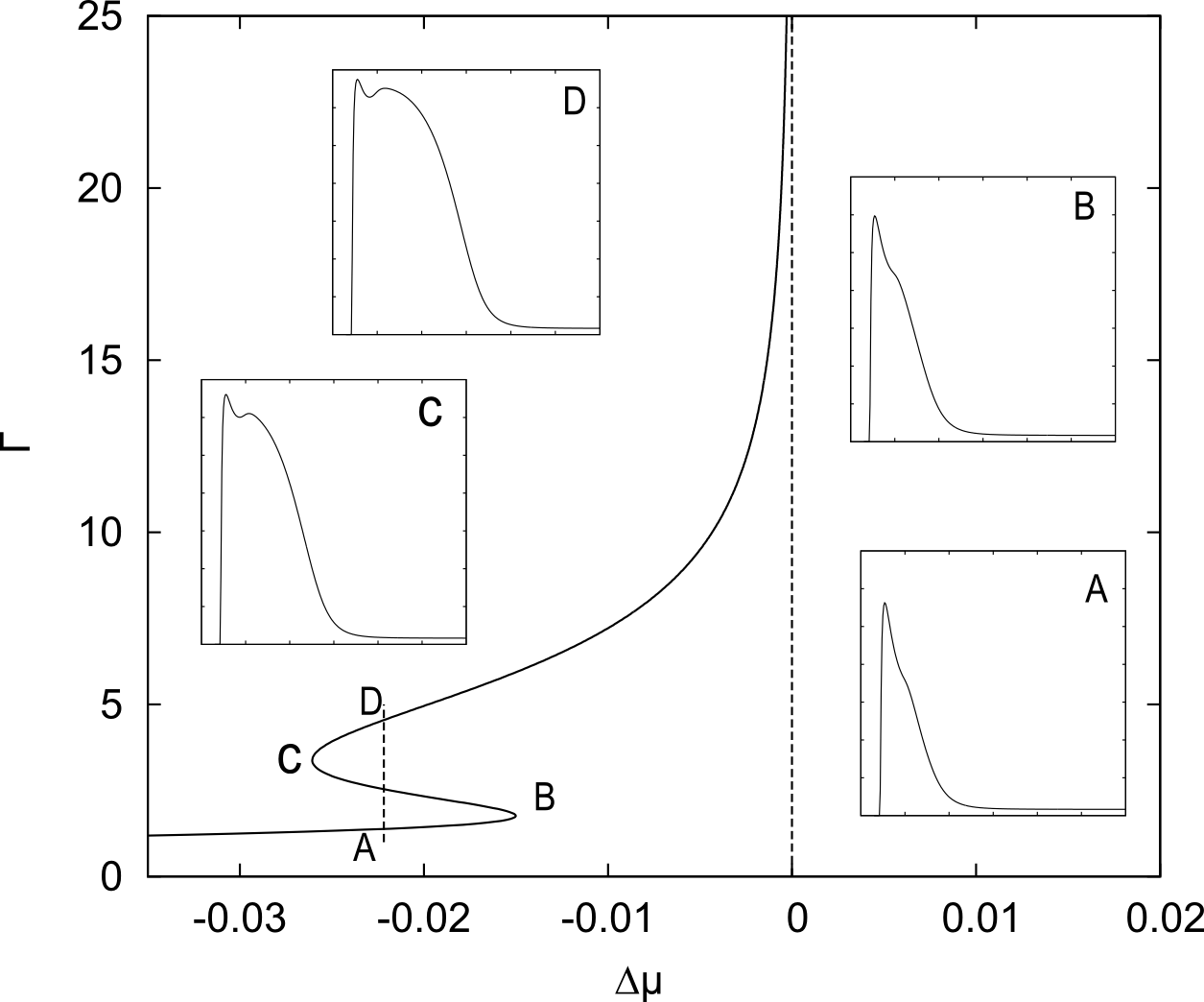}
\caption{ $\Gamma$--$\Delta \mu$
bifurcation diagram at $T= 0.7$ for a wall with
$\varepsilon_w = 0.8$ and $\sigma_w = 1.25$.
$\Delta \mu$ is the deviation of the chemical potential from its
saturation value, $\mu_{sat}$. The prewetting transition, marked by
the dashed line, occurs at chemical potential $\Delta \mu_{pw} = -0.022$.
The inset subplots show the density $n$ as a function of
the distance $z$ from the wall in the ranges $[0,0.7]$ over
$[0,12]$}
\label{fig:IsothermProfiles}
\end{center}
\end{figure}
A typical  bifurcation diagram of the adsorption $\Gamma$ as
a function of the deviation of the
chemical potential from its saturation value $\Delta \mu$, is shown in Fig.~\ref{fig:IsothermProfiles}. At $\Delta \mu = 0$ the bulk gas and the
bulk liquid phases are equally stable, whereas for $\Delta \mu < 0$,
the bulk gas phase is more stable.

The isotherm depicted in Fig.~\ref{fig:IsothermProfiles} is a
multi-valued S-type curve with two turning points (saddle nodes). A
thin liquid film of at most a few molecular diameters is effectively
formed between the wall and the gas and with a few small
oscillations in the density profile near the wall, corresponding to
adsorption of the liquid particles there. This film can only exist
due to the attraction to the wall. For large negative $\Delta
\mu$, the film is about one molecular layer on the wall
($\Gamma \approx 1$). In this case, the shape of the density
distribution is similar to the profile shown in subplot A of
Fig.~\ref{fig:IsothermProfiles}. With increasing $\Delta \mu$, point
A is reached, where two equally stable states A and D coexist,
corresponding to a thin and a thick film, respectively. At A a
first-order phase transition takes place, also known as prewetting
transition, and the corresponding value of the chemical potential
will be denoted as $\mu_{pw}$. The line $\mu = \mu_{pw}$ will be
referred to as the prewetting line.

At $\mu_{pw}^+$ the lower branch stops representing equilibrium
density profiles. The corresponding states are no longer global
minima of the grand potential, as can be inferred from Fig.
\ref{fig:OmegaMuIsotherm} but local ones and the branch from A to B
is a metastable one. At point B a saddle-node bifurcation occurs
connecting the metastable branch A-B from the unstable branch B-C
which is connected with the metastable branch C-D by a second
saddle-node bifurcation at C. We note that the location of the
prewetting line can be obtained from a Maxwell construction in which
the area between the isotherm to the left of the line and the line
equals to the area between the isotherm to the right and the line (see also Sec.~\ref{sec:MaxwellConstruction}).

After crossing the prewetting line and as saturation line is approached,
the thickness of the (single stable) film tends to infinity and we
approach the case of a liquid-gas interface in the absence of the
wall, i.e. it is like the wall is not even present -- with the
exception of course of the area close to it where the density
oscillations occur.

\begin{figure}
\begin{center}
\includegraphics[width=8.5cm]{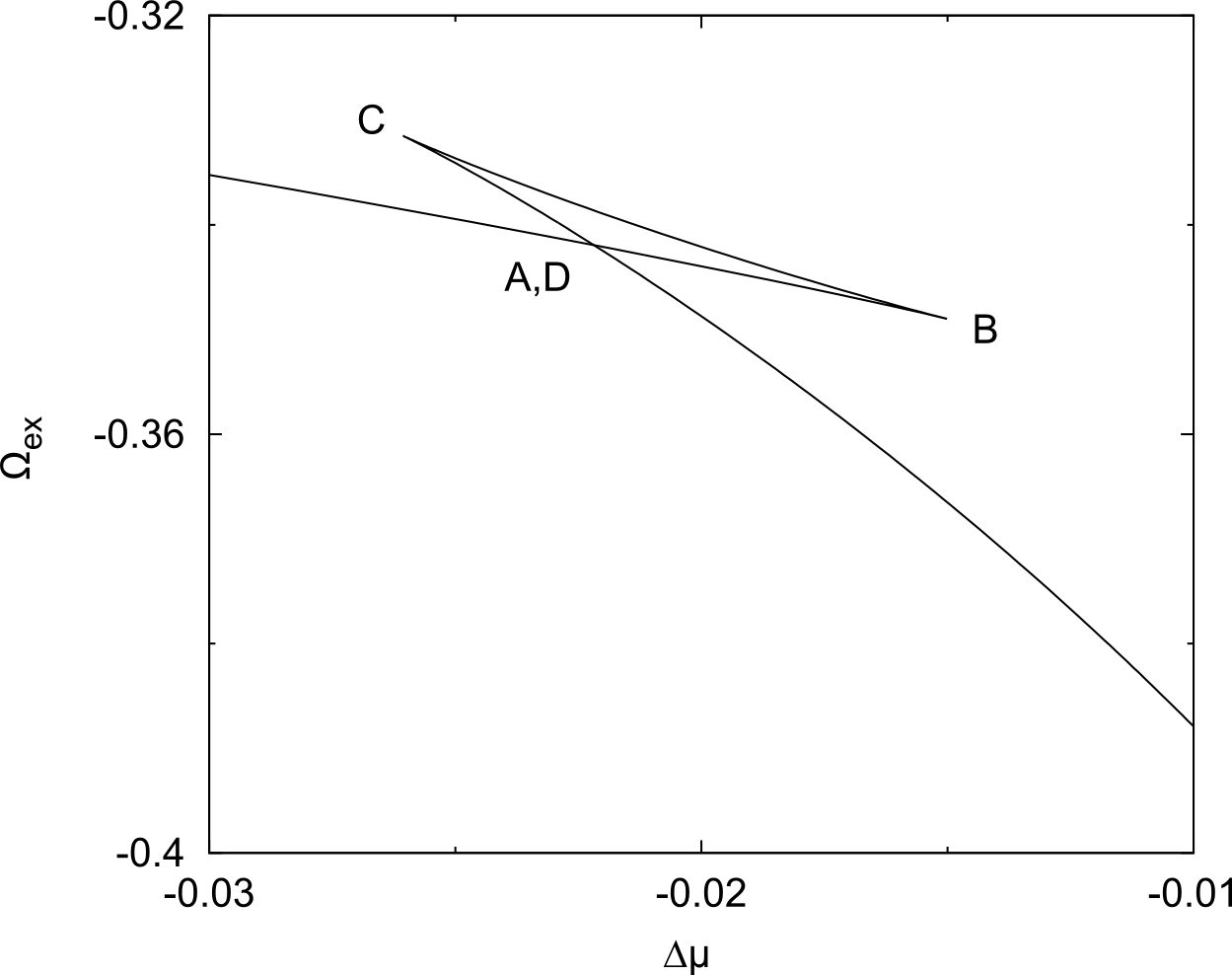}
\caption{\label{fig:OmegaMuIsotherm}The excess grand potential
$\Omega_{ex}$ as a function of $\Delta \mu$ in the vicinity of the prewetting transition
in~\ref{fig:IsothermProfiles}.}
\end{center}
\end{figure}

\subsection{The Prewetting Line \label{sec:PlanarPrewetting}}

The dependence of the prewetting chemical potential, $\mu_{pw}$, as
a function of temperature and the attractiveness of the wall is of particular interest. Increasing the
temperature results in the prewetting line shifted to the left. At
the same time, the jump of the film thickness between point A and
point D decreases. Above a certain temperature, the jump vanishes
and we have a complete wetting scenario, for which the film
thickness grows continuously to infinity (see Fig.~\ref{plot:CompleteWetting}) . On the other hand,
decreasing the temperature will lead to a shift of the prewetting
transition chemical potential, $\mu_{pw}$, towards the saturation
value $\mu_{sat}$. Let us denote with $T_w$ the temperature at which
the prewetting line coincides with the saturation line. For
temperatures below $T_w$, we obtain a partial wetting scenario,
characterized by a stable thin film at saturation (see Fig.~\ref{plot:PartialWetting}). We note that
$\mu_{sat}$ imposes an upper bound on $\mu_{pw}$ such that we cannot
have $\mu_{pw} \geq \mu_{sat}$ (equivalently a Maxwell construction
in this region is not possible).

\begin{figure}[p]
\centering
\subfigure[Partial Wetting]{
\includegraphics[width=8cm]{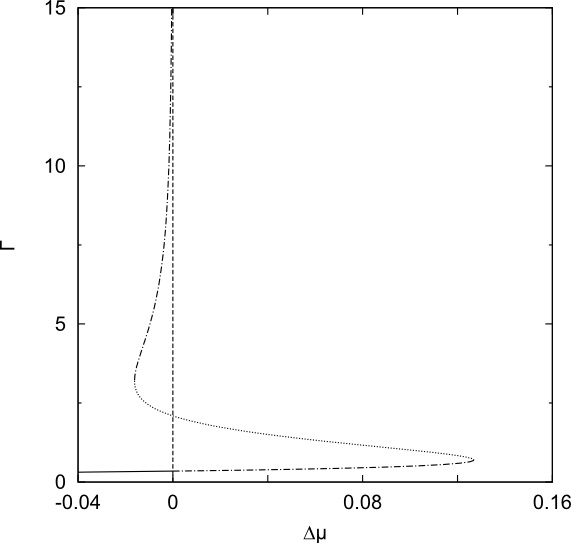}
\label{plot:PartialWetting}
}
\subfigure[Complete Wetting, preceded by a prewetting transition]{
\includegraphics[width=8cm]{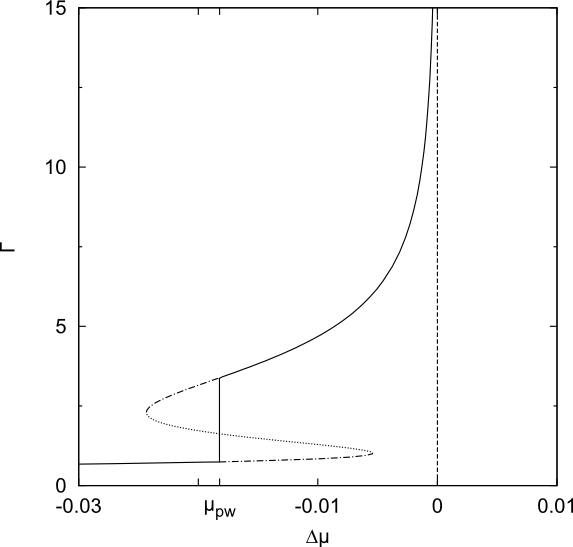}
\label{plot:PreWetting}
}
\subfigure[Complete Wetting without prewetting transition]{
\includegraphics[width=8cm]{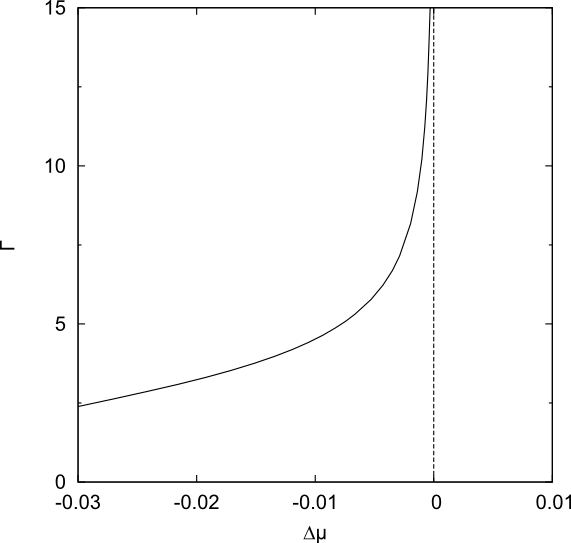}
\label{plot:CompleteWetting}
}
\caption{Plots of the isotherms of adsorption $\Gamma$ over the deviation of the chemical potential from saturation $\Delta \mu$ for (a) a partial wetting scenario at $T = 0.6$, (b) a complete wetting scenario, preceded by a prewetting transition at $T = 0.69$ and (c) a complete wetting scenario without prewetting transition at $T = 0.74$. The wall parameters are $\varepsilon_w = 0.8$ and $\sigma_w = 1.25$. The solid lines are the stable branches of the isotherms. The metastable branches are the dot-dash lines whereas unstable branches are drawn as dotted lines.}
\label{fig:CompletePartialPre}
\end{figure}

Figure~\ref{fig:MuNodd_Temp} depicts the deviations of the chemical
potential from the saturation one at prewetting, $\Delta\mu_{pw}$,
and at the left and right saddle nodes as a function of temperature.
$\Delta\mu_{pw}$ approaches saturation as $(T-T_w)^{3/2}$. Consequently, the slope of the prewetting line at $T_w$ is zero:
\begin{align*}
\left.\dif{\klamm{\Delta \mu_{pw}}}{T}\right|_{T_w} = 0
\end{align*}
 This
appears to be in agreement with the analytical thermodynamical prediction based on Clapeyron-type equations by Hauge and Schick~\cite{HaugeSchick}, who stated that the prewetting line approaches the saturation line tangentially.

The influence of the attractiveness of the wall on the chemical potential at the prewetting line is shown in Fig.~\ref{fig:MuNodd_Epsilonw}. Similar to the dependence of the temperature, there is a $\varepsilon_{w,w}$ which separates a partial wetting scenario ($\varepsilon_w < \varepsilon_{w,w}$) from a complete wetting scenario, which is preceded by a prewetting transition. Increasing the attractiveness of the wall above a certain value $\varepsilon_{w,cw}$ leads to a complete wetting scenario. In the vicinity of the transition from a prewetting scenario to a complete wetting scenario, the jump of the film thickness at the prewetting transition decays as $\klamm{\varepsilon_{w,cw}-\varepsilon_w}^{1/2}$ (see also Fig.~\ref{fig:TransitionPrewettingCompleteWetting} and Fig.~\ref{fig:JumpPrewettingEpsilonW}). 

\begin{figure}[phtb]
\begin{center}
\includegraphics[width=12cm]{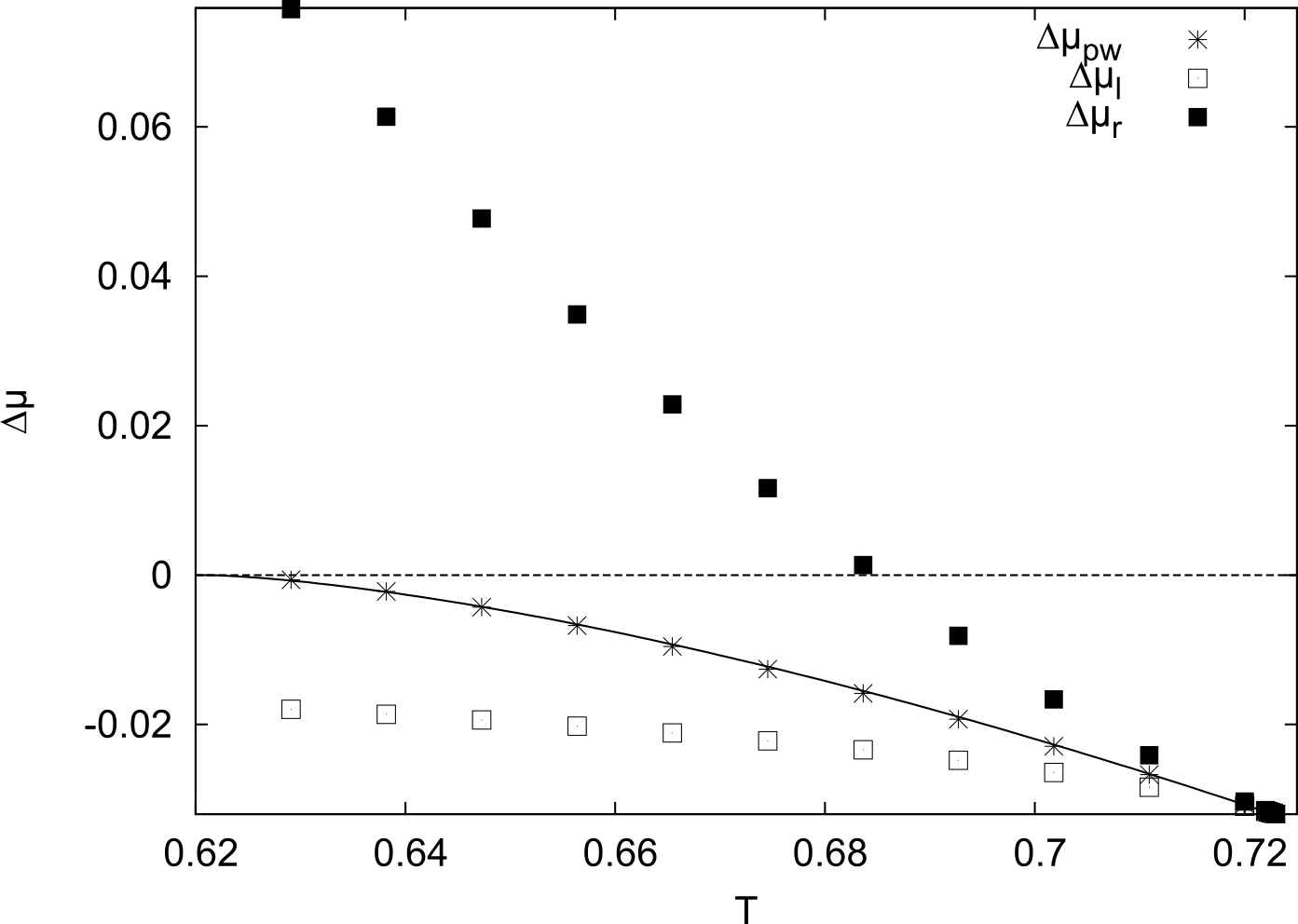}
\caption{Deviations of the chemical
potential from saturation at prewetting, $\Delta\mu_{pw}$,
and at the left and right saddle nodes as a function of
temperature. The wall parameters are $\sigma_w = 1.25$ and $\varepsilon_w = 0.8$. The solid line is the fit to $\Delta
\mu_{pw}(T) = -C (T - T_w)^{3/2}$ with $T_w = 0.62079$. The resulting coefficient is $C = 0.9839$. The dashed line marks the
locus of the chemical potential at saturation for the given temperature, $\Delta \mu = 0$. 
\label{fig:MuNodd_Temp}
}
\end{center}
\end{figure}

\begin{figure}[p]
\begin{center}
\includegraphics[width=12cm]{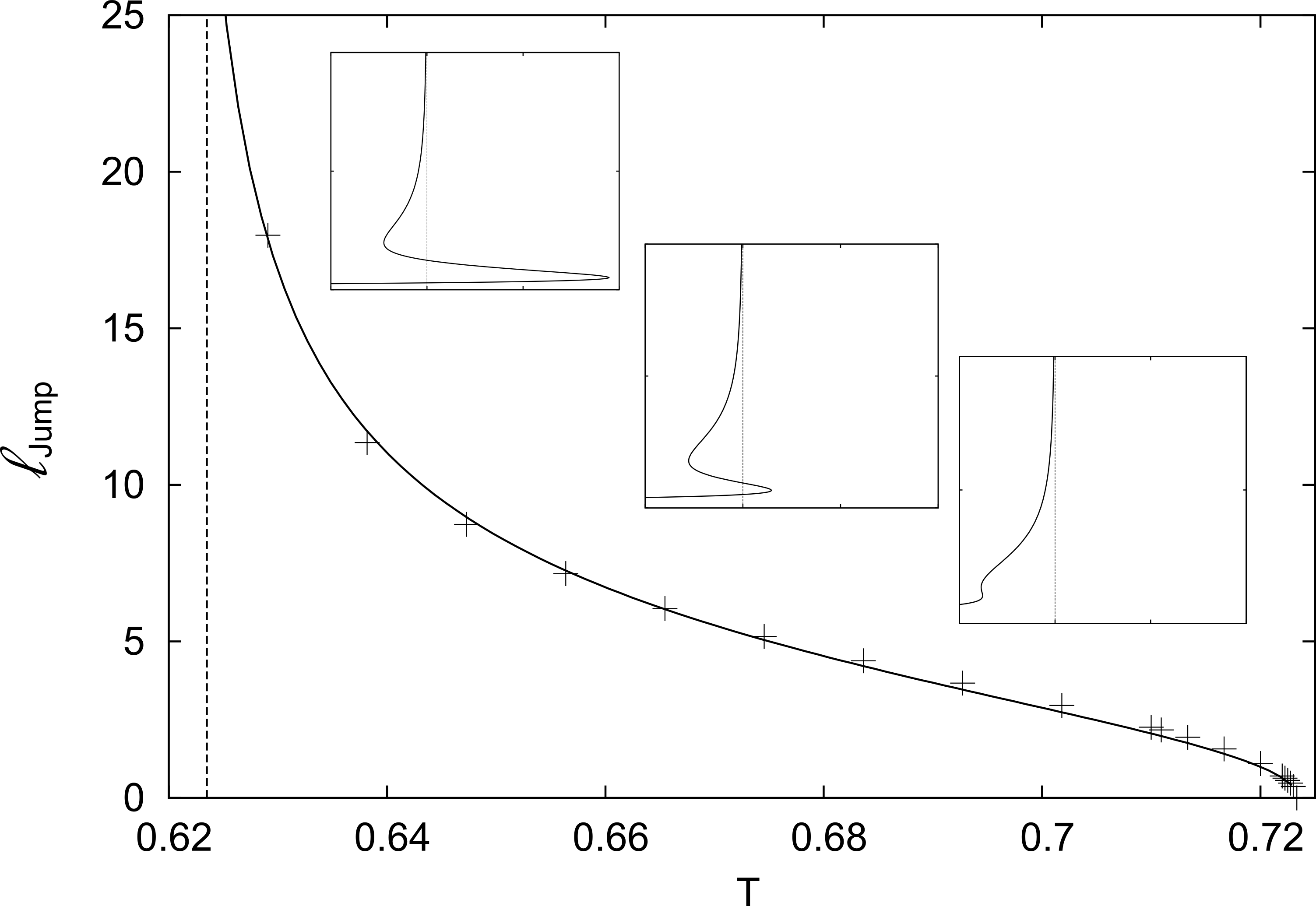}
\caption{Plot of the jump of the film thickness at the prewetting
transition vs. the temperature. The solid  line is a fit to the
equation $\ell_{jump}(T) = C \klamm{\frac{T_{cw} -
T}{T-T_w}}^{1/2}$ with $T_{cw} = 0.7235$ and $T_w = 0.62079$. The resulting coefficient is $C=5.2876$. The small
figures are $\Gamma - \Delta\mu$ diagrams for the temperatures
$T=0.629,0.674$ and $0.72$ on a range of  $[0,20\sigma]$ over
$[-0.04\varepsilon,0.08\varepsilon]$.
 }
\end{center}
\end{figure}

\begin{figure}[p]
\centering
\includegraphics[width=12cm]{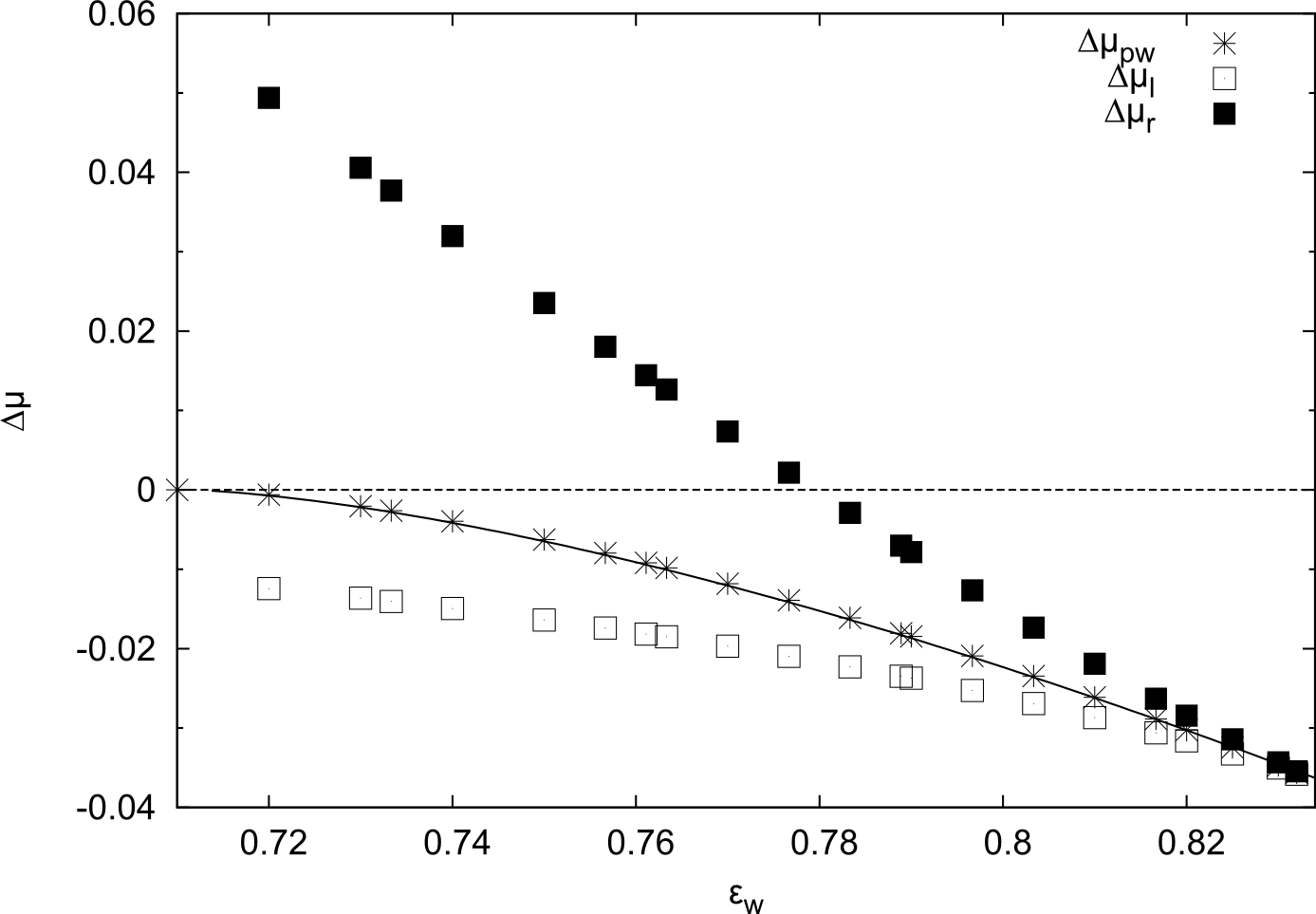}
\caption{Deviations of the chemical potential from the saturation one at prewetting, $\Delta\mu_{pw}$, and at the left and right saddle nodes as a function of the wall parameter $\varepsilon_w$ at temperature $T=0.7$ and for $\sigma_w=1.25$. The solid line is a fit to equation $\Delta \mu_{pw}\klamm{\varepsilon_w} = - C (\varepsilon_w- \varepsilon_{w,w})^{1.5}$, where $\varepsilon_{w,w} = 0.7124$. The resulting coefficient is $C =0.8589$.}
\label{fig:MuNodd_Epsilonw}
\end{figure}

\begin{figure}[p]
\centering
\includegraphics[width=12cm]{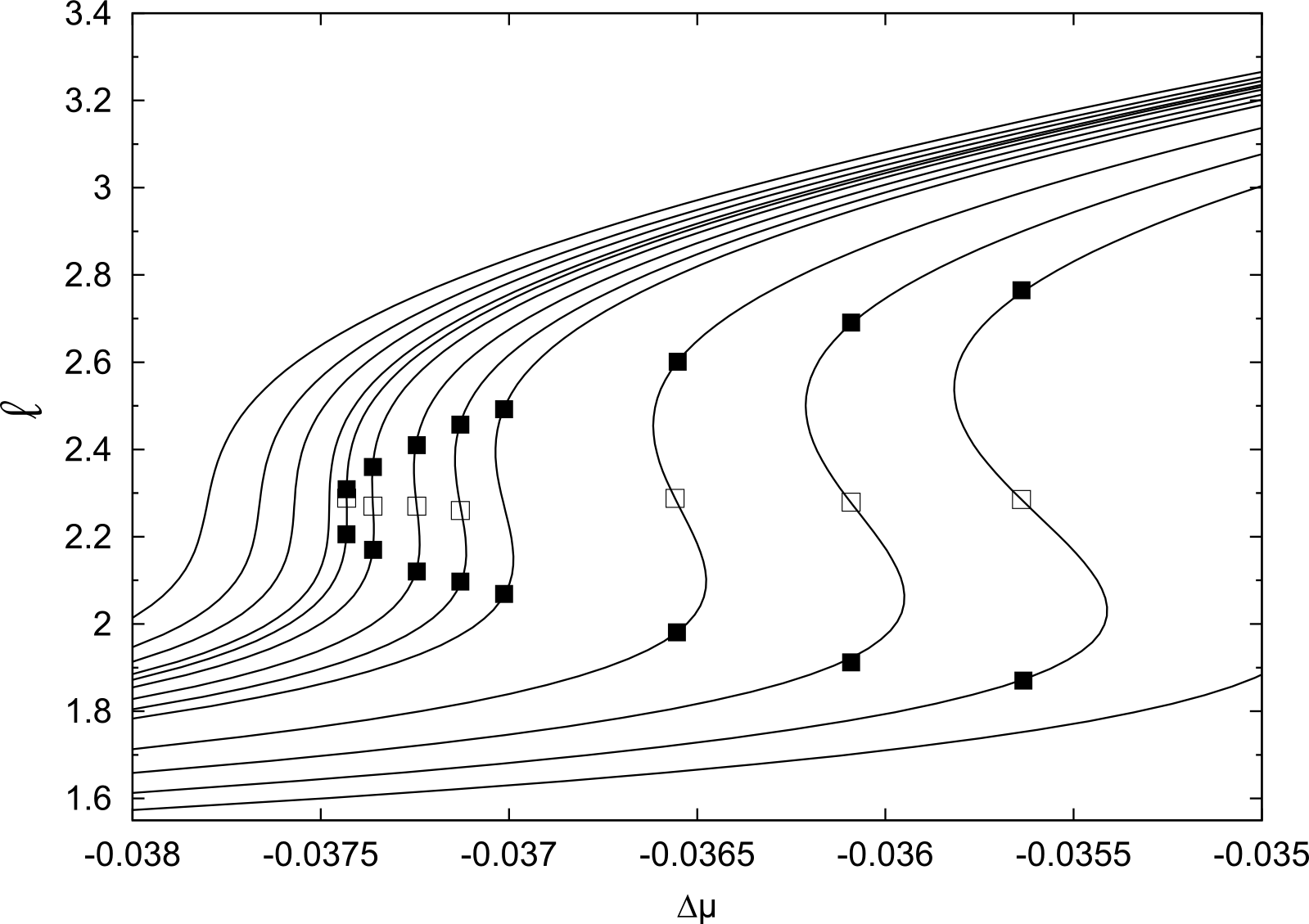}
\caption{Isotherms of the film thickness $\ell$ over the deviation of the chemical potential from saturation for a varying wall parameter $\varepsilon_w$ at the vicinity of the transition from a complete wetting scenario to a prewetting scenario at temperature $T=0.7$ and for $\sigma_w = 1.25$. From left to right, the isotherms correspond to $\varepsilon_w =\{0.8367,~0.8364,~0.8362,~0.8360, 0.8359,~0.8357,~0.8355,~0.8353,~0.8350,~0.8340,~0.8330,~0.8320,~0.8310\}$. The squares are at the prewetting transitions of the respective isotherms, where the black square are at the stable lower and upper branch and the white square is at the unstable branch.}
\label{fig:TransitionPrewettingCompleteWetting}
\end{figure}

\begin{figure}[ht] 
\centering 
\includegraphics{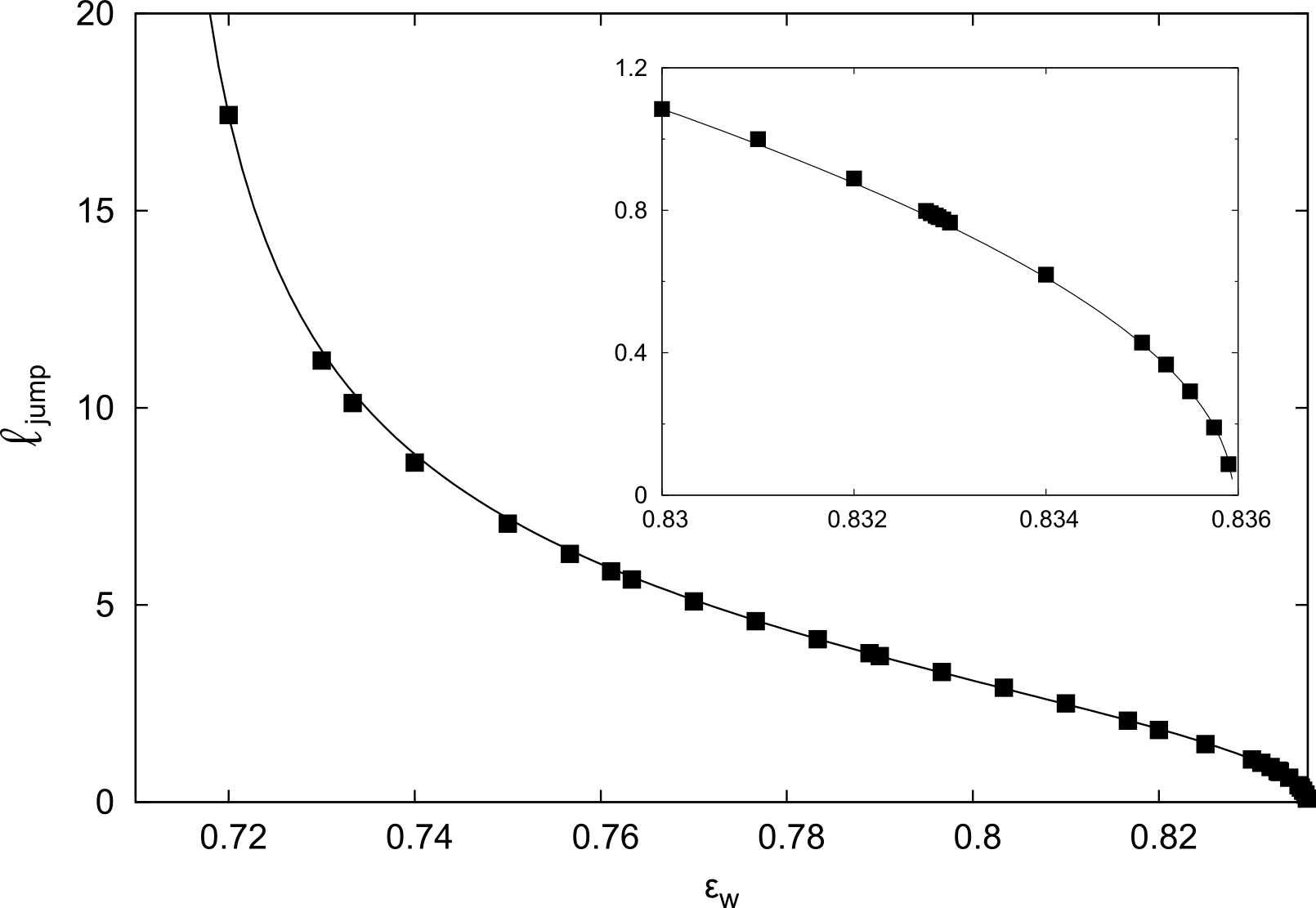}
\caption{Plot of the jump of the film thickness at the prewetting transition vs. the wall parameter $\varepsilon_w$ at temperature $T = 0.7$ and for $\sigma_w = 1.25$. The black squares are the result of numerical computations, whereas the solid line is a fit to the equation $\ell_{jump}(\varepsilon_w) = C\klamm{\frac{\varepsilon_{w,cw}-\varepsilon_w}{\varepsilon_w - \varepsilon_{w,w}}}^{1/2}$ with $\varepsilon_{w,cw} = 0.8359$ and $\varepsilon_{w,w} = 0.7110$. The resulting coefficient is $C = 4.85$.}
\label{fig:JumpPrewettingEpsilonW}
\end{figure}

\subsection{Analytic Prediction}

The SIA (\ref{eq:SharpKink_OmegaEx_4}) is applied on the case of a planar wall $W = \mathbb{R}^2\times \mathbb{R}^-$. The film volume is given by $V_f = \mathbb{R}^2\times [\delta,\ell)$ and the volume of the bulk gas is given by $V_B = \mathbb{R}^2\times [\ell,\infty)$. In this case, it is useful to introduce an excess grand potential per unit area $\gamma_{wall,\infty}$, which corresponds to the surface energy of the wall, as a function of the film thickness. (\ref{eq:SharpKink_OmegaEx_4}) yields
\begin{align}
\gamma_{wall,\infty}^{SIA}(\ell) = 
 - \Delta \mu \Delta n \ell 
+ \gamma_{wl,\infty}^{SIA} + \gamma_{lg,\infty}^{SIA} + \gamma_{B,\infty}^{SIA}(\ell), \label{eq:SharpKink_1D}
\end{align}
where (\ref{eq:SIA_OmegaWL}),(\ref{eq:SIA_OmegaLG}) and (\ref{eq:SharpKing_defBinding}) yield
\begin{align*}
\gamma_{wl,\infty}^{SIA} &= - \frac{n_l^2}{2} \int_{-\infty}^\delta \int_{\delta}^\infty \Phi_{\text{Pla}}\klamm{|z-z'|} dz' dz + n_l \int_\delta^\infty V_{\text{Pla}}(z) dz\\
&= \frac{3}{4} \pi n_l^2 + n_l \int_\delta^\infty V_{\text{Pla}}(z) dz,\\
\gamma_{lg,\infty}^{SIA} &= -\frac{\Delta n^2}{2} \int_{-\infty}^\ell \int_{\ell}^\infty \Phi_{\text{Pla}}\klamm{|z-z'|} dz' dz\\
&= \frac{3}{4} \pi \Delta n^2,\\
\text{and} \qquad
\gamma_{B,\infty}^{SIA}(\ell) &= n_l \Delta n \int_{-\infty}^\delta \int_\ell^\infty \Phi_{\text{Pla}}(|z-z'|) dz' dz 
- \Delta n \int_\ell^\infty V_{\text{Pla}}(z) dz\\
&= \Delta n{ \int_{\ell}^{\infty} \klamm{ n_l \Psi_{\text{Pla}}(z-\delta) - V_{\text{Pla}}(z)} dz} 
\end{align*} 
The only term carrying a $\ell$-dependence is the binding potential $\gamma_{B,\infty}^{SIA}(\ell)$. Hence, minimizing $\gamma_{wall,\infty}^{SIA}$ with respect to $\ell$ yields
\begin{align*}
\Delta n\klamm{- \Delta\mu - n_l \Psi_{\text{Pla}}(\ell-\delta) + V_{\text{Pla}}(\ell)} = 0
\end{align*}
Now, assume that $\ell \gg 1$. The repulsive part of the wall potential (\ref{eq:WallPotential}) as well as the repulsive part of the interaction potential (\ref{eq:1D_Numerics_Vn}) is neglected, as they are of order $\ell^{-9}$ whereas the attractive part is of order $\ell^{-3}$. This leads to 
\begin{align}
- \Delta\mu &=
\frac{2}{3} \pi \klamm{ \frac{\varepsilon_w \sigma_w^6 }{\ell^3} -\frac{n_l}{(\ell - \delta)^3}} + O\klamm{\ell^{-9}}
\label{eq:1D_DMu_SharpInterface}
\end{align}
In order to compare this analytical prediction with the numerical results obtained from the continuation method, the film thickness has to be written as a functional of the density profile $n\klamm{\cdot}$. This is done such that the adsorption $\Gamma$ of $n(\cdot)$ corresponds with the adsorption of the sharp-interface profile $n_\ell^{SIA}$. (\ref{eq:PlanarAdsorption}) yields
\begin{align*}
\Gamma[n\klamm{\cdot}] =& \Gamma[n_\ell^{SIP}\klamm{\cdot}]\\
\Rightarrow \qquad \ell[n\klamm{\cdot}] \defi& \delta + \frac{1}{\Delta n} \Gamma[n\klamm{\cdot}],
\end{align*}
where $\delta$ is the thickness of the liquid-gas interface, usually a value around $0.9$. In Fig.~\ref{fig:1DAsymptotic}, the prediction (\ref{eq:1D_DMu_SharpInterface}) is compared with numerical results showing a very good agreement for large film thicknesses.
\begin{figure}[ht]
\centering
\includegraphics{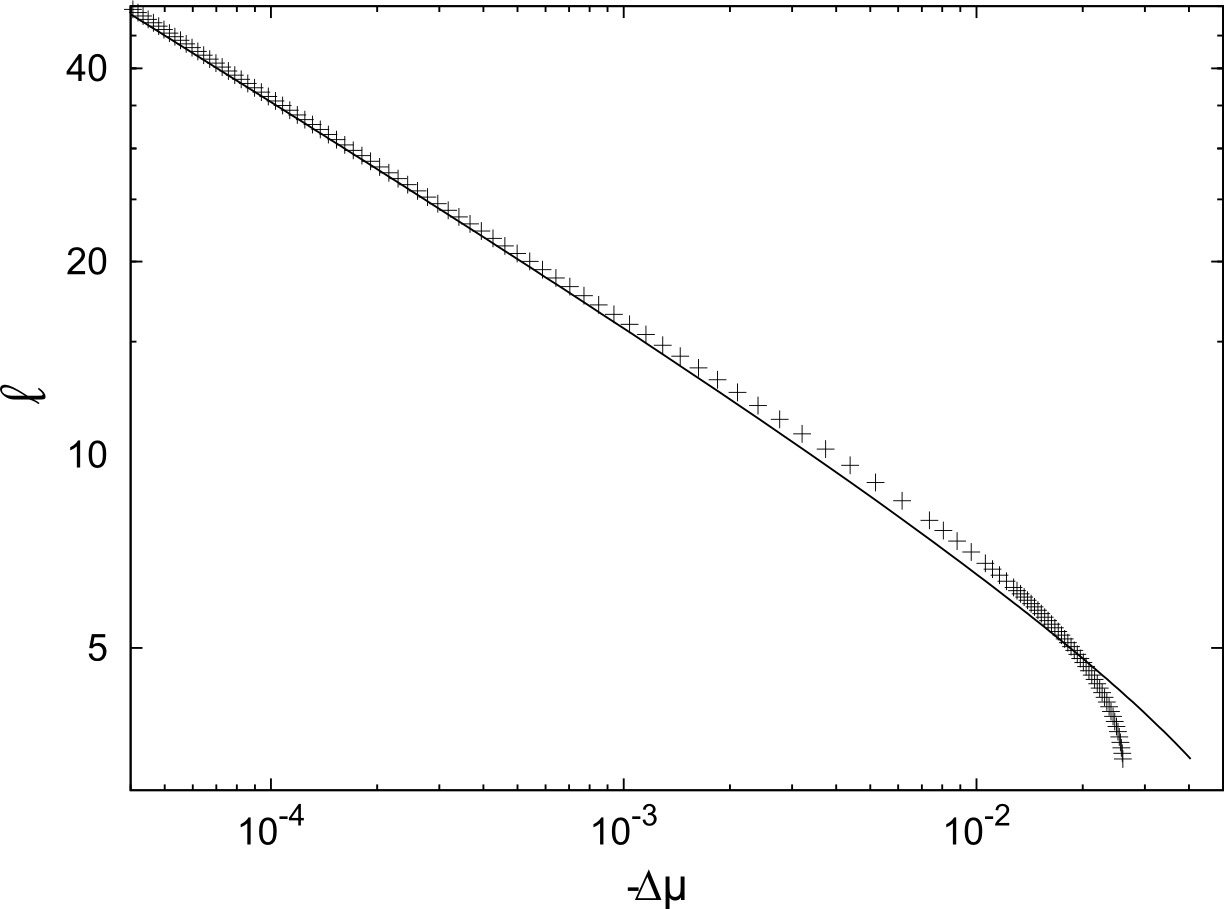}
\caption{Log-log plot of the film thickness $\ell$ as a function of deviation of the chemical potential from saturation $\Delta \mu$ for $T = 0.7$ and wall parameters $\varepsilon_w = 0.8$ and $\sigma_w = 1.25$. The solid line is the analytical prediction in Eq.(\ref{eq:1D_DMu_SharpInterface}) obtained from a SIA, demonstrating that adsorption divergence upon approaching coexistence satisfies a power law $\ell \sim \Delta \mu^{-1/3}$, which is characteristic for the complete wetting regime of long range potentials.}
\label{fig:1DAsymptotic}
\end{figure}

\section{Wetting on a Curved Substrate \label{sec:Wetting_Sphere}}

We now examine the influence of curved substrates on wetting. As
pointed out in Sec.~\ref{sec:DenProf_Sphere} we use a spherical wall
as a model system. Unlike the planar case, \NewStuff{the liquid-gas surface tension now influences} significantly the wetting behavior. This leads
to inaccuracies of the SIA. Hence, we will employ the PFA as an analytical method to obtain equilibrium film thicknesses. Here, we show that based on some simple assumptions for the density profile at the wall-liquid and the liquid-gas interface, one can obtain a simple and exact equation relating the film thickness $\ell$ and the chemical potential $\mu$ with the radius of the substrate $R$.

\subsection{Isotherms for a Spherical Wall \label{sec:SphericalIsotherm}}

\begin{figure}[phtb]
\centering
\includegraphics[width=12cm]{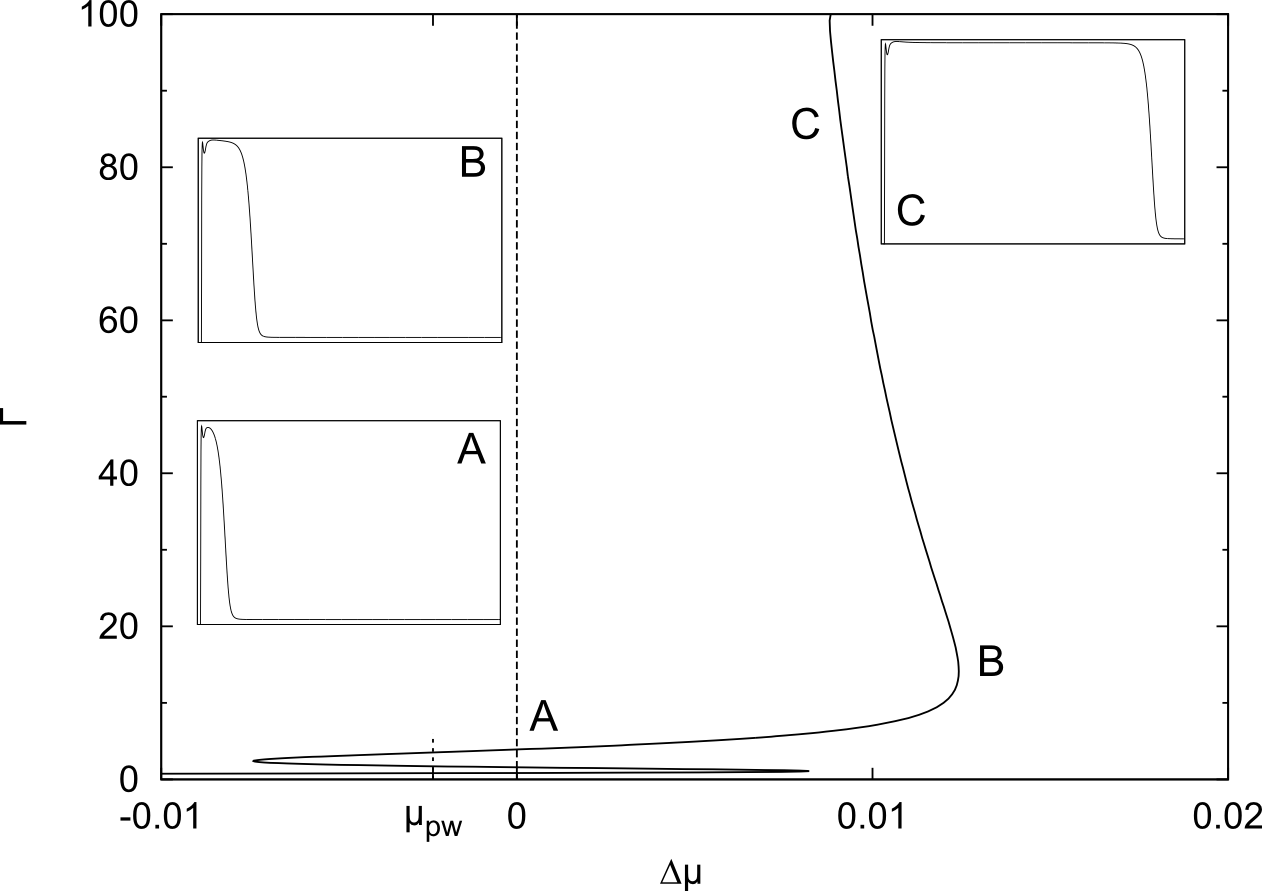}
\caption{$\Gamma-\Delta\mu$ bifurcation diagram for $T = 0.7$ for a sphere with radius $R=100$ and with parameters $\varepsilon_w = 0.8$ and $\sigma_w = 1.25$. The inset subplots show the density $\rho$ over the distance from the wall $(r-R)$ in the ranges $[0,0.7]$ over $[0,80]$. The point A is at saturation $\Delta \mu = 0$. It separates the stable branch to the left from the metastable branch to the right. Point B is at the right turning point, whereas point $C$ is at the unstable branch. $\mu_{pw}$ is at the first prewetting transition.}
\label{fig:ShowOnlySphericalIsotherm}
\end{figure}

\begin{figure}[p]
\centering
\includegraphics[width=12cm]{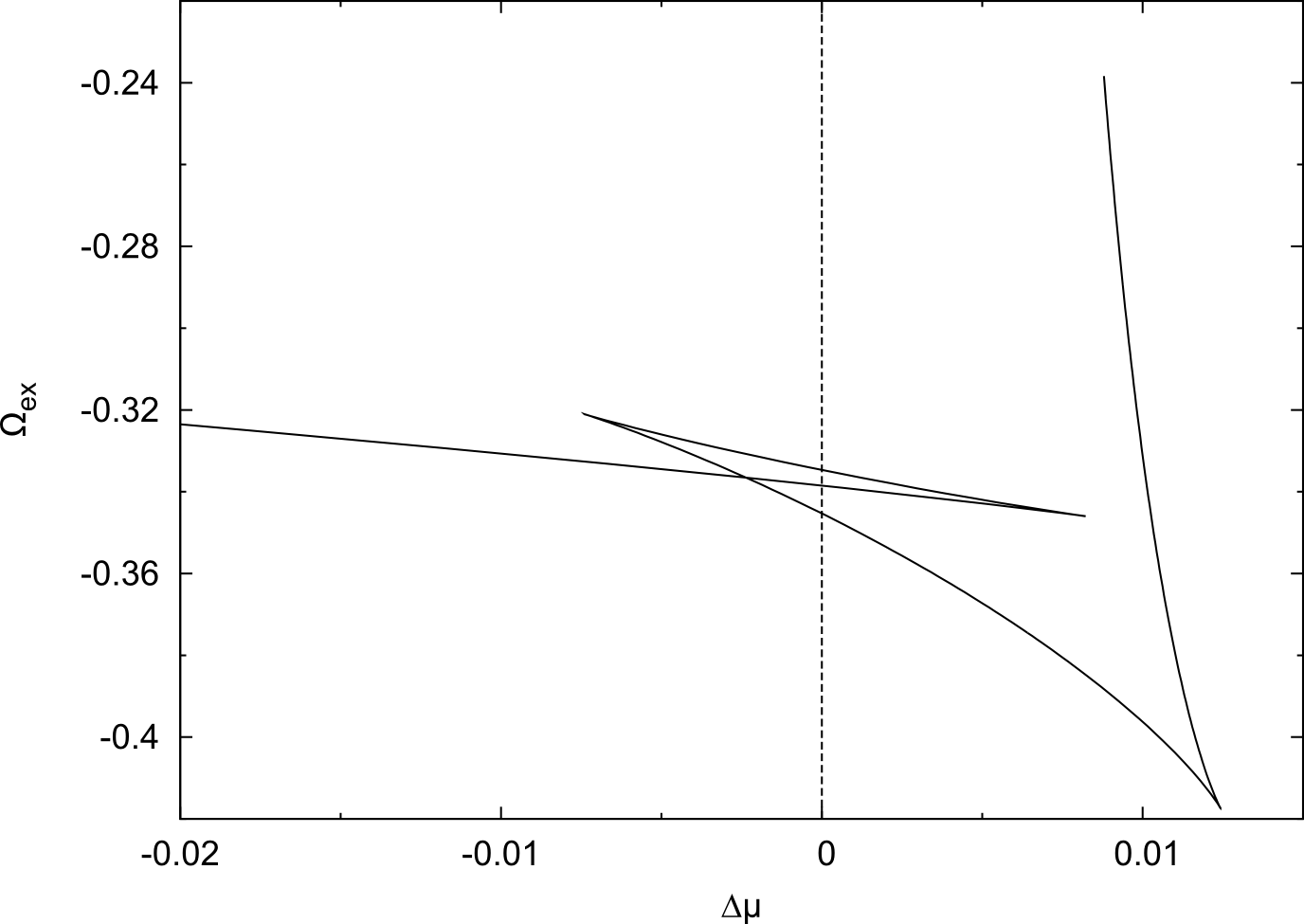}
\caption{Plot of the excess grand potential $\Omega_{ex}$ over the deviation of the chemical potential from saturation, $\Delta \mu$, for the wetting transition in Fig.~\ref{fig:ShowOnlySphericalIsotherm}}
\label{fig:OmegaMuSpherical}
\end{figure}
\NewStuff{
Fig. \ref{fig:ShowOnlySphericalIsotherm} depicts a typical isotherm for a prewetting situation on a spherical wall. For low values of the adsorption $\Gamma$, there is a multi-valued S-type curve similar to the planar case in Fig.~\ref{fig:IsothermProfiles}. This is the signature of a first-order wetting transition at $\mu_{pw}$ which separates the lower stable branch for a very thin film at $\mu < \mu_{pw}$ from the upper stable branch at $\mu_{pw} < \mu < \mu_{sat}$. 

The most striking difference to the isotherm for a planar wall is, that the isotherm crosses the saturation line in A. For positive values of $\Delta \mu$, the bulk liquid phase is more stable than a thin liquid film in a bulk gas phase, which means that the isotherm to the right of the saturation line is not stable and consequently, a second
first-order wetting transition has to take place in A. We conclude that, unlike the planar prewetting case where the film thickness goes smoothly to infinity as saturation is approached, the spherical case isotherm exhibits a maximal film thickness $\ell^\ast$. 
}Physically, this can be
explained by the fact that the surface of the liquid-gas interface grows with
increasing film thickness. Hence, the surface energy of the liquid
gas interface competes with the energy necessary to \NewStuff{increase} the
liquid film. As a result, one obtains a maximal film thickness $l^\ast$. 

As we shall demonstrate in Sec.~\ref{sec:SphericalAsymptoticFT}, the
unstable branch above point $B$ in
Fig.~\ref{fig:ShowOnlySphericalIsotherm} approaches saturation
as slowly as $\Delta \mu^{-1}$. Remark that this branch approaches saturation from the right. This means that the excess grand potential increases (see also Sec.~\ref{sec:MaxwellConstruction}), which leads to a structural change in the plot of the excess grand potential over the chemical potential as in Fig.~\ref{fig:OmegaMuSpherical}, when compared to the planar case in Fig.~\ref{fig:OmegaMuIsotherm}.

The appearance of a maximal film thickness has a direct impact on the nature of  wetting transitions as shown in Fig.~\ref{fig:CompletePartialPre} for a planar wall. There, we have compared the isotherms for a complete wetting scenario with and without a prewetting transition and a partial wetting scenario. The corresponding isotherms on a spherical substrate for a complete wetting and a pre-wetting scenario have an additional first order wetting transition at $\mu_{sat}$ ( see Fig.~\ref{fig:CompletePrePartialSph}). In the pre-wetting scenario, there are two first order wetting transitions (Fig. \ref{plot:PreWettingSph}). In Fig.~\ref{plot:CompleteWettingSph}, the analogous scenario to a complete wetting scenario in the planar case is shown. Here, the adsorption does not go to infinity as $\mu \to \mu_{sat}^-$, but instead is limited by a maximal film thickness $\ell^\ast$. 
\begin{figure}[p]
\centering
\subfigure[Partial Wetting]{
\includegraphics[width=8cm]{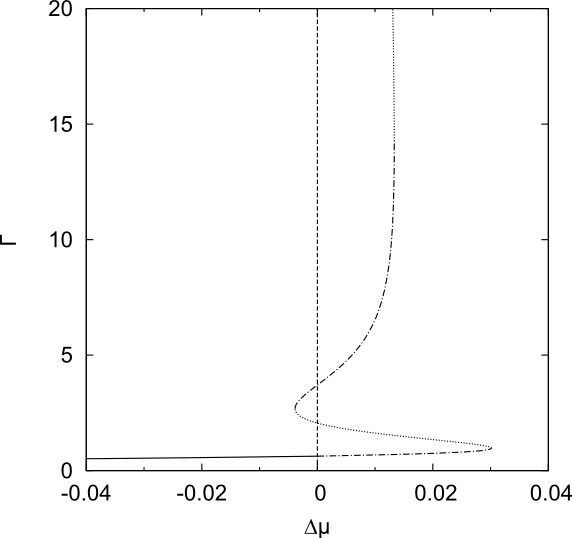}
\label{plot:PartialWettingSph}
}
\subfigure[Prewetting Transition]{
\includegraphics[width=8cm]{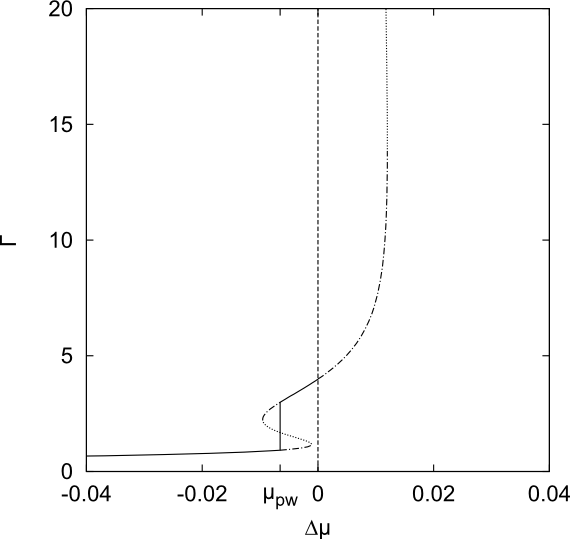}
\label{plot:PreWettingSph}
}
\subfigure[Pseudo-Complete Wetting]{
\includegraphics[width=8cm]{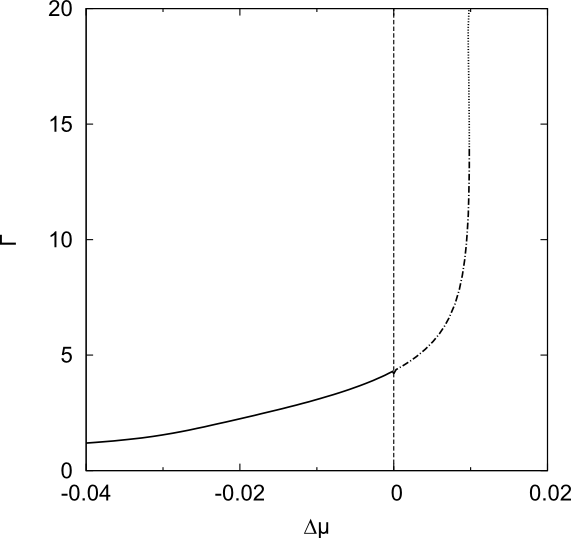}
\label{plot:CompleteWettingSph}
}
\caption{Plots of the isotherms of adsorption $\Gamma$ over the deviation of the chemical potential from saturation $\Delta \mu$ for a spherical substrate with radius $R = 100$ for (a) a partial wetting scenario at $T = 0.68$, (b) a prewetting scenario at $T = 0.71$ and (c) a pseudo-complete wetting scenario at $T = 0.76$. The wall parameters are $\varepsilon_w = 0.8$ and $\sigma_w = 1.25$. The solid lines are the stable branches of the isotherms. The metastable branches are the dot-dash lines whereas unstable branches are drawn as dotted lines.}
\label{fig:CompletePrePartialSph}
\end{figure}

\NewStuff{In Figure~\ref{fig:SphericalIsothermProfiles} typical isotherms
for a prewetting situation in the planar and the spherical case are compared}. The most striking property of the spherical case isotherm is that it is shifted to the right compared to its counterpart in the planar case. In
Sec.~\ref{sec:SphAsymptoticLargeR} we shall demonstrate analytically
that for a large radius of the sphere $R$ and large film thickness
$\ell$ such that $R \gg \ell \gg 1$, this shift corresponds
precisely to the Laplace pressure $2\gamma_{lg,\infty}/R$. 

\NewStuff{ In the subplots of Fig. \ref{fig:SphericalIsothermProfiles}, the density profiles of the spherical and the planar case are compared. For the points B and C, which are at the same film thicknesses as B' and C', respectively, the density profiles are practically indistinguishable. The subplot $A,A'$ shows the density profiles at the upper branch of the prewetting transition at $\mu_{pw}^+$. The differences in the profiles indicate that the film thickness at the prewetting transition changes with the curvature of the substrate.}

\begin{figure}[hbt]
\centering
\includegraphics[width=10cm]{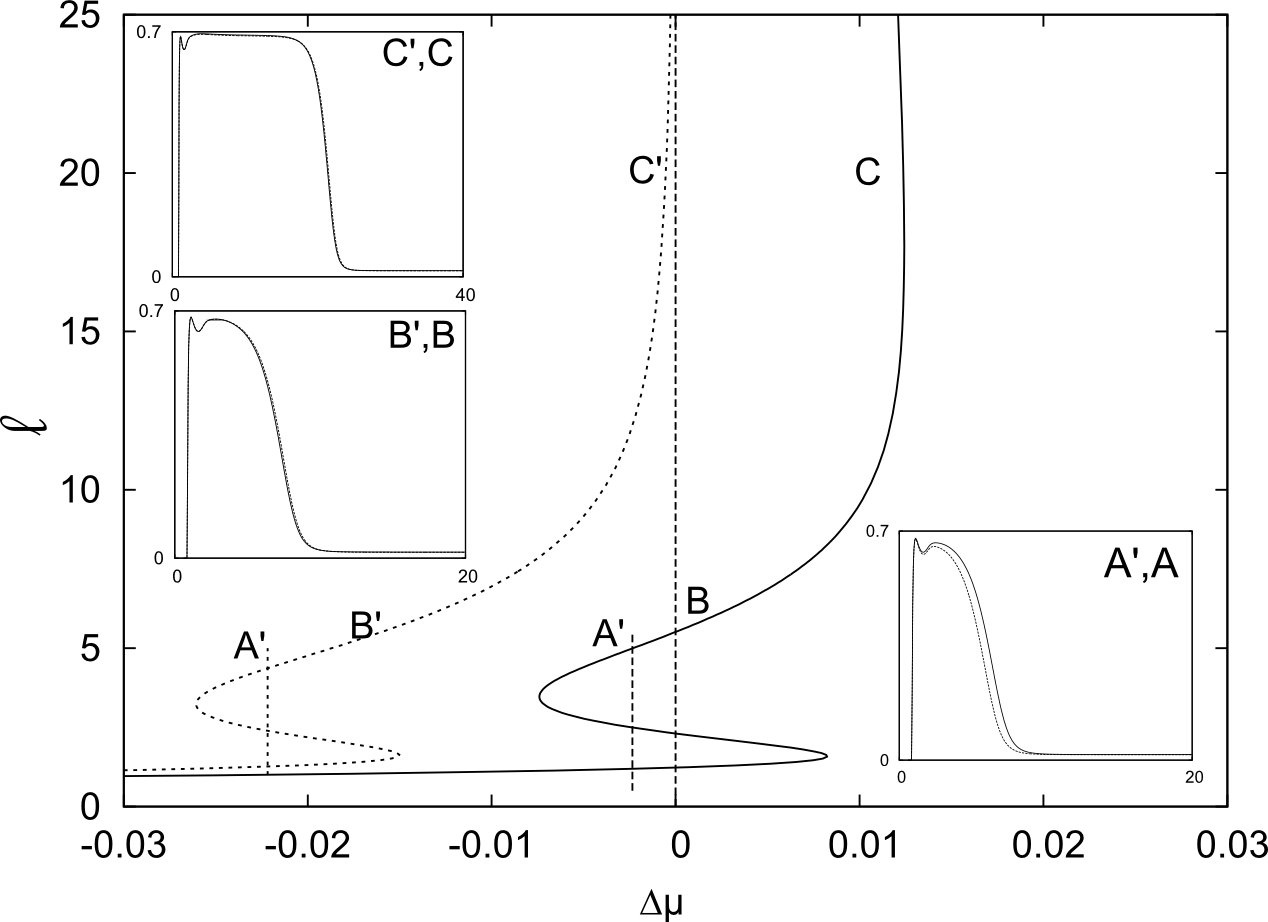}
\caption{Isotherms and density profiles for a planar
wall (dashed lines) and a sphere with $R = 100$ (solid lines)
at $T = 0.7$ and with wall parameters $\varepsilon_w
= 0.8$ and $\sigma_w = 1.25$. In order to
compare the planar to the spherical case, the film thickness
instead of adsorption is used as a measure. The subplots in the inset show
the dimensionless density $n$ as a function of the distance from the
wall $z$ and $(r-R)$ for the planar and the spherical
cases, respectively. The points $A$ and $A'$ are at the prewetting
transitions. Points $B,B'$ and $C,C'$ correspond to the same film
thickness; for these points, the planar and the spherical density
profiles are practically indistinguishable. $B'$ is at saturation whereas
$C$ is chosen such that the film thickness $\ell$ is $20$.
\label{fig:SphericalIsothermProfiles} }
\end{figure}

\subsection{Analytic Prediction}

The SIA has been one of the most used techniques to analyze wetting on substrates analytically. For chemical potentials close to saturation, this approximation leads to a simplified formulation of the excess grand potential as a function of the wall-liquid and the liquid-gas surface tension, as well a binding potential $\Omega_B$ (see also Eq. (\ref{eq:SharpKink_OmegaEx_4})). In order to show that the SIA is an accurate method to predict the asymptotic behavior of the isotherm as it approaches saturation $\mu\to \mu_{sat}^-$, Napi\'{o}rkowski and Dietrich~\cite{DietrichNapiorkowski_BulkCorrelation} calculated the excess grand potential of a profile for which the density is assumed to be everywhere constant except in the liquid-gas interface: 
\begin{align}
n(z) =  \left\{\begin{array}{ll}
0 & \text{ if } z \leq \delta \\
n_{l} & \text{ if } \delta < z < \ell -\kappa/2 \\
n_{lg}(z-\ell) & \text{ if } \ell-\kappa/2 \leq z \leq \ell + \kappa/2 \\
n_{g} & \text{ if } \ell + \kappa/2 < z
\end{array}\right.,
\label{eq:SoftInterfaceApproximation}
\end{align}
where $\kappa$ is the width of the liquid-gas interface,
$n_{lg}(z)$ is its shape and $\delta$ is the width of the
wall-liquid interface. This test function is similar to the PFA introduced in Sec.~\ref{sec:PFA}. Similar to (\ref{eq:OmegaEx_PFA}), Napi\'{o}rkowski and Dietrich wrote the excess grand potential as the sum of one term considering the deviation of the chemical potential from saturation, the liquid-gas and the wall-liquid surface tension as well as a binding potential. In the planar case, the liquid-gas surface tension does not depend on the film thickness $\ell$, which leads to the conclusion:
\begin{quote}
"At coexistence, the binding potential [..] carries the only $\ell$ dependence of [$\Omega_{ex}$] and contains the information about wetting transitions."\cite{DietrichNapiorkowski_BulkCorrelation}
\end{quote}
In the limit of a large film thickness $\ell \to \infty$, the binding potential of the PFA corresponds with the binding potential of the SIA in the planar case. Consequently, the wetting behavior close to saturation in the planar case can be predicted by the SIA.

However, this is not true for the spherical case. As we shall demonstrate shortly, the main reason for this is that the term in the excess grand potential including the liquid-gas surface tension depends on the film thickness $\ell$. But, as it is shown in Sec.~\ref{sec:DensityProfiles_PlanarWall}, there is a huge discrepancy between the exact surface tension and the sharp-interface surface tension. Thus leads to significant errors of the SIA in the spherical case.

In order to investigate the influence of the smooth interface on the wetting behavior, we make use of the PFA for the spherical case. The test function (\ref{eq:FiniteDifference_Splitting}), with a spherical wall $W = \{{\bf r}\in\mathbb{R}^3: |{\bf r}| < R\}$, the wall-liquid interface volume $V_{wl} = \{{\bf r}\in\mathbb{R}^3: R\leq |{\bf r}| < R+\delta\}$, the film volume $V_f = \{{\bf r}\in\mathbb{R}^3:  R+\delta \leq |{\bf r}| < R + \ell - \kappa/2\}$, the volume of the liquid-gas interface  $V_f = \{{\bf r}\in\mathbb{R}^3:   R + \ell - \kappa/2 \leq |{\bf r}| < R + \ell + \kappa/2\}$ and the bulk volume  $V_B = \{{\bf r}\in\mathbb{R}^3:   R + \ell + \kappa/2 \leq |{\bf r}|\}$ yields:
\begin{align*}
n^{PFA}(r) =
\left\{
\begin{array}{lll}
0 & \text{ if } & r \in [0,R)\\
n_{wl}(r-R) & \text{ if } & r \in [R,R+\delta)\\
n_{l} & \text{ if } & r \in [R+\delta,R+\ell-\frac{\kappa}{2})\\
n_{lg}(r-R-\ell) & \text{ if } & r \in [R+\ell-\frac{\kappa}{2},R+\ell+\frac{\kappa}{2})\\
n_{g} & \text{ if } & r \in  [R+\ell+\frac{\kappa}{2},\infty) 
\end{array}
\right. .
\end{align*}
We assume that the density at the wall-liquid interface is bounded from above by the liquid bulk density: $n_{wl}(r) < n_l$. Furthermore, we assume that $n_{lg}(r)$ is a monotonically decreasing function. $\ell$ is such that it defines the position of the Gibbs-dividing surface of the liquid-gas interface:
\begin{align*}
4 \pi \int_{R+\ell -\kappa/2}^{R+\ell} r^2\klamm{ n^{PFA}(r) - n_l } dr
+
4 \pi \int_{R+\ell}^{R+\ell + \kappa/2} r^2\klamm{ n^{PFA}(r) - n_g } dr
=0.
\end{align*}
In this case, the excess grand potential (\ref{eq:OmegaEx_PFA}) yields
\begin{align*}
\Omega_{ex,sph}^{PFA}(R,\ell,\delta,\kappa,\{n_{lg}(r-R-\ell)\},\{n_{wl}(r-R)\}) =& 
- \frac{4}{3} \pi \Delta \mu \Delta n \klamm{ \klamm{R+\ell}^3 - R^3 } + \notag\\
&+ \Omega_{lg,sph}^{PFA}(R + \ell,\kappa,\{n_{lg}(r-R-\ell)\})+\\
&+ \Omega_{wl,sph}^{PFA}(R,\delta,\{n_{wl}(r-R)\}) + \\
& + \Omega_{B,sph}^{PFA}\klamm{R,\ell,\delta,\kappa,\{n_{wl}(r-R)\},\{n_{lg}(r-R-\ell)\}} ,
\end{align*}
where the liquid-gas excess grand potential (\ref{eq:PFA_OmegaLG}) is 
\begin{align*}
\Omega_{lg,sph}^{PFA}(R,\kappa,\{n_{lg}(r)\}) \defi& - 4\pi\int_{R- \kappa/2}^{R} r^2\klamm{ p\klamm{n_{lg}(r)} - p\klamm{n_{l}}} dr
 -4\pi\int_R^{R+ \kappa/2} r^2\klamm{p\klamm{n_{lg}(r)} - p\klamm{n_{g}}} dr-\\
&- 2\pi\klamm{ \int_{R-\kappa/2}^{R+\kappa/2} \klamm{ \klamm{ n_{lg}(r) - n_{l} }^2 \Psi_{in,R-\kappa/2}\klamm{r} + \klamm{ n_{lg}(r) - n_{g} }^2 \Psi_{out,R+\kappa/2}\klamm{r}}r^2 dr
}+\\
&+ 2\pi 
\int_{R-\kappa/2}^{R+\kappa/2} 
\int_{R-\kappa/2}^{R+\kappa/2}
r^2 n_{lg}(r)\klamm{ n_{lg}(r') - n_{lg}(r) } \Phi_{Sph}\klamm{r,r'} 
dr' dr-\\
&- 2\pi(n_{l} - n_{g})^2
\int_{0}^{R-\kappa/2} 
\int_{R+\kappa/2}^{\infty}
r^2\Phi_{sph}\klamm{r,r'} dr' dr,
\end{align*}
and the wall-liquid excess grand potential (\ref{eq:PFA_OmegaWL}) is 
\begin{align*}
\Omega_{wl,sph}^{PFA}(R,\delta,\{n_{wl}(r)\}) \defi& 
- 4\pi \int_{R}^{R+\delta} r^2 \klamm{p\klamm{n_{wl}(r)} - p\klamm{n_{l}}} dr-\\
&- 2\pi \int_R^{R+\delta} \klamm{\klamm{ n_{wl}(r) - n_{l} }^2 \Psi_{out,R+\delta}\klamm{r} + n_{wl}(r)^2 \Psi_{in,R}\klamm{r}}r^2 dr+\\
&+ 2\pi 
\int_{R}^{R+\delta} 
\int_{R}^{R+\delta}
n_{wl}(r)r^2\klamm{ n_{wl}(r') - n_{wl}(r) } \Phi_{Sph}\klamm{r,r'} 
dr' dr-\\
&- 2\pi n_{l}^2
\int_{0}^{R} 
\int_{R+\delta}^{\infty}
r^2 \Phi_{Sph}\klamm{r,r'} dr' dr+\\
&+
4\pi \int_R^{R+\delta} V_{sph,R}(r) r^2 n_{wl}(r) dr +  4\pi n_{l} \int_{R+\delta}^\infty r^2 V_{sph,R}(r) dr.
\end{align*}
The binding potential (\ref{eq:OmegaB_PFA}) in the spherical case yields
\begin{align}
\Omega_{B,sph}^{PFA}\klamm{R,\ell,\delta,\kappa,\{n_{wl}(r)\},\{n_{lg}(r)\}} \defi&
 4 \pi \int_{0}^{R+\delta} \int_{R+\ell-\kappa/2}^\infty \klamm{n_{l} - n(r') }\klamm{n_{l} - n(r)} \Phi_{Sph}\klamm{r,r'} r^2 dr'dr-\notag\\
&- 4\pi \int_{R+\ell-\kappa / 2}^{\infty} \klamm{ n_{l} - n(r)}V_{sph,R}(r) r^2 dr.
\label{eq:OmegaBsph_PFA}
\end{align}

For the liquid-gas excess grand potential $\Omega_{lg,sph}^{PFA}(R,\kappa,\{n_{lg}(\cdot)\})$ it is assumed that the density-profile of the liquid-gas interface $n_{lg}(r)$ corresponds to the density profile of a drop of radius $R$. This allows to approximate the excess grand potential by the liquid-gas surface tension $\gamma_{lg,R}$ of the drop times its surface:
\begin{align}
\Omega_{lg,ex}(R,\kappa,\{n_{lg}(r)\})
=
4\pi R^2 \gamma_{lg,R}.
\end{align}
We now minimize the excess grand potential $\Omega_{ex,sph}^{PFA}$ with respect to the film thickness $\ell$. Doing this, we assume that the shape of the wall-liquid interface, given by $\delta$ and $n_{wl}(r)$, is constant. This yields
\begin{align}
\dif{\Omega_{ex,sph}^{PFA}}{\ell} = & - 4\pi \Delta \mu \Delta n \klamm{ R  +  \ell}^2
+ 8\pi ( R+ \ell) \gamma_{lg, R+ \ell} + 4\pi ( R+ \ell)^2 \dif{\gamma_{lg, R+ \ell}}{\ell}+\label{eq:DerivativeOmegaExcessFilmThicknessSpherical}\\
& + 4\pi \int_0^{ R+\delta} \klamm{n_{l} -
n( r')} \cdot h( r') d r'
-4\pi \int_{ R +  \ell -\kappa/2}^{ R + 
l+\kappa/2}  n_{lg} '( r- R- \ell) 
V_{sph,R}( r)  r^2 d r, \notag
\end{align}
where $h( r')$ is defined by:
\begin{align*}
h( r') \defi \int_{ R+ \ell -\kappa /2}^{ R+
\ell +\kappa /2} n_{lg} '( r- R-
\ell)\Phi_{Sph}( r, r')  r^2 d r.
\end{align*}
We first consider how expression
(\ref{eq:DerivativeOmegaExcessFilmThicknessSpherical}) depends on
the profile of the wall-liquid interface. Indeed, the fourth term,
which is the derivative of the binding potential $\Omega_{B,sph}^{PFA}$ defined in
Eq.~(\ref{eq:OmegaBsph_PFA}), is the only term having
such a dependence. Notice that the fourth term is an integral over
the product of the positive terms $( n_{l} -
n( r))$ and $h( r')$. It can hence be
simplified by replacing $n_{l} - n( r)$ by
its upper limit, $n_{l}$, and by reducing the domain of
integration from $[0, R+\delta]$ to $[0,
R+\delta^\ast]$ for some $\delta^\ast \in
(0,\delta)$ such that:
\begin{align}
\int_0^{ R+\delta} \klamm{n_{l} -
n( r)}h( r') d r' = n_{l}
\int_0^{ R + \delta^\ast} h( r') d r'.
\label{eq:DefiningEquation_DeltaStar}
\end{align}
Effectively, with this expression we replaced the wall-liquid
interface $n_{wl}( r)$ by the auxiliary parameter
$\delta^\ast$. The fourth term of Eq.~(\ref{eq:DerivativeOmegaExcessFilmThicknessSpherical}) can be written as
\begin{align*}
4 \pi n_{l}
\int_0^{ R + \delta^\ast} 
\int_{ R+ \ell -\kappa /2}^{ R+
\ell +\kappa /2} n_{lg} '( r- R-
\ell)\Phi_{Sph}( r, r')  r^2 d r
 d r' = 
4 \pi n_{l}
\int_{ R+ \ell -\kappa /2}^{ R+
\ell +\kappa /2} n_{lg} '( r- R-
\ell)\Psi_{in,R+\delta^\ast}( r, r')  r^2 d r,
\label{eq:WettingSphere_mid1}
\end{align*}
where $\Psi_{in, R+\delta^\ast}( r)$ is defined in
Eq.~(\ref{eq:Sph_PsiIn}).
The remaining terms of
Eq.~(\ref{eq:DerivativeOmegaExcessFilmThicknessSpherical}), in
particular (\ref{eq:WettingSphere_mid1}) and the fifth term, also involve the density profile of
the liquid-gas interface, $n_{lg}( r)$. They are both
of the following form:
\begin{align}
\int_{-\kappa/2}^{\kappa/2} n_{lg}'( r)( r)
f_{I,V}( R + \ell + r) d r,
\end{align}
where $f_I \defi n_{l}  r^2
\Psi_{in, R+\delta^\ast}( r)$ and $f_V \defi 
V_{sph, R}( r) r^2$ (see Eq.~(\ref{eq:Sph_WallPot})). Both quantities $f_{I,V}$ are
negative. Consequently, the mean value theorem for integrals can be
employed such that
\begin{align}
\int_{-\kappa/2}^{\kappa/2} n_{lg}'( r)
f_{I,V}( R +  \ell+ r) d r
= - \Delta n f_{I,V}( R +  \ell + \xi_{I,V}),
\label{eq:WettingSphere_IntermediateResult}
\end{align}
for some  $\xi_{I,V} \in (-\kappa/2,\kappa/2)$, where we made
use of the fact that $\int  n_{lg}'( r)  d r =
- \Delta  n$. Here, the shape of the liquid-gas interface
was replaced by the auxiliary parameters $\xi_I$ and $\xi_V$.

Now, insert (\ref{eq:WettingSphere_IntermediateResult}) into
expression (\ref{eq:DerivativeOmegaExcessFilmThicknessSpherical})
and set it to zero. Division by $4\pi \Delta n (R+ l)^2$ leads to the
following relation between the chemical potential $\Delta \mu$
and the film thickness $\ell$:
\begin{align}
0 =& -\Delta \mu  + \frac{2 \gamma_{lg, R+ \ell}}{\Delta n ( R+ \ell)} + \frac{1}{\Delta n} \dif{\gamma_{lg, R+ \ell}}{\ell}-\notag\\
 &\qquad - n_{l} \Psi_{in,
R+\delta^\ast}( R +  \ell + \xi_I)
\klamm{1+\frac{\xi_I}{ R+ \ell}}^2 +  V_{sph, R}( R +  \ell +
\xi_V)\klamm{1+\frac{\xi_V}{ R+ \ell}}^2.
\label{eq:WettingSph_ExactEq}
\end{align}
In order to compare this analytical prediction with numerical results, the film thickness $\ell$ is written as a functional of the density distribution $n(r)$ such that the adsorption corresponds with the adsorption of a film in the SIA with film thickness $\ell$:
\begin{align*}
\Gamma[n(\cdot)] &= \Delta n \int_{R+\delta}^{R+\ell} \klamm{\frac{r}{R}}^2 dr\\
\Rightarrow \qquad
\ell[n(\cdot)] &= \klamm{ \frac{3R^2 \Gamma[n(\cdot)]}{\pi \Delta n} + \klamm{R + \delta}^3 }^{1/3} - R
\end{align*}

\subsubsection{Asymptotic Behavior for Substrates with Large Radii \label{sec:SphAsymptoticLargeR}}

By fixing the film thickness $\ell$ and expanding expression (\ref{eq:WettingSph_ExactEq}) for large radius of the
wall, $R \to \infty$, we expect to regain the planar $\ell^{-3}$- law (\ref{eq:1D_DMu_SharpInterface}) 
from SIA. The first and
second term of (\ref{eq:WettingSph_ExactEq}) are expanded
for $R \to \infty$ to yield,
\begin{align}
\frac{2\gamma_{lg,R+l}}{R + \ell} + \dif{\gamma_{lg,R+\ell}}{\ell} =
\frac{2\gamma_{lg,\infty}}{R} + {O} \klamm{\frac{1}{R^2}},
\label{eq:ExpandSurfaceTension}
\end{align}
where we made use of the expansion in Eq.~(\ref{eq:SurfaceTensionSpherical_Tolman})
for the spherical surface tension, $\gamma_{lg,R+l}$. Subsequently,
the third and fourth term of
Eq.~(\ref{eq:WettingSph_ExactEq}) are expanded for
$\ell \to \infty$. We will do the computations for the fourth term involving the external potential (\ref{eq:Sph_WallPot}) first. We assume that $R^{-1} \ll 1$ and approximate 
\begin{align*}
\klamm{1+\frac{\xi_V}{ R+ \ell}}^2 \approx 1.
\end{align*}
For $\tilde r = \ell + \xi_V$, we get
\begin{align}
V_{sph,R}(R + \tilde r) = 
\varepsilon_w \sigma_w^3 \pi 
\frac{\sigma_w}{3(\tilde r + R)}\klamm{
\frac{\sigma_w^8}{30}\klamm{
\frac{\tilde r+ 10R}{(\tilde r+ 2 R)^9} - \frac{\tilde r-8R}{\tilde r^9}
}+ 
\sigma_w^2\klamm{
\frac{\tilde r-2R}{\tilde r^3} - \frac{\tilde r+4R}{(\tilde r+2 R)^3}
}} 
\notag
\end{align}
The first terms is of order $O\klamm{R^{-9}}$ and is neglected in the following. Rearranging the equation above yields
\begin{align}
V_{sph,R}(R + \tilde r) &\approx 
\frac{\varepsilon_w \sigma_w^4 \pi}{ 3 \tilde r^3}
\klamm{
\frac{\sigma_w^8}{30\tilde r^6}
\frac{8-\frac{\tilde r}{R}}{1 + \frac{\tilde r}{R}}
- \frac{2\sigma_w^2}{\klamm{1+\frac{\tilde r}{2R}}^3}
}.
\notag
\end{align}
Now, we assume that the fraction $\frac{\tilde r}{R}$ is small and expand around zero. This means that we assume that the radius of the substrate is significantly larger than the film thickness. This yields
\begin{align*}
\frac{8-\frac{\tilde r}{R}}{1 + \frac{\tilde r}{R}} = 8 + O\klamm{\frac{\tilde r}{R}}\\
\frac{1}{\klamm{1+\frac{\tilde r}{2R}}^3} = 1 + O\klamm{\frac{\tilde r}{R}}
\end{align*}
Hence, we obtain
\begin{align}
V_{sph,R}(R + \tilde r) &\approx 
\frac{\varepsilon_w \sigma_w^6 \pi}{ 3 \tilde r^3}
\klamm{
-2 + \frac{8\sigma_w^6}{\tilde r^6}  + O\klamm{\frac{\tilde r}{R}}
}.
\notag
\end{align}
We neglect terms of order
$\frac{\tilde r}{R}$. In one further step, $\tilde r$ is supposed to be large, such that $\tilde r^{-6} \ll 1$. This leads to
\begin{align}
V_{sph,R}(R + \tilde r) &\approx 
- \frac{2\varepsilon_w \sigma_w^6 \pi}{ 3 \tilde r^3},
\notag
\end{align}
which finally yields
\begin{align*}
 V_{sph, R}( R +  \ell +
\xi_V)
\approx
- \frac{2}{3} \varepsilon_w \sigma_w^6 \pi
\frac{1}{\ell^3}
\end{align*}
where we made the assumption $\xi_V \ll \ell$ (recall that $\xi_V \in \klamm{-\kappa/2,\kappa/2}$). The third term of Eq.~(\ref{eq:WettingSph_ExactEq}) can be simplified analogously (compare (\ref{eq:Sph_WallPot}) with (\ref{eq:Sph_PsiIn})) such that we obtain
\begin{align}
&- n_{l} \Psi_{in,R+\delta^\ast}(R + \ell +
\xi_I)\klamm{1+\frac{\xi_I}{R+\ell}}^2
+ V_{sph,R}(R + \ell + \xi_V)\klamm{1+\frac{\xi_V}{R+\ell}}^2
\approx \frac{2\pi}{3\ell^3}\klamm{ n_{l} - \varepsilon_w \sigma_w^6}. 
\label{eq:ExpandNonLocalTermsAsym}
\end{align}
for $\xi_{I,V}\ll \ell$, $R^{-1} \ll 1$ and $\frac{\ell}{R} \ll 1$.
Inserting (\ref{eq:ExpandNonLocalTermsAsym}) and
(\ref{eq:ExpandSurfaceTension}) into
(\ref{eq:WettingSph_ExactEq}) yields,
\begin{align}
\Delta \mu  - \frac{2 \gamma_{lg,\infty}}{R\Delta n} \approx
 \frac{2\pi}{3\ell^3}\klamm{ n_l - \varepsilon_w \sigma_w^6}.\label{eq:SphericalAsymLargeR}
\end{align}
From the left-hand-side of Eq.~(\ref{eq:SphericalAsymLargeR}) it is evident
that the deviation of the chemical potential from saturation times
the density difference, $\Delta \mu \Delta n$, equilibrates
the Laplace pressure $2 \gamma_{lg,\infty}/R$.

This property manifests itself if density profiles for a planar wall
at $\Delta \mu < 0$ are compared with density profiles at saturation
for spherical walls. Here, we choose $R$ such that $2
\gamma_{lg,\infty}/R$ is equal to $\Delta n|\Delta\mu|$.
In Fig. \ref{fig:ComparePlanarSphericalDensityProfiles} we see a
very good agreement between the two density profiles. This result is rather surprising, as we only expected to have an equivalence of the film thicknesses, but obtained equal density profiles for a planar and a spherical substrate. A
similar result was obtained by Stewart and Evans for drying on a
hard spherical wall~\cite{StewartEvans}. 

The approximations made here are also valid for the critical film
thickness $l^\ast$ at which the isotherm crosses the saturation line
as in Fig.~\ref{fig:SphericalIsothermProfiles}. Setting $\Delta \mu$
to zero in (\ref{eq:SphericalAsymLargeR}) leads to an $R^{1/3}$
dependence of the critical film thickness. In
Fig.~\ref{fig:FilmThicknessAtSaturation}, this approximation is
favorably compared with numerical results. Furthermore, it is shown in Fig.~\ref{fig:MuNoddVaryingR} that the chemical potential at the prewetting transition and at the saddle nodes of bifurcation approaches its value in the limit of zero curvature as $\sim 1/R$.

\begin{figure}[hbt]
\centering
\includegraphics[width=10cm]{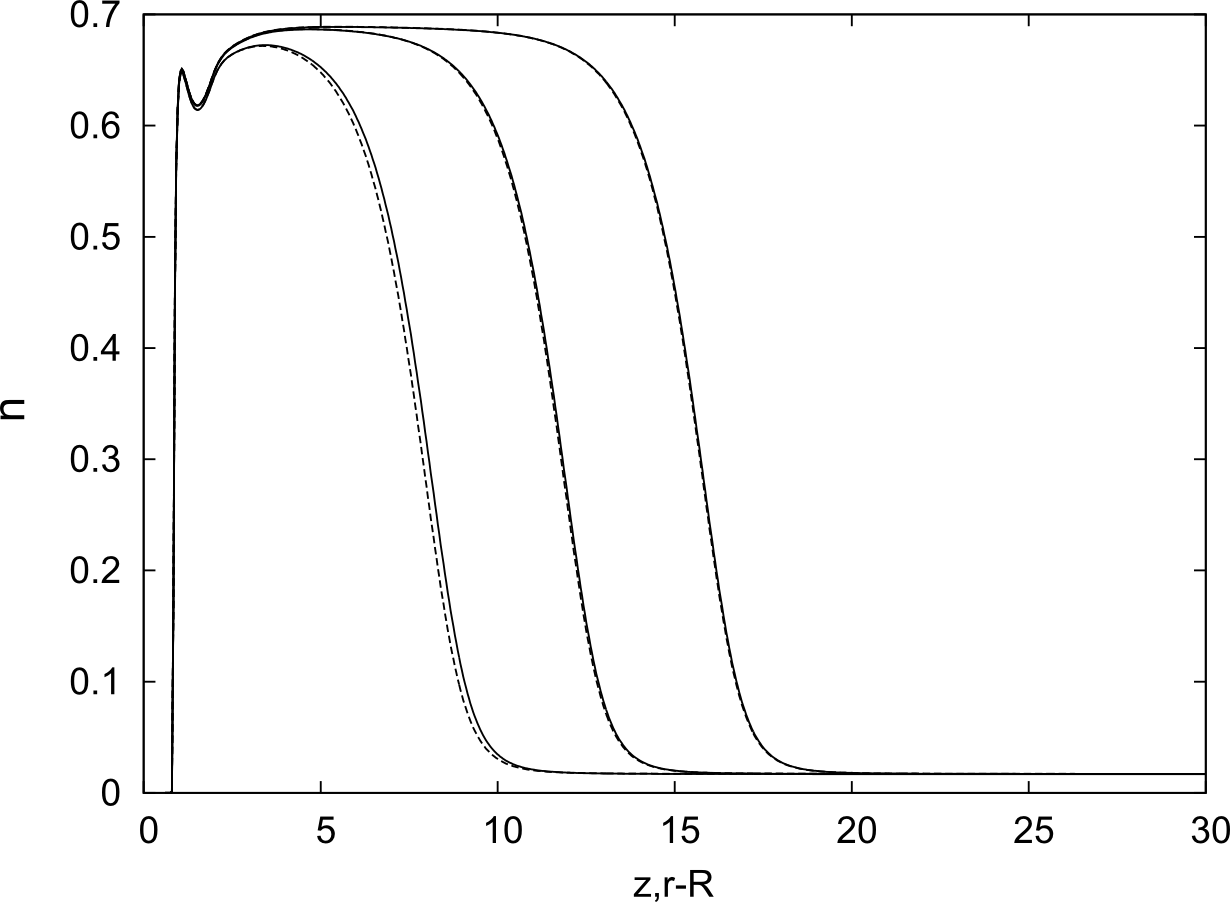}
\caption{Plots of density as a function of the distance from the wall for three
pairs of planar and spherical walls for a temperature $T = 0.7$ and wall parameters
$\varepsilon_w = 0.8$ and $\sigma_w = 1.2$. The planar density profiles
(solid lines) were computed for $\Delta \mu =
-0.00580,-0.00227,-0.00095$, from the left to the right. The spherical density profiles (dashed lines) were
computed at saturation $\Delta \mu = 0$ with radii $R=
264.808,680.057,1616.089$, such that $2
\gamma_{lg,\infty}/R$ in the spherical case
equals to $\Delta n|\Delta \mu|$ in the planar case as in Eq.~(\ref{eq:SphericalAsymLargeR}). As
$\Delta \mu \rightarrow 0^{+}$, the planar/spherical profiles are practically indistinguishable.
\label{fig:ComparePlanarSphericalDensityProfiles}
 }
\end{figure}

\begin{figure}[hbt]
\centering
\includegraphics[width=10cm]{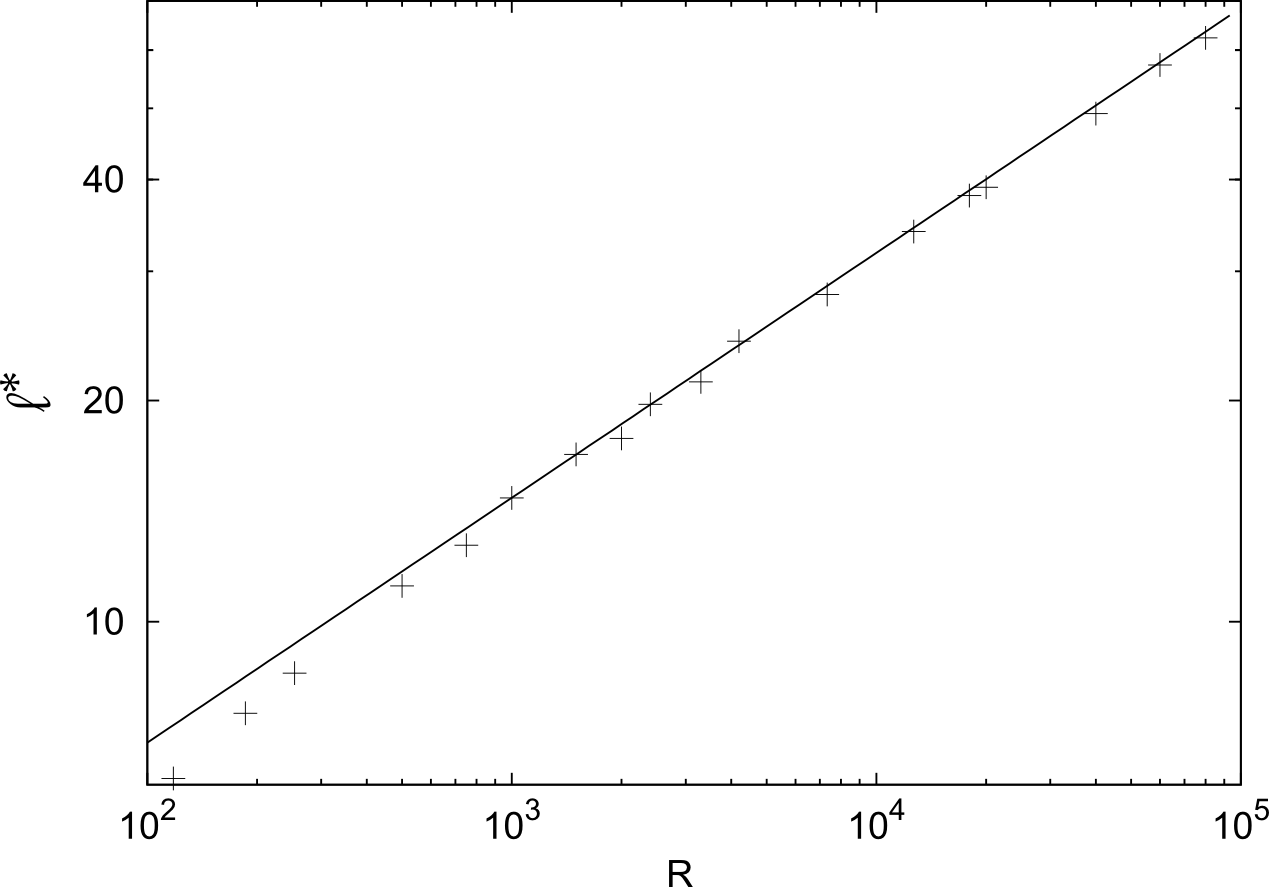}
\caption{Plot of the film thickness at saturation as a function the radius of
a spherical wall for $\varepsilon_w
= 0.8$ and $\sigma_w = 1.25$ at $T
= 0.7$. The solid line is the analytical prediction
(\ref{eq:SphericalAsymLargeR}) for $\Delta \mu = 0$.
\label{fig:FilmThicknessAtSaturation} }
\end{figure}
\begin{figure}[htbp]
\centering
\includegraphics{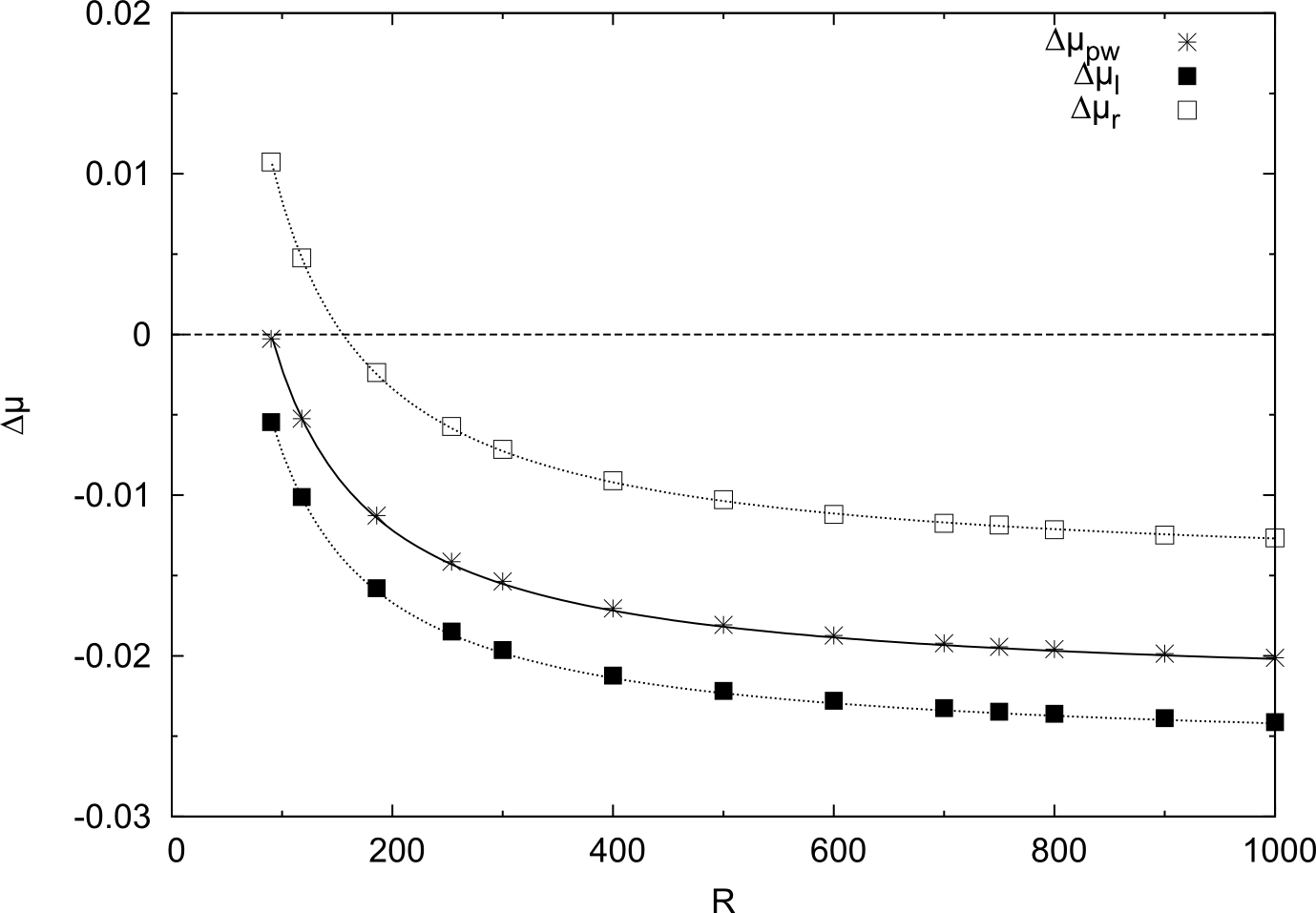}
\caption{Deviations of the chemical potential from the saturation one at prewetting, $\Delta \mu_{pw}$ and at the left and right saddle nodes as a function of the radius of the spherical substrate. The wall parameters are $\sigma_w = 1.25$ and $\varepsilon_w = 0.8$. In the limit of zero curvature, the deviations of the chemical potential converge to the planar values $\Delta\mu_{pw,R=\infty} = -0.0222$, $\Delta\mu_{l,R=\infty} = -0.0261$ and $\Delta\mu_{r,R=\infty} = -0.0150$. The solid and dotted lines are fits to the equation $\Delta\mu_{\{pw,l,r\}}(R) = \Delta\mu_{\{pw,l,r\},R=\infty} + \frac{C_{\{pw,l,r\}}}{R}$. The resulting coefficients are $C_{pw}=2.00$, $C_l=1.8765$ and $C_r = 2.33$.}
\label{fig:MuNoddVaryingR}
\end{figure}

\subsubsection{Asymptotic Behavior for Large Film Thickness \label{sec:SphericalAsymptoticFT}}

Assume now that the radius of the wall $R$ is fixed, whereas the
film thickness $\ell$ is increased. It can be shown that for large
$\ell$, the external potential $V_{sph,R}(R+l+\xi_V)$ as well as the
contribution of the binding potential
$\Psi_{in,R+\delta^\ast}\klamm{R + l + \xi_I }$ have an
$\ell^{-3}$-leading-order behavior. 
For $\tilde r = \ell + \xi_V$, we get
\begin{align*}
V_{sph,R}(R + \tilde r) &= 
\varepsilon_w \sigma_w^3 \pi 
\frac{\sigma_w}{3(\tilde r + R)}\klamm{
\frac{\sigma_w^8}{30}\klamm{
\frac{\tilde r+ 10R}{(\tilde r+ 2 R)^9} - \frac{\tilde r-8R}{\tilde r^9}
}+ 
\sigma_w^2\klamm{
\frac{\tilde r-2R}{\tilde r^3} - \frac{\tilde r+4R}{(\tilde r+2 R)^3}
}} \\
&= O\klamm{\tilde r^{-3}}\\
\Rightarrow V_{sph,R}(R + \ell + \xi_V) =&  O\klamm{ \ell^{-3}} .
\end{align*}
We resume that in (\ref{eq:WettingSph_ExactEq}), in the limit $\ell \to \infty$, the deviation of the chemical
potential $\Delta \mu$ has to balance the second term like,
\begin{align}
\Delta \mu = \frac{2\gamma_{lg,\infty}}{\Delta n (R+\ell)} +
{O}\klamm{\frac{1}{\ell^2}}, \label{eq:SphericalPrediction_FT}
\end{align}
where we replaced the spherical surface tension by the planar one
from Eq.~(\ref{eq:SurfaceTensionSpherical_Tolman}). This analytical result is in very
good agreement with numerical results obtained from the full system
as shown in Fig.~\ref{fig:SphericalAsymptoticBehaviour}.
Consequently, the unstable branch of the spherical isotherms
approaches saturation asymptotically as $\sim \Delta \mu^{-1}$.

\begin{figure}
\centering
\includegraphics[width=10cm]{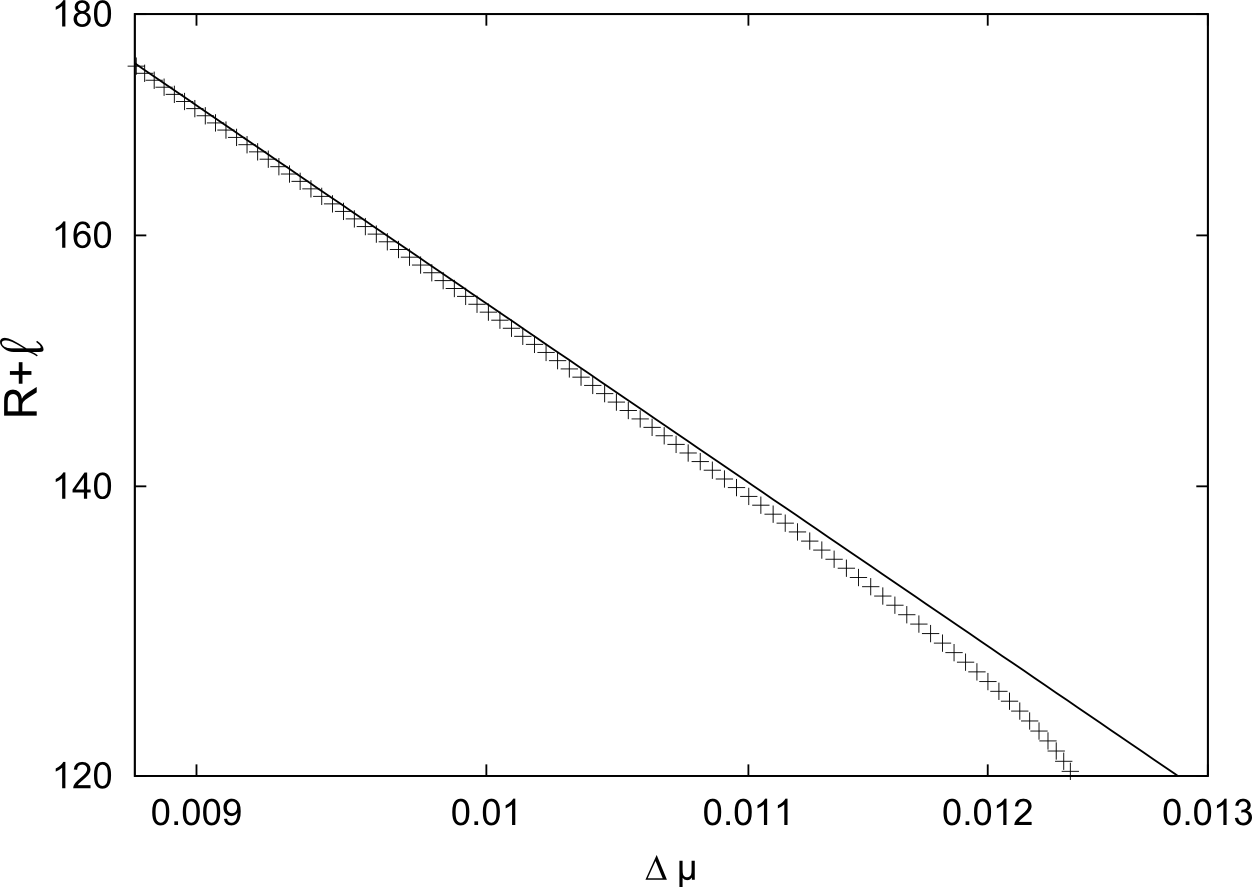}
\caption{Log-log plot of the film thickness $(R+\ell)$ as
a function of the deviation of the chemical potential from saturation for a spherical wall with
radius $R = 100$ and parameters $\varepsilon_w =
0.8$ and $\sigma_w = 1.25$ at temperature
$T = 0.7$. The solid line is the prediction in Eq.~(\ref{eq:SphericalPrediction_FT}) with $\gamma_{lg,\infty}=0.517$
and the crosses are calculations for a planar liquid-gas interface and $\Delta n
= 0.6709$. \label{fig:SphericalAsymptoticBehaviour} }
\end{figure}

Finally, we note that the good agreement between the analytical and
numerical results suggests that two of the basic underlying
assumptions in our analysis need not be improved. These are: (i) the
fact that the influence of the tails of the liquid-gas interface has
not been addressed; (ii) we have assumed that the shape of the
liquid-gas density profile of a droplet does not change with its
radius.

\chapter{Conclusion}

We have analyzed wetting of a simple fluid on one-dimensional
substrates, namely a planar wall and a sphere, which are
characterized by translational and rotational invariances,
respectively. The fluid was modeled through density-functional
theory coupled with a simple mean-field approach. 

We have evaluated the surface tension of a liquid-gas interface for a fixed temperature as a function of the steepness of a $\tanh$-density profile. We conclude that due to big differences between the global minimum and the sharp interface limit, the sharp interface approximation (SIA) is no appropriate method for the prediction of the liquid-gas surface tension. This property leads to inconsistencies if the SIA is used for the prediction of the wetting behavior on curved substrates. 

A method which avoids these inconsistencies is the piecewise-function approximation (PFA), where the density is assumed to be everywhere constant except at the wall-liquid and the liquid-gas interface where it varies smoothly. We have introduced formalisms in order to apply the PFA and the SIA for general geometries. 

Additionally to these analytical techniques, we minimize the grand potential numerically. We have introduced a novel pseudo arc-length continuation procedure to trace the full bifurcation diagram of the adsorption over the chemical potential including unstable branches and saddle nodes as functions of the temperature. 

Our main results can be summarized as follows:
\begin{enumerate}
\item We have examined numerically the jump of the film thickness at
    the prewetting transition as a function of temperature and the attractiveness of the wall. We have also given phase diagrams for the deviation of the chemical potential from saturation at the prewetting transition as a function of the temperature and the wall parameter $\varepsilon_w$. In the vicinity of the transition from a complete wetting to a prewetting scenario, we have shown isotherms of the film thickness over the chemical potential .
\item In the case of a spherical wall, the numerical results show an additional 	first-order wetting transition at saturation. In contrast to the isotherm for a planar wall, the film thickness does not go to infinity as saturation is approached. Instead, we get a maximal film thickness $\ell^\ast$.
\item We have examined analytically the wetting behavior on
    curved substrates with the PFA. Unlike the planar case, the liquid-gas surface tension has an influence on the asymptotic behavior of the isotherm. A number of
    auxiliary parameters have been introduced as representatives
    for the effect of the exact shape of the liquid-gas and
    wall-liquid density profiles. This has allowed us to perform
    two separate expansions in the film thickness and in the
    radius of the wall. The SIA, quite
    popular for planar substrates, is shown to lead to
    inaccurate predictions for wetting on curved substrates. On
    the other hand, our PFA offers
    a relatively simple and self-consistent way to examine
    wetting on curved substrates.
\item We have shown analytically that the maximal film
    thickness $\ell^\ast$ has a
    leading-order behavior $\sim R^{1/3}$, $R \gg 1$, where $R$
    is the radius of the spherical wall. As a result, we can
    obtain the dependence of a (microscopic)
    critical film thickness on the radius of a (mesoscopic)
    wall.
\item We have shown analytically that for $R \gg \ell \gg 1$,
    $\Delta n \Delta \mu$ equilibrates the Laplace pressure
    $2\gamma_{lg,\infty}/R$. This is in agreement with numerical results, where we have compared the density
    profiles of a thin film on a planar substrate at
    $\Delta \mu < 0$ with the density profiles on a spherical
    substrate at saturation such that $\Delta n|\Delta \mu|  =
    2\gamma_{lg,\infty}/R$, for which we obtained a very good
    agreement. We have also shown that the shift of the chemical potential at the prewetting transition, $\Delta \mu_{pw}$ as a function of the radius of the substrate, approaches its planar limit as $~1/R$.
\item We have shown the appearance of a second unstable branch
    of the isotherm in the spherical case. This branch
    approaches saturation asymptotically from the right with
    $\sim \Delta \mu^{-1}$ as $\Delta \mu \to 0^+$. Again,
    comparison of the analytical with the numerical results gave
    a very good agreement.
\end{enumerate}

We believe that the model presented here allows for a qualitative
description of the microscopic behavior of thin films on solid
substrates. Nevertheless, a number of improvements can be made: (i)
by using a fundamental measure theory for the reference part of the
fluid which, in general, would lead to more pronounced oscillatory
effects close to the wall; (ii) use more accurate models for the
hard sphere diameter $d$; (iii) refine the attractive part of the
model which would lead to a more accurate prediction of the
homogeneous limit. The model used in this work allows drawing
qualitative conclusions about the wetting behavior on substrates.

\chapter{Acknowledgments}

I owe my deepest gratitude to my supervisor Prof. Serafim Kalliadasis, whose experienced guidance and support from the initial to the final level enabled me to develop a deep understanding of the subject. His encouragement and sound advice pushed me to think beyond the borders of what I was used to. 

My very special thanks goes to Prof. Martin Oberlack, who supervised this thesis in my home university and to whom I am deeply indebted for his support and for his guidance towards the principles of scientific thinking and working during the last years of my studies. I would also like to thank my tutor Prof. Klaus Keimel, whose support and advice helped me substantially to organize my studies and take important decisions such as the application for a Diploma thesis abroad.

This thesis would not have been possible without the excellent support and the many discussions with my colleagues and members of the Imperial College London. Particularly, I want to thank Alexandr Malijewski for fruitful discussions considering DFT and wetting phenomena on spherical substrates as well as for critically reading the thesis and making several valuable comments and suggestions.
I am very thankful to Antonio Pereira, who introduced us to the numerical principles of DFT computations during a stay in Nancy.
My visit to Nancy was funded by the \emph{Multiflow} Network". I also want to thank my colleague Peter Yatsyshin, with whom I visited Antonio Pereira in Nancy, for many discussions about DFT and statistical mechanics. I am grateful to Marc Pradas for many discussions about analytical approaches and to Rajagopal Vellingiri and Vlad Novak for discussions about physical interpretations of the results. I also thank Nikos Savva for his ready help concerning problems with the computations and the software. Finally, I thank my friends Betsy Voigt and Sandro Gorini for helping me during the application process and the correction of this thesis.

My stay in London was funded by the Rotary Clubs Darmstadt,
Darmstadt-Bergstra\ss e and Darmstadt-Kranichstein. I want to thank all members of these Rotary-Clubs and especially Dr. Heiner Diefenbach, who is the president of the Rotary-Club Darmstadt-Kranichstein which assigned the fellowship for students studying abroad in 2009. Throughout my studies, I was supported by the Studienstiftung des Deutschen Volkes. Their readiness to support me financially during my studies helped me a lot to plan and organize projects such as the stay at Imperial College London. In particular, I want to thank the Studienstiftung for their recommendation towards the Dr.-Jürgen-Ulderup foundation. I want to thank this foundation for offering a scholarship for my stay in London. I also want to thank Prof. Hans-Dieter Alber for his supervision during my time as a fellow at the Studienstiftung des Deutschen Volkes. 

Finally, I offer my regards to all those who supported me in any respect during the completion of the project. I enjoyed the work in the {\it Complex Fluid Flows} group at Imperial College London and in the {\it Multiflow} Network and hope to continue this fruitful cooperation in future projects.

\vspace{1cm}

This thesis has been reviewed by Alexandr Malijewsk\'y, whose corrections and suggestions were taken into consideration in the revised version of this thesis. I thank Alexandr for carefully reading and commenting on this thesis. 
\vspace{0.5cm}

I also thank my colleague David Sibley for carefully reading the thesis and for his useful comments, which were also taken into consideration in this revised version.

\bibliographystyle{ThesisStyle}
\bibliography{ref}

\begin{thebibliography}{xxxx}

\bibitem[Barker~1967a]{BarkerHenderson2}
J.A. Barker and D.~Henderson.
\newblock {\em Perturbation Theory and Equation of State for Fluids. II. A
  successful Theory of Liquids}.
\newblock J. of Chem. Phys., vol.~47, no.~11, page 4714, 1967.

\bibitem[Barker~1967b]{BarkerHenderson1}
J.A. Barker and D.~Henderson.
\newblock {\em Perturbation Theory and Equation of State for Fluids: The
  Square-Well Potential}.
\newblock J. of Chem. Phys., vol.~47, no.~8, page 2856, 1967.

\bibitem[Bauer~1999]{Dietrich2D}
C.~Bauer and S.~Dietrich.
\newblock {\em Quantitative Study of laterally inhomogeneous wetting films}.
\newblock Eur.Phys.J. B, vol.~10, page 767, 1999.

\bibitem[Berim~2008a]{RuckensteinInclined}
G.~O. Berim and E.~Ruckenstein.
\newblock {\em Microscopic calculation of the sticking force for nanodrops on
  an inclined surface}.
\newblock J. Chem. Phys., vol.~129, page 114709, 2008.

\bibitem[Berim~2008b]{RuckensteinNanorough}
G.O. Berim and E.~Ruckenstein.
\newblock {\em Nanodrop on a nanorough solid surface: Density functional theory
  considerations}.
\newblock J.Chem.Phys., vol.~129, page 014708, 2008.

\bibitem[Bieker~1998]{DietrichWettingOnCurvedSubstrates}
T.~Bieker and S.~Dietrich.
\newblock {\em Wetting on curved surfaces}.
\newblock Physica A, vol.~252, page~85, 1998.

\bibitem[Bykov~2002]{BykovZheng}
T.V. Bykov and X.C. Zeng.
\newblock {\em Heterogeneous nucleation on mesoscopic wettable particles: A
  hybrid thermodynamic/density functional theory}.
\newblock J. Chem. Phys., vol.~117, no.~4, page 1851, 2002.

\bibitem[Cahn~1958]{CahnHillard_SquareGradient}
J.W. Cahn and J.E. Hilliard.
\newblock {\em Free Energy of a Nonuniform System.I.Interfacial free energy.}
\newblock J. Chem. Phys., vol.~28, no.~2, page 258, 1958.

\bibitem[Caillol~1998]{Caillol}
J.~M. Caillol.
\newblock {\em Critical-point of the Lennard-Jones fluid: A finite-size scaling
  study}.
\newblock J. Chem. Phys., vol.~109, no.~12, page 4885, 1998.

\bibitem[Carnahan~1969]{CarnahanStarling}
F.~Carnahan and K.E.Starling.
\newblock {\em Equation of State for Noninteracting Rigid Spheres}.
\newblock J. Chem. Phys., vol.~51, no.~2, page 635, 1969.

\bibitem[Cotterman~1986]{CottermannPrausnitz}
R.~L. Cotterman, B.~J. Schwarz and J.~M. Prausnitz.
\newblock {\em Molecular Thermodynamics for Fluids at Low and High Densities}.
\newblock AIChE J., vol.~32, no.~11, page 1787, 1986.

\bibitem[Courant~1966]{CourantHilbert}
R.~Courant and D.~Hilbert.
\newblock Methods of mathematical physics, volume~I.
\newblock Interscience Publishers, Inc., New York, 1st \'edition, 1966.

\bibitem[Dietrich~1988]{Dietrich}
S.~Dietrich.
\newblock {\em Wetting Phenomena}.
\newblock In C.~Domb~J.L. Lebowitz, editeur, Phase Transitions and Critical
  Phenomena, chapitre~1, page~2. Academic Press, 1988.

\bibitem[Evans~1979]{Evans}
R.~Evans.
\newblock {\em The nature of the liquid-vapour interface and other topics in
  the statistical mechanics of non-uniform, classical fluids}.
\newblock Advances in Physics, vol.~28, no.~2, page 143, 1979.

\bibitem[Frischknecht~2002]{Frischknecht}
A.~L. Frischknecht, J.~D. Weinhold, A.~G. Salinger, J.~G. Curro and L.~J.~D.
  Frink.
\newblock {\em Density Functional Theory for inhomogeneous polymer systems. I.
  Numerical Methods}.
\newblock J. Chem. Phys., vol.~117, no.~22, page 10385, 2002.

\bibitem[Gibbs~1902]{Gibbs}
J.~W. Gibbs.
\newblock Elementary principles in statistical mechanics.
\newblock Charles Scribner's Sons, 1902.

\bibitem[Guggenheim~1945]{Guggenheim}
E.~A. Guggenheim.
\newblock {\em The Principle of Corresponding States}.
\newblock J. Chem. Phys., vol.~13, no.~7, page 253, 1945.

\bibitem[Hansen~1986]{Hansen}
J.-P. Hansen and I.R. McDonald.
\newblock Theory of simple liquids.
\newblock Academic Press, 2nd \'edition, 1986.

\bibitem[Hauge~1983]{HaugeSchick}
E.~H. Hauge and M.~Schick.
\newblock {\em Continuous and first-order wetting transition from the van der
  Waals theory of fluids}.
\newblock Phys. Rev. B, vol.~27, no.~7, page 4288, 1983.

\bibitem[Israelachvili~1991]{Israelachvili}
J.N. Israelachvili.
\newblock Intermolecular \& surface forces.
\newblock Academic Press, 2nd \'edition, 1991.

\bibitem[Johannessen~2008]{NonequilibriumJohannsen}
E.~Johannessen, J.~Gross and D.~Bedeaux.
\newblock {\em Nonequilibrium thermodynamics of interfaces using classical
  density functional theory}.
\newblock The Journal of Chemical Physics, vol.~129, no.~184703, 2008.

\bibitem[Landau~1968]{Landau}
L.~D. Landau and E.~M. Lifschitz.
\newblock Statistical physics.
\newblock Pergamon Press, 2nd revised \'edition, 1968.

\bibitem[Lebowitz~1964]{Lebowitz}
J.~L. Lebowitz.
\newblock {\em Exact Solution of Generalized Percus-Yevick Equation for a
  Mixture of Hard Spheres}.
\newblock Phys. Rev., vol.~133, no.~4A, page A895, 1964.

\bibitem[Lee~1974]{LeeBarker}
J.~K. Lee and L.~A. Barker.
\newblock {\em Surface Structure and Surface Tension: Perturbation Theory and
  Monte Carlo Calculation}.
\newblock J. Chem. Phys., vol.~60, no.~5, page 1976, 1974.

\bibitem[Michels~1958]{Michels}
A.~Michels, J.~M. Levelt and W.~De Graeff.
\newblock {\em Compressibility isotherms of argon at temperatures between -25°C
  and -155°C, and at densities up to 640 amagat (pressures up to 1050
  atmospheres)}.
\newblock Physica, vol.~24, page 659, 1958.

\bibitem[Napi{\'o}rkowski~1986]{DietrichNapiorkowski_BulkCorrelation}
N.~Napi{\'o}rkowski and S.~Dietrich.
\newblock {\em Significance of the bulk correlation length for wetting
  transitions}.
\newblock Phys. Rev. B, vol.~34, no.~9, page 6469, 1986.

\bibitem[Parr~1989]{ParrYang}
R.~G. Parr and W.~Yang.
\newblock Density-functional theory of atoms and molecules.
\newblock Oxford Science Publications, 1989.

\bibitem[{Pereira}~2010]{Antonio2010}
A.~{Pereira} and S.~{Kalliadasis}.
\newblock {\em {Equilibrium gas-liquid-solid contact angle from
  density-functional theory}}.
\newblock ArXiv e-prints, March 2010.

\bibitem[Plischke~2005]{Plischke}
M.~Plischke and B.~Bergersen.
\newblock Equilibrium statistical physics.
\newblock World Scientific, 3rd \'edition, 2005.

\bibitem[Ree~1963]{ReeHoover}
F.~H. Ree and W.~G. Hoover.
\newblock {\em Fifth and Sixth Virial Coefficients for Hard Spheres and Hard
  Disks}.
\newblock J. Chem. Phys., vol.~40, no.~4, page 939, 1963.

\bibitem[Rosenfeld~1989]{Rosenfeld}
Y.~Rosenfeld.
\newblock {\em Free-Energy Model for the Inhomogeneous Hard-Sphere Fluid
  Mixture and Density-Functional Theory of Freezing}.
\newblock Phys. Rev. Lett., vol.~63, no.~9, page 980, 1989.

\bibitem[Rosenfeld~1990]{Rosenfeld1990}
Y.~Rosenfeld.
\newblock {\em Free-energy model for the inhomogeneous hard-sphere fluid in D dimensions: Structure factors for the hard-disk (D= 2) mixtures in simple explicit form}. 
\newblock Phys. Rev. A, vol.~42, no.~10, page 5978, 1990.

\bibitem[Rosenfeld~1994]{Rosenfeld1994}
Y.~Rosenfeld.
\newblock {\em Density functional theory of molecular fluids: free-energy model for the inhomogeneous hard-body fluid}.
\newblock Phys. Rev. E, vol.~50, no.~5, page R3318, 1994.

\bibitem[Rosenfeld~1998]{Rosenfeld1998}
Y.~Rosenfeld, M.~Schmidt, H.~Löwen and P.~Tarazona 
\newblock {\em Fundamental-measure free-energy density functional for hard spheres: dimensional crossover and freezing}.
\newblock Phys. Rev. E, vol.~55, no.~4, page 4245, 1997.

\bibitem[Sampayo~2010]{Sampayo}
J.G. Sampayo, A.~Malijevsk{\`y}, E.A. M{\"u}ller, E.~de~Miguel and G.~Jackson.
\newblock {\em Communications: Evidence for the role of fluctuations in the
  thermodynamics of nanoscale drops and the implications in computations of the
  surface tension}.
\newblock J. Chem. Phys., vol.~132, page 141101, 2010.

\bibitem[Schr{\"{o}}dinger~1948]{Schroedinger}
E.~Schr{\"{o}}dinger.
\newblock Statistical thermodynamics.
\newblock Cambridge University Press, 1948.

\bibitem[Stewart~2005]{StewartEvans}
M.C. Stewart and R.~Evans.
\newblock {\em Wetting and drying at a curved substrate: Long-ranged forces}.
\newblock Physical Review E, vol.~71, page 11602, 2005.

\bibitem[Stowe~2007]{Stowe}
K.~Stowe.
\newblock An introduction to thermodynamics and statistical mechanics.
\newblock Cambridge University Press, 2nd \'edition, 2007.

\bibitem[Tang~2002]{Tang_BarkerHendersonDiameter}
Y.~Tang.
\newblock {\em Role of the Barker-Henderson diameter in thermodynamics}.
\newblock J. Chem. Phys., vol.~116, no.~15, page 6694, 2002.

\bibitem[Tang~2003]{TangWu}
Y.~Tang and J.~Wu.
\newblock {\em A density-functional theory for bulk and inhomogeneous
  Lennard-Jones fluids from the energy route}.
\newblock J. Chem. Phys., vol.~119, no.~14, page 7388, 2003.

\bibitem[Tarazona~1984]{TarazonaEvans}
P.~Tarazona and R.~Evans.
\newblock {\em A simple density functional theory for inhomogeneous liquids
  wetting by gas at a solid-liquid interface}.
\newblock Molecular Physics, vol.~52, no.~4, page 847, 1984.

\bibitem[Throop~1965]{ThroopBearman}
G.~J. Throop and R.~J. Bearman.
\newblock {\em Numerical Solutions of the Percus-Yevick Equation for the
  Hard-Sphere Potential}.
\newblock J. Chem. Phys., vol.~42, no.~7, page 2408, 1965.

\bibitem[Tolman~1948]{Tolman}
R.~C. Tolman.
\newblock {\em The Effect of Droplet Size on Surface Tension}.
\newblock J.Chem.Phys., vol.~17, no.~3, page 333, 1948.

\bibitem[Toxvaerd~1971]{Toxvaerd}
S.~Toxvaerd.
\newblock {\em Perturbation Theory for Nonuniform Fluids: Surface Tension}.
\newblock J. Chem. Phys., vol.~55, no.~7, page 3116, 1971.

\bibitem[Trokhymchuk~1999]{Trokhymchuk}
A.~Trokhymchuk and J.~Alejandre.
\newblock {\em Computer simulations of liquid/vapor interface in Lennard-Jones
  fluids: Some questions and answers}.
\newblock J. Chem. Phys., vol.~111, no.~18, page 8510, 1999.

\bibitem[Weeks~1971]{Weeks}
J.D. Weeks, D.~Chandler and H.C. Andersen.
\newblock {\em Role of Repulsive Forces in Determining the Equilibrium
  Structure of Simple Liquids}.
\newblock J. Chem. Phys., vol.~54, no.~12, page 5237, 1971.

\bibitem[Wertheim~1963]{Wertheim}
M.~S. Wertheim.
\newblock {\em Exact Solution of the Percus-Yevick Integral Equation for Hard
  Spheres}.
\newblock Physical Review Letters, vol.~10, no.~8, page 321, 1963.

\bibitem[Wu~2006]{Wu-DFT}
J.~Wu.
\newblock {\em Density Functional Theory for Chemical Engineering: From
  Capillarity to Soft Materials}.
\newblock AIChE Journal, vol.~52, no.~3, page 1169, 2006.

\bibitem[Zia~2009]{Zia_LegendreTransform}
R.~K.~P. Zia, E.~F. Redish and S.~R. McKay.
\newblock {\em Making Sense of the Legendre transform}.
\newblock A,. J. Phys., vol.~77, no.~7, page 614, 2009.

\bibitem[Zwanzig~1954]{Zwanzig}
R.~W. Zwanzig.
\newblock {\em High-Temperature Equation of State by a Perturbation Method. I
  Nonpolar Gases}.
\newblock J. Chem. Phys., vol.~22, no.~8, page 1420, 1954.

\end{thebibliography}

\chapter{Appendix}
\vspace{3cm}
\begin{table}[ht]
\begin{center}
\begin{tabular}[ht]{llll}
\toprule
$T$ & $\mu_{sat}$ & $n_{g}$ & $n_{l}$\\\midrule 
$0.25$ & $-4.718520757$ & $1.213644469\cdot 10^{-8}$ & $1.127211727$\\
$0.3$ & $-4.434056675$ & $7.278758914\cdot 10^{-7}$ & $1.069152485$ \\
$0.4$ & $-4.011264150$ & $8.446901093\cdot 10^{-5} $ & $0.9649884199$\\
$0.5$ & $-3.728205634$ & $1.126155592 \cdot 10^{-3}$ & $0.8700695119$\\
$0.55$ & $-3.626461863$ & $2.732555380 \cdot 10^{-3}$ & $0.8243739417$\\
$0.6$ & $-3.546705118$ & $5.609208593 \cdot 10^{-3}$ & $0.7791390130$\\
$0.65$ & $-3.486303570$ & $1.021152841 \cdot 10^{-2}$ & $0.7338582420$\\
$0.7$ & $-3.443050923$ & $1.703790008\cdot 10^{-2}$ & $0.6879908811$ \\
$0.75$ & $-3.415074850$ & $2.668164774\cdot 10^{-2}$ & $0.6408950758$ \\
$0.8$ & $-3.400773052$ & $3.994135433\cdot 10^{-2}$ & $0.5917145822$ \\
$0.85$ & $-3.398765436$ & $5.805717753\cdot 10^{-2}$ & $0.5391442686$ \\
$0.9$ & $-3.407856740$ & $8.332134885\cdot 10^{-2}$ & $0.4808226348$\\ 
$0.95$ & $-3.427006546$ & $0.1213047530$ & $0.4111080056$\\ 
$1.0$  & $-3.455304914$ & $0.2028151078$ & $0.2991224633$\\
$1.003$ & $-3.457275772$ & $0.2155452967$ & $0.2846022041$\\
$1.006172833$ & $-3.459392667$ &  $0.2491294675$ & $0.2491294675$\\\bottomrule
\end{tabular}
\caption{Values of the bulk gas and bulk liquid densities and the chemical potential at saturation for different temperatures. At saturation, the bulk liquid and gas phases are equally stable. All values are in dimensionless form (see Sec.~\ref{sec:StatMech_OurModel})}
\label{tab:UniformCoexistence}
\end{center}
\end{table}

\begin{figure}[ht]
\centering
\includegraphics[height=9cm]{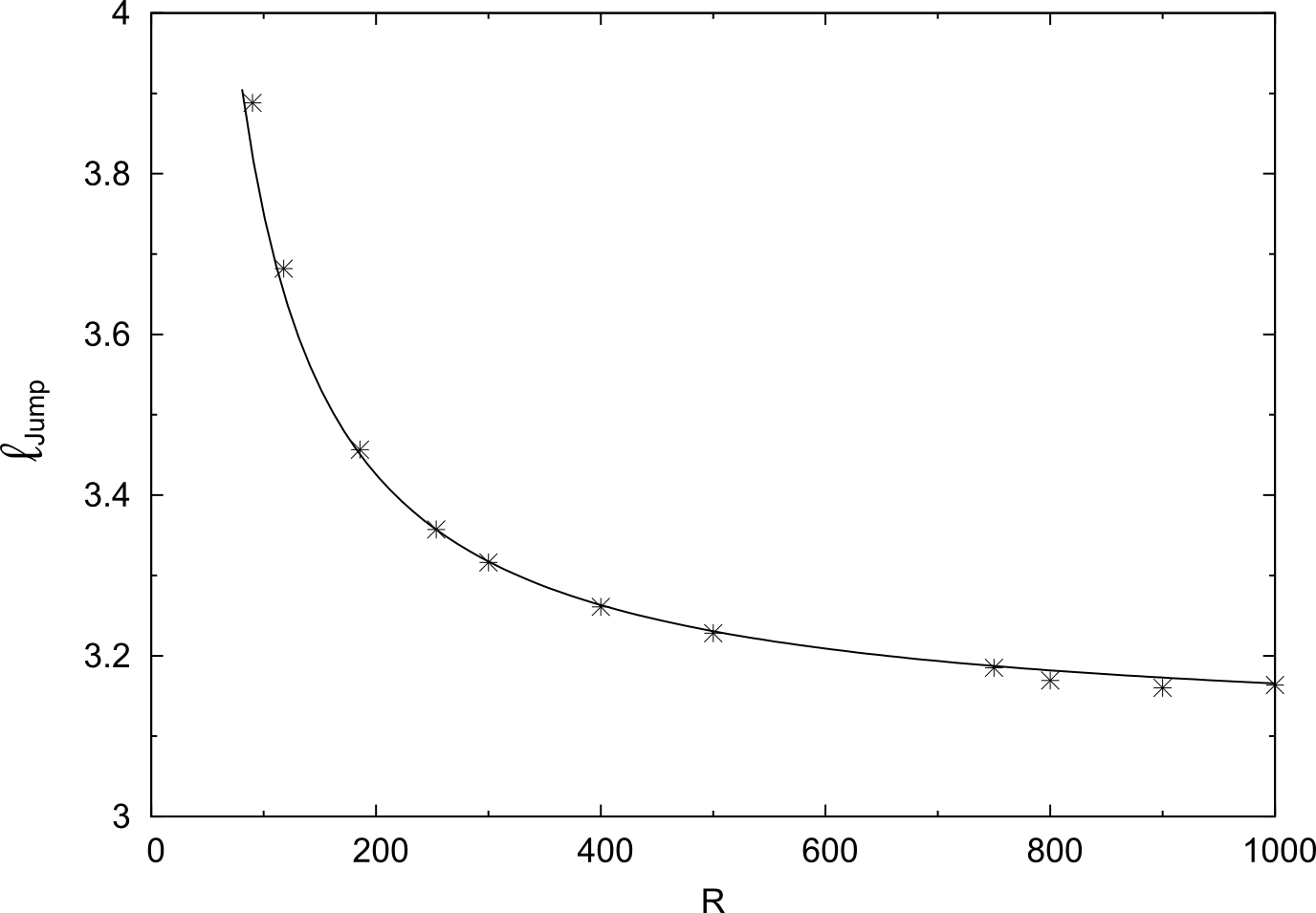}
\caption{Jump of the film thickness at the prewetting transition vs. the radius of the spherical substrate. In the limit of zero curvature, the jump is $\ell_{jump,\infty} = 3.10067$. The solid line is a fit to the equation $\ell_{jump}(R) = \ell_{jump,\infty} + \frac{C}{R}$, where the resulting coefficient is $C =65.0$.}
\label{fig:JumpVaryingR}
\end{figure}

\begin{figure}[ht]
\begin{center}
\subfigure[$T=0.40$]{
\includegraphics[width=4.5cm]{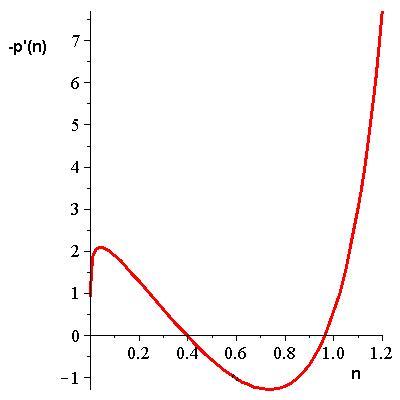}
}
\subfigure[$T=0.40$]{
\includegraphics[width=4.5cm]{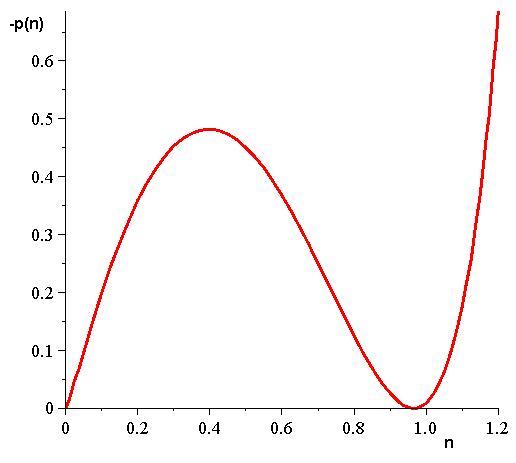}
}\\
\subfigure[$T=0.70$]{
\includegraphics[width=4.5cm]{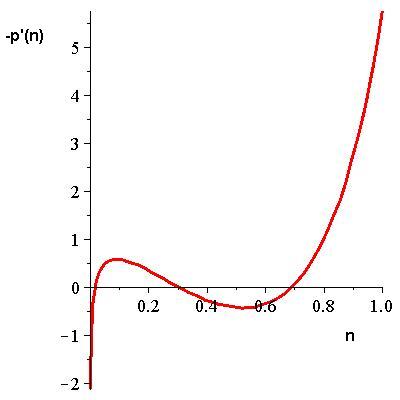}
}
\subfigure[$T=0.70$]{
\includegraphics[width=4.5cm]{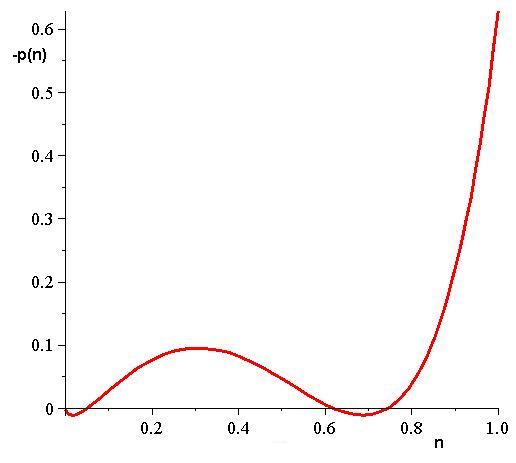}
}\\
\subfigure[$T=0.90$]{
\includegraphics[width=4.5cm]{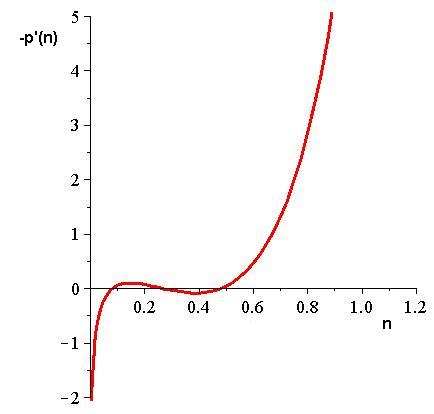}
}
\subfigure[$T=0.90$]{
\includegraphics[width=4.5cm]{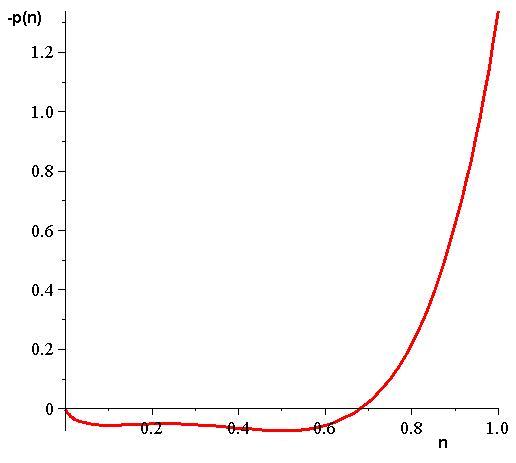}
}\\
\subfigure[$T=1.003$]{
\includegraphics[width=4.5cm]{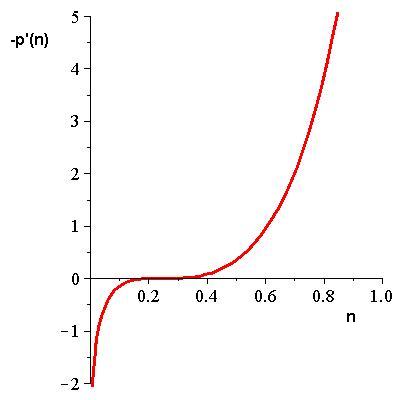}
}
\subfigure[$T=1.003$]{
\includegraphics[width=4.5cm]{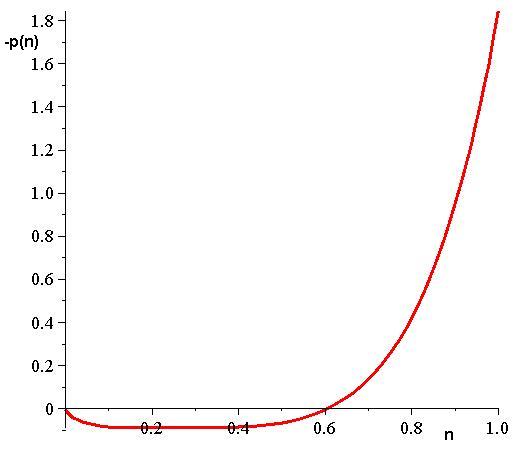}
}
\caption{Plots of the negative pressure (\ref{eq:Uniform_Omega_def}) and its derivative as a function of the uniform density for different temperatures.}
\end{center}
\label{fig:OmegaOverN_AnnexDiagrams}
\end{figure}

\section{Barker and Henderson Approach \label{sec:BarkerHendersonApproach}}

This expansion was first presented by Barker and Henderson in 1967~\cite{BarkerHenderson1} who developed an expression for the Helmholtz free energy $F$ of a homogeneous canonical system. In a canonical system, the free energy is given by
\begin{align}
F = - \beta^{-1} \ln Z_C, \label{eq:StatMech_BarkerHenderson_FreeEnergyZC}
\end{align}
where the partition function $Z_C$ is defined in (\ref{eq:StatMech_Can_PartitionFunction}). This expression is directly linked with the Hamiltonian of the system, which is the sum of the kinetic energy $E_k$, the particle interaction energy $U$ and the external energy $V_{ext}$. Neglecting the external potential, the canonical partition function can be rewritten as follows:
\begin{align}
Z_C &= \frac{1}{h^{3N}N!} 
\iint e^{-\beta U_{attr}} e^{-\beta\klamm{E_k + U_{HS}}} d{\bf p}^N d{\bf r}^N, \label{eq:StatMech_BarkerHenderson_ZC}
\end{align}
where the perturbation $U_{attr}$ can be written as a sum of pair potentials 
\begin{align}
U_{attr}\klamm{{\bf r}^N} = \frac{1}{2}\sum_{i\neq j} \phi_{attr}\klamm{|{\bf r}_i - {\bf r}_j|}.  \label{eq:StatMech_DefUAttr}
\end{align}
Now, Barker and Henderson assumed that $N_i$ is the number of pairs of particles which have a distance $r \in [R_i,R_{i+1})$, where $(R_i)_i$ is a sequence of increasing positive real numbers. In this case, the perturbation can be written in discrete form as 
\begin{align}
U_{attr}\klamm{{\bf r}^N} = \sum_i N_i \phi_{attr}^i,
\end{align}
where $\phi_{attr}^i$ is an approximation for the attractive interaction energy of particles with distance $r \in [R_i,R_{i+1})$. We say that for a reference hard-sphere fluid, the number of pairs of particles at distance $[R_i,R_{i+1})$ is $\avar{N_i}_{HS}$. This leads to the following split of  $N_i$ into the average with respect to the reference fluid plus a respective deviation:
\begin{align}
U_{attr} = \sum_{i} \avar{N_i}_{HS} \phi_{attr}^i + \sum_{i} \klamm{ N_i - \avar{N_i}_{HS}} \phi_{attr}^i. \label{eq:StatMech_BarkerHenderson_Uattr}
\end{align}
The average $\avar{\cdot}_{HS}$ with respect to the reference system is defined as
\begin{align*}
\avar{\cdot}_{HS} \defi \frac{1}{Z_{C,HS}} \frac{1}{h^{3N}N!} \iint \cdot \; e^{-\beta \klamm{ E_k + U_{HS} } } d{\bf p}^N d{\bf r}^N,
\end{align*}
where $Z_{C,HS}$ is the canonical partition function of the reference system:
\begin{align*}
Z_{C,HS} \defi \frac{1}{h^{3N}N!} \iint e^{-\beta \klamm{ E_k + U_{HS} } } d{\bf p}^N d{\bf r}^N.
\end{align*}

The result obtained for (\ref{eq:StatMech_BarkerHenderson_Uattr}) is inserted in
(\ref{eq:StatMech_BarkerHenderson_ZC}):
\begin{align*}
Z_C = Z_{C,HS} \exp\klamm{ -\beta \sum_{i} \avar{N_i}_{HS} \phi_{attr}^i} \avar{ \exp\klamm{-\beta \sum_{i} \klamm{ N_i - \avar{ N_i}_{HS}} \phi_{attr}^i} }_{HS}.
\end{align*}
This equation is inserted in (\ref{eq:StatMech_BarkerHenderson_FreeEnergyZC}), which establishes a link between the partition function $Z_C$ and the Helmholtz free energy. The first term on the right hand side of the equation above results in the hard-sphere free energy $F_{HS} = - \beta^{-1}\ln Z_{C,HS}$ of the reference system. Hence, we get:
\begin{align*}
F - F_{HS} = \sum_{i} \avar{N_i}_{HS} \phi_{attr}^i  - \beta^{-1} 
\ln 
 \avar{ \exp\klamm{-\beta \sum_{i} \klamm{ N_i - \left\langle N_i \right\rangle} \phi_{attr}^i} }_{HS}
\end{align*}
Remark that in the last term, the argument of the exponential function is zero if $N_i$ equals $\avar{N_i}_{HS}$, i.e. if the fluid corresponds to the reference fluid. Hence, we expand the exponential function around zero and the logarithm around one. This yields
\begin{align}
F - F_{HS} = \sum_{i} \avar{N_i}_{HS} \phi_{attr}^i  - \frac{\beta}{2} \sum_{i,j} \klamm{ \avar{ N_i N_j}_{HS} - \avar{ N_i}_{HS} \avar{ N_j}_{HS} } \phi_{attr}^i\phi_{attr}^j + O\klamm{\beta^2}. \label{eq:Pert_FreeEnergy1}
\end{align}

We now have to find expressions for the average of $N_i$ as well as for the covariance of $N_i$ and $N_j$, given by $\avar{ N_i N_j}_{HS}-\avar{ N_i}_{HS}\avar{N_j}_{HS}$. 
$\avar{N_i}_{HS}$ is the average number of pairs of particles in a homogeneous hard-sphere fluid which have the distance $|{\bf r}_i - {\bf r}_j| \in [R_i,R_{i+1})$. This can be written in terms of the two particle distribution $n^{(2)}_{HS}\klamm{{\bf r_1},{\bf r_2}}$, i.e. in terms of the average probability that there will be two particles at the positions ${\bf r}_1$ and ${\bf r}_2$ simultaneously~\cite{Plischke}.
\begin{align}
n^{(2)}_{HS}\klamm{{\bf r_1},{\bf r_2}} \defi \avar{ \delta\klamm{{\bf r_1}-{\bf r}}  \delta\klamm{{\bf r_2}-{\bf r}}
}_{HS}. \label{eq:HStwoparticleDistr}
\end{align}
This yields
\begin{align}
\avar{N_i}_{HS} &= \frac{1}{2} \iint_{R_i < |{\bf r}-{\bf r}'| < R_{i+1} }  n^{(2)}_{HS}\klamm{{\bf r},{\bf r}'} d{\bf r}'d{\bf r}. \label{eq:StatMech_DefNi}
\end{align}
We can rewrite the first term in (\ref{eq:Pert_FreeEnergy1}) in continuum description and get
\begin{align}
\sum_{i} \avar{ N_i}_{HS} \phi_{attr}^i &=
\frac{1}{2} \sum_i \iint_{R_i < |{\bf r}-{\bf r}'| < R_{i+1} }  n^{(2)}_{HS}\klamm{{\bf r},{\bf r}'}  \phi_{attr}^i  d{\bf r}'d{\bf r} \notag\\
&\approx \frac{1}{2} \iint n^{(2)}_{HS}\klamm{{\bf r},{\bf r}'} \phi_{attr} \klamm{|{\bf r}-{\bf r}'|} d{\bf r}' d{\bf r}. \label{eq:StatMech_BH_1stTerm}
\end{align}

Now, an approximation for the second term of expansion (\ref{eq:Pert_FreeEnergy1}) has to be found. In a homogeneous system, $N_i$ represents the number of particles in a spherical shell surrounding one molecule.  For large macroscopic shells, Barker and Henderson assumed that the number of molecules in different shells is uncorrelated. This means that we can set $\left\langle N_i N_j \right\rangle -\left\langle N_i \right\rangle\left\langle N_j \right\rangle = 0$ for $i \neq j$. 
For a known average density $n$, the fluctuation of the number of particles can be written as 
\begin{align}
\avar{N_i^2} - \avar{N_i}^2
 = \beta^{-1} \avar{N_i}\diff{n}{p}, \label{eq:Pertur_GlobalCompress}
\end{align}
where $p$ is the pressure of the system. For more details, see also Plischke and Bergersen~\cite[p.42]{Plischke}. Consequently, the second term on the right hand side of (\ref{eq:Pert_FreeEnergy1}) can be written as:
\begin{align}
- \frac{\beta}{2}\sum_{i,j} \klamm{ \avar{ N_i N_j}_{HS} - \avar{ N_i}_{HS} \avar{ N_j}_{HS} } \phi_{attr}^i\phi_{attr}^j
&=
- \frac{1}{2} \sum_i \avar{N_i}_{HS} \klamm{\phi_{attr}^i}^2 \diff{n}{p} \notag \\
&
{\overset{(\ref{eq:StatMech_DefNi})}\approx}
- \frac{1}{4} \iint n_{HS}^{(2)}\klamm{{\bf r},{\bf r}'} \phi_{attr}^2\klamm{|{\bf r} -{\bf r}' |} \diff{n}{p} d{\bf r}' d{\bf r}
\label{eq:StatMech_BH_SecondTerm}
\end{align}

Inserting (\ref{eq:StatMech_BH_SecondTerm}) and (\ref{eq:StatMech_BH_1stTerm}) into (\ref{eq:Pert_FreeEnergy1}) gives the following equation for the Helmholtz free energy
\begin{align*}
F-F_{HS} =  \frac{1}{2} \iint n_{HS}^{(2)}\klamm{{\bf r},{\bf r}'} \klamm{
\phi_{attr}\klamm{|{\bf r} -{\bf r}'|}
-
\frac{1}{2}\phi_{attr}^2\klamm{|{\bf r} -{\bf r}' |}  \diff{n}{p}
  } d{\bf r}' d{\bf r}
+ O\klamm{\beta^2}
\end{align*}

In the homogeneous case, the two particle distribution $n_{HS}^{(2)}$ can be written in terms of the pair distribution function $g_{HS}(r)$, which is often also referred to as radial distribution function. It is defined by
\begin{align}
n_{HS}^{(2)}\klamm{r} = n^2 g_{HS}(r). \label{eq:StatMech_PairDistributionFunction}
\end{align}
We then get
\begin{align*}
F-F_{HS} =  \frac{n^2}{2} \iint g_{HS}\klamm{|{\bf r}-{\bf r}'|} \klamm{
\phi_{attr}\klamm{|{\bf r} -{\bf r}'|}
-
\frac{1}{2}\phi_{attr}^2\klamm{|{\bf r} -{\bf r}' |}  \diff{n}{p}
  } d{\bf r}' d{\bf r}
+ O\klamm{\beta^2}
\end{align*}

This approximation is based on relation (\ref{eq:Pertur_GlobalCompress}), which takes into account the pressure-derivative of the global density. Including the local density at a certain distance from a given molecule into the pressure derivative leads to an expansion using a "local compressibility" term $\diff{\klamm{n\cdot g_{HS}}}{p}$:
\begin{align*}
F-F_{HS} =  \frac{n^2}{2} \iint
 g_{HS}\klamm{|{\bf r}-{\bf r}'|} \phi_{attr}\klamm{|{\bf r} -{\bf r}'|}
-
\frac{1}{2}\phi_{attr}^2\klamm{|{\bf r} -{\bf r}' |}  \diff{\klamm{n \cdot g_{HS}\klamm{|{\bf r}-{\bf r}'|} }}{p}
 d{\bf r}' d{\bf r}
+ O\klamm{\beta^2}
\end{align*}

Johannessen, Gross and Bedeaux~\cite{NonequilibriumJohannsen}, extend this approach to inhomogeneous systems by evaluating the pair distribution function at the average density $\bar n_{{\bf r},{\bf r}'}\defi\frac{1}{2}\klamm{n\ofR + n\ofRD}$:
\begin{align*}
F[n\ofR]-F_{HS}[n\ofR] = &\frac{1}{2} \iint n\ofR n\ofRD g_{HS}\klamm{{\bf r},{\bf r}',\bar n_{{\bf r},{\bf r}'}} \phi_{attr}\klamm{|{\bf r}-{\bf r}'|}
d{\bf r}' d{\bf r}
\\
&- \frac{1}{4}  \iint n\ofR n\ofRD \diff{\klamm{\bar n_{{\bf r},{\bf r}'} \cdot g_{HS}\klamm{{\bf r},{\bf r}',\bar n_{{\bf r},{\bf r}'}}}}{p}  \phi_{attr}^2\klamm{|{\bf r}-{\bf r}'|} d{\bf r}' d{\bf r}
+ O\klamm{\beta^2}
\end{align*}

Remark that in this expansion, the second term on the right hand side does not involve many-body correlation functions.

\end{document}